\tikzset{->-/.style={decoration={
  markings,
  mark=at position #1 with {\arrow{>}}},postaction={decorate}}}
\tikzset{->-/.default=0.5}
\newcommand\wh{\widehat}
\newcommand{\CC}{{\mathbb C}}
\newcommand{\NN}{{\mathbb N}}
\newcommand{\RR}{{\mathbb R}}
\newcommand{\TT}{{\mathbb T}}
\newcommand{\ZZ}{{\mathbb Z}}
\newcommand{\cF}{{\mathcal F}}
\newcommand{\cK}{{\mathcal K}}
\newcommand{\cH}{{\mathcal H}}
\newcommand{\cL}{{\mathcal L}}
\newcommand{\cS}{{\mathcal S}}
\newcommand{\cE}{\mathcal{E}}
\newcommand{\dd}{\mathrm{d}}
\newcommand{\pg}{\mathsf{pg}}
\theoremstyle{nonumberplain}  
\newtheorem{proof}{Proof} 
\theoremstyle{plain}  
\newtheorem{proposition}{Proposition}[section]  
\newtheorem{corollary}[proposition]{Corollary}  
\newtheorem{lemma}[proposition]{Lemma}  
\newtheorem{theorem}[proposition]{Theorem}   
\newtheorem{problem}{Problem} 
\newtheorem{remark}[proposition]{Remark}
\newtheorem{example}[proposition]{Example}
\newtheorem{definition}[proposition]{Definition}
\theoremstyle{nonumberplain}
\begin{document}

\title{Good Wannier bases in Hilbert modules associated to topological insulators}

\author[1]{Matthias Ludewig}
%\affil[1]{School of Mathematical Sciences, University of Adelaide, SA 5005, Australia}

\author[2]{Guo Chuan Thiang}
\affil[1,2]{School of Mathematical Sciences, University of Adelaide, SA 5005, Australia}

%\date{today}

\maketitle

\begin{abstract}
For a large class of physically relevant operators on a manifold with discrete group action, we prove general results on the (non-)existence of a basis of well-localised Wannier functions for their spectral subspaces. This turns out to be equivalent to the freeness of a certain Hilbert module over the group $C^*$-algebra canonically associated to the spectral subspace. This brings into play $K$-theoretic methods and justifies their importance as invariants of topological insulators in physics.
\end{abstract}

%\pacs{}
\setcounter{tocdepth}{1}
%\tableofcontents

\section{Introduction} \label{SectionIntroduction}
In solid state physics, one often studies the Schr\"{o}dinger equation on $L^2(\RR^d)$ with a potential which is periodic under the action of a lattice $\ZZ^d$ of translations preserving an underlying crystalline structure (the atomic positions, say). The spectral subspace $L^2_S(\RR^d)$ corresponding to the spectrum of the Hamiltonian operator lying between some spectral gaps is then invariant under the $\ZZ^d$ translations. A (composite) \emph{Wannier basis} for $L^2_S(\RR^d)$ comprises a set of wavefunctions $w_j, j=1,\ldots, n$ and their translates $\gamma^*w_j$ by $\gamma\in\ZZ^d$, such that the set
\begin{equation*}
\gamma^*w_j, \qquad \text{with} \qquad j=1,\dots,n, \quad \gamma\in\ZZ^d
\end{equation*}
 is an orthonormal basis for $L^2_S(\RR^d)$, thus identifying $L^2_S(\RR^d)$ as $n$ copies of the regular representation of $\ZZ^d$ sitting inside $L^2(\RR^d)$. Wannier bases are convenient for expanding the ``effective'' electron states in a spectral subspace of physical interest, and it is often desirable to choose the $w_j$ to be as localised as possible, so that one may reasonably think of the $\gamma^*w_j$ as ``atomic orbitals'' localised at the atomic positions labelled by $\gamma\in\ZZ^d$, see Fig.~\ref{fig:Zgoodwannier} for an illustration.

Wannier basis construction usually proceeds via the Bloch--Floquet decomposition over the character space $\TT^d$ of the translation symmetry group $\ZZ^d$ (reviewed in \S\ref{sec:BlochFloquetreview}). From this vantage point, much effort has been devoted to proving the existence, for arbitrary $L^2_S(\RR^d)$, of ``good'' Wannier bases with, say, exponential decay \cite{Kohn,DesCloi,Nenciu,Brouder,Kuchment,Monaco,Cornean,CMM}. Remarkably, in $d\geq 2$, there is a \emph{topological} obstruction --- the Chern class of the so-called Bloch bundle over $\TT^2$ \cite{Brouder} --- which persists even if we relax the the decay condition significantly (cf.\ Remark \ref{RemarkDecay}). The non-existence of good Wannier bases, or ``atomic limits'' \cite{Bradlyn}, is a paradigmatic feature of so-called \emph{topological insulators}. When a boundary is subsequently introduced, the slow decay rate of bad Wannier bases means that there is no meaningful length scale to decouple the ``bulk'' degrees of freedom from the ``boundary'' ones. This is one way to see why a bulk-boundary correspondence \cite{BCR,EwertMeyer,Kellendonk,Kubota,Ludewig-Thiang,HMT,PSB} should be expected in topological insulators.

\medskip

How much of these insights depend on having an abelian symmetry group and the classical Fourier transform? In this paper, we abstract the salient features of the ``good Wannier basis existence problem'', and demonstrate that the above topology/localization dichotomy \cite{Brouder,Kuchment,Monaco,CMM} holds much more generally, for nonabelian symmetry groups, and also for projective symmetry group representations (whether the group is abelian or not) which occur for quantum Hall Hamiltonians (see \S \ref{sec:twisted.algebras}).

\medskip

\noindent{\bf Basic Setup.} Let $X$ be a complete connected Riemannian manifold with an effective, cocompact, properly discontinuous, isometric action of a countable group $\Gamma$. Let $E$ be an complex $\Gamma$-equivariant vector bundle over $X$ with a $\Gamma$-invariant fiber metric.

\medskip

The relevant space of fields is then the space $L^2(X, E)$ of square-integrable sections\footnote{If $E$ is just a trivialised line bundle, we will simply write $L^2(X)$, omitting reference to $E$.} of $E$, which admits a right action of the reduced group $C^*$-algebra $C^*_r(\Gamma)$ (see \S\ref{SectionRapidDecay}) in a canonical way. Now given a Hamiltonian $D$, i.e.\ a self-adjoint operator acting on sections of $L^2(X, E)$, we assume that we are given a compact subset $S \subset \mathrm{Spec}(D)$ of the spectrum of $D$, which is separated from the rest of the spectrum by spectral gaps. A typical example has the spectrum of $D$ bounded from below, and the so-called \emph{Fermi level} lying in a spectral gap ---  this describes an \emph{insulator}. The spectral subspace $L^2_S(X,E)$ for energies below the Fermi level is the subspace of occupied energy states, and is of particular physical interest in determining various material properties.

The idea is now to consider the spectral subspace $L^2_S(X, E)$ as a module over the reduced group $C^*$-algebra $C^*_r(\Gamma)$. However, in general, this space is too large to be finitely generated and projective. In the case $\Gamma = \ZZ^d$, where $C^*_r(\Gamma) \cong C(\TT^d)$, continuous functions on the Brillouin torus, this parallels the fact that $L^2_S(\RR^d)$ is identified with the space of square-integrable (not necessarily continuous) sections of the Bloch bundle; clearly this is not finitely generated as a module over $C(\TT^d)$ and moreover does not carry any topological information.

To remedy the situation, we use a construction of Roe \cite[pp.~243]{Roe} which identifies a dense subspace $L^2_\Gamma(X, E) \subset L^2(X, E)$, which is a Hilbert module over $C^*_r(\Gamma)$, in other words, admits a $C^*_r(\Gamma)$-valued scalar product. The construction is such that in the abelian case $\Gamma = \ZZ^d$, the intersection $L^2_S(\RR^d) \cap L^2_\Gamma(\RR^d)$ is precisely the space of {\em continuous} sections of the Bloch bundle, which is finitely generated and projective as a $C(\TT^d)$-module, by the theorem of Serre--Swan.

We then prove the following result.

\begin{theorem} \label{MainTheoremIntro}
Assume in addition that $\Gamma$ has \emph{polynomial growth}. Let $D$ be a self-adjoint equivariant differential operator acting on sections of $E$, which is either of Laplace type, or first order elliptic. Suppose that $S$ is a compact subset of the spectrum which is separated from the rest of the spectrum and let $L^2_S(X, E)$ be the corresponding spectral subspace. Then the subspace 
\begin{equation*}
P_S := L^2_S(X, E) \cap L^2_\Gamma(X, E)
\end{equation*}
 is a finitely generated, projective $C^*_r(\Gamma)$-module. Moreover, the following assertions are equivalent.
\begin{enumerate}
\item[{\normalfont ($i$)}] $P_S$ is a free $C^*_r(\Gamma)$-module of rank $n$.
\item[{\normalfont ($ii$)}] There exist functions $w_1, \dots, w_n \in \cS(X, E)$ such that the set $\gamma^* w_j$, $j=1, \dots, n$, $\gamma \in \Gamma$ is an orthonormal basis of $L^2_S(X, E)$.
\end{enumerate}
Here $\cS(X, E)$ denotes the space of smooth sections of $E$ which decay faster than any polynomial, together with their derivatives, c.f.\ Def.~\ref{defn:Schwartzclass} below.
\end{theorem}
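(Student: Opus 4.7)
The plan is to reduce the three claims to statements about projections and free bases inside a holomorphically closed smooth subalgebra of $C^*_r(\Gamma)$, and then to carry out a Gram--Schmidt orthonormalization there. Two external inputs drive the argument: Jolissaint's rapid-decay theorem (which uses polynomial growth of $\Gamma$) and the smoothing property of $\chi(D)$ for smooth compactly supported $\chi$ (which uses that $D$ is Laplace type or first-order elliptic).

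Because $\Gamma$ has polynomial growth, Jolissaint's theorem provides a Fr\'echet-dense $*$-subalgebra $\cA \subset C^*_r(\Gamma)$ of rapidly decreasing $\ell^2$-functions (with respect to a word-length $L$) that is closed under holomorphic functional calculus. Via Roe's construction, $\cA$ can be realized as a ``smoothing'' subalgebra of $\cK(L^2_\Gamma(X,E))$ consisting of operators with jointly smooth, rapidly decaying integral kernels; matrix amplifications $M_n(\cA)$ are likewise holomorphically closed, and the Schwartz class $\cS(X,E)$ provides a dense smooth sub-bimodule of $L^2_\Gamma(X,E)$ over $\cA$.

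Using the spectral gap around $S$, choose $\chi \in C_c^\infty(\RR)$ equal to $1$ on $S$ and to $0$ on $\mathrm{Spec}(D) \setminus S$, so that $P_S = \chi(D)$. Finite propagation speed arguments for the wave equation in the style of Cheeger--Gromov--Taylor show that for both Laplace-type and first-order elliptic $D$, the operator $\chi(D)$ has a jointly smooth integral kernel decaying faster than any polynomial in $d(x,y)$. Combined with polynomial growth of $\Gamma$, this places $\chi(D)$ inside the smooth subalgebra of the previous paragraph; holomorphic functional calculus closure then implies that its range $P_S$ is finitely generated projective over $C^*_r(\Gamma)$, proving the first claim. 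The same kernel estimate shows that $\chi(D)$ maps $\cS(X,E)$ into itself.

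For the equivalence, $(ii) \Rightarrow (i)$ follows because orthonormality of $\{\gamma^* w_j\}_{j,\gamma}$ in $L^2(X,E)$ translates to $\langle w_i, w_j\rangle_{C^*_r(\Gamma)} = \delta_{ij}$, so the map $(a_j) \mapsto \sum_j w_j \cdot a_j$ is an isometric Hilbert-module embedding $C^*_r(\Gamma)^{\oplus n} \hookrightarrow P_S$ whose image exhausts $P_S$ because its orthogonal complement as a sub-Hilbert-module is trivial by $L^2$-density of $\{\gamma^* w_j\}$ in $L^2_S(X,E)$. For $(i) \Rightarrow (ii)$, fix a free basis $v_1,\dots,v_n$ of $P_S$, take Schwartz approximations $\tilde v_j \in \cS(X,E)$ in the Hilbert-module norm, and set $v_j' := \chi(D)\tilde v_j \in \cS(X,E) \cap P_S$; for sufficiently close approximations the $v_j'$ remain a free basis. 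The Gram matrix $G = (\langle v_i', v_j'\rangle)_{ij}$ is then a positive invertible element of $M_n(\cA)$, so $G^{-1/2} \in M_n(\cA)$ by holomorphic functional calculus, and the vectors $w_i := \sum_j (G^{-1/2})_{ij} v_j' \in \cS(X,E)$ form the desired orthonormal basis. The main obstacle throughout is the smoothness bookkeeping --- assembling Roe's construction, Jolissaint's subalgebra and the Schwartz class into a single holomorphically closed Fr\'echet framework containing $\chi(D)$; with this in place, the Gram--Schmidt step via $G^{-1/2}$ and the passage between $\cA$-modules and $C^*_r(\Gamma)$-modules are essentially formal.
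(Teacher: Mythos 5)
Your proposal runs on the same engine as the paper: Jolissaint-type rapid decay (the paper's $H^\infty(\Gamma)$ is your $\cA$), Roe's Hilbert module $L^2_\Gamma(X,E)$, Cheeger--Gromov--Taylor finite propagation estimates to put $\chi(D)$ in the smooth subalgebra, and holomorphic functional calculus to upgrade from $C^*_r(\Gamma)$ to $H^\infty(\Gamma)$. The main technical divergence is in the two implications. For (i)\,$\Rightarrow$\,(ii), you first pick any free $C^*_r(\Gamma)$-basis of $P_S$, push it into $\cS(X,E)\cap P_S$ by a small perturbation followed by $\chi(D)$, then orthonormalize via the Gram matrix $G^{-1/2}$ inside $M_n(\cA)$. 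The paper instead proves once and for all (Lemmas~\ref{LemmaSpectralProjective}--\ref{projmodlemma} and Prop.~\ref{PropRapidProjective}) that the projection $p_S$ is unitarily equivalent, inside the smooth matrix algebra, to a projection over $H^\infty(\Gamma)$, and reads off the Wannier functions as images of the standard generators under the resulting $H^\infty(\Gamma)$-module isomorphism. The ingredients (inverse square roots via holomorphic functional calculus) are the same, so this is an organizational rather than mathematical difference; the paper's version has the advantage of also yielding the quantitative tight-frame refinement (Thm.~\ref{TheoremMainTheorem}(ii),(iii)) in one stroke.

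For (ii)\,$\Rightarrow$\,(i), your argument is correct but stated too briskly. In a general Hilbert $C^*$-module, a closed submodule with trivial orthogonal complement need not be the whole module, so ``the orthogonal complement is trivial by $L^2$-density'' alone does not give surjectivity. What saves you is that the image of the isometric embedding $C^*_r(\Gamma)^{\oplus n}\hookrightarrow P_S$ is isomorphic to a finitely generated \emph{free} module and hence is automatically orthogonally complemented (the adjoint of the inclusion exists and $\iota\iota^*$ is a projection onto the image); combining this complementation with the vanishing of the complement then forces surjectivity. The paper avoids this by going through the tight-frame characterization of \cite{Han}, constructing the complementary module $Q$ explicitly and observing $Q=0$ in the orthonormal case. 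Both routes work; yours would need the complementation step spelled out to be airtight.
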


In fact, we also prove a stronger, quantitative version of the theorem above, generalizing the results of Kuchment \cite{Kuchment} (see also \cite{CMM}): Namely, we show that for any $n \in \NN$, the existence of a module $Q$ such that $P_S \oplus Q$ is free of rank $n$, is equivalent to the existence of a {\em tight frame} of $L^2_S(X,E)$ consisting of $n$ ``good'' Wannier functions $w_1, \dots, w_n\in \cS(X,E)$ and their $\Gamma$-translates; c.f.~\S\ref{main.theorem.section}.

\medskip

The above results do not require $\Gamma$ to be abelian, so they apply, for instance, to the theory of \emph{crystalline} topological insulators in solid state physics, for which $X = \RR^d$, the Euclidean space, and $\Gamma$ is a \emph{generally non-abelian} crystallographic group (with $E$ trivial). Any such group has polynomial growth; we remark that by Gromov's theorem \cite{Gromov}, groups of polynomial growth are precisely those having a nilpotent subgroup of finite index. Our results show that the $K$-theory of $C^*_r(\Gamma)$, which is a proposed way to classify crystalline topological insulators \cite{FM,Thiang}, presents \emph{computable} obstructions to the existence of good Wannier bases (in the sense of being in the Schwartz class). Furthermore, these topological insulator inspired ideas extend to the \emph{non-Euclidean} setting, as we exhibit in an example in \S\ref{sec:non-Euclidean.example}. Let us mention that $C^*$-algebra methods have previously been applied to the mathematics of topological insulators in (implicitly) \emph{Euclidean} settings, see e.g.\ \cite{Bellissard,BCR,EwertMeyer,Kellendonk,Kubota,PSB}, and independently of the Wannier basis construction problem.

\medskip

We remark that we could also directly consider $L^2_S(X, E)$ as a module over the group von-Neumann-algebra $\mathcal{N}(\Gamma)$, which may seem simpler than considering the intersection $L^2_S(X, E) \cap L^2_\Gamma(X, E)$ necessary in the $C^*_r(\Gamma)$-case. However, we lose topological information: For example, in the case $\Gamma = \ZZ^d$, we will always have $L^2_S(X, E) \cong \ell^2(\ZZ^d)^n$ as $\mathcal{N}(\ZZ^d)$-modules, corresponding to the fact that the Bloch bundle aways has a square-integrable (non-continuous) trivialization.

\medskip

After fleshing out the constructions and proof of the main theorem in \S\ref{SectionRapidDecay}-\S\ref{main.theorem.section}, we formulate the general ``good Wannier basis existence problem'' (Problem \ref{prob:good.Wannier}) in the final section \S\ref{sec:physics}, and apply our theory to compute explicitly the presence/absence of obstructions in several physically interesting new examples. For the convenience of readers who are not acquainted with noncommutative geometry ideas, we include Table \ref{table:dictionary} to convert some familiar ideas in usual Bloch theory to the setup of non-commutative geometry.

\begin{table}
\begin{center}
\begin{tabular}{ m{6cm} || m{6cm}}
 \hline
 \hline
 $\ZZ^d$ acting on Euclidean $\RR^d$    & Non-abelian $\Gamma$ acting on Riemannian manifold $X$ \\
 \hline
 \hline
 Unit cell $\RR^d/\ZZ^d$   & Fundamental domain $X/\Gamma$    \\
  \hline
Continuous functions on quasi-momentum space / Brillouin torus $\TT^d$ &  Group $C^*$-algebra $C^*_r(\Gamma)$  \\
 \hline
Bloch--Floquet transform & $\Phi_\mathcal{F}$, Eq.\ \eqref{IsoPhi}\\
\hline
Continuous sections of Bloch bundle over $\TT^d$  & Hilbert $C^*_r(\Gamma)$-module $L^2_\Gamma(X)$ inside $L^2(X)$ \\
\hline
Hamiltonian with Band structure & $C^*_r(\Gamma)$-elliptic operator on $L^2_\Gamma(X)$ \\
\hline
Smooth functions on $\TT^d$ & The subalgebra $H^\infty(\Gamma) \subset C^*_r(\Gamma)$ \\
\hline
Finite-rank spectral subbundle of Bloch bundle (``spectrally isolated bands'') & $P_S=L^2_S(X)\cap L^2_\Gamma(X)$ as a finitely generated projective submodule of $L^2_\Gamma(X)$  \\
\hline
Topologically (resp.\ smoothly) trivialialisable bands & Free Hilbert $C^*_r(\Gamma)$-module (resp.\ pre-Hilbert $H^\infty(\Gamma)$-module)  \\
\hline
Chern/topological $K$-theory class of bands & Noncommutative Chern class/ operator $K$-theory class of $P_S$\\
\hline
Orthonormal Wannier basis functions $\{w_j\}_{j=1,\ldots,n}$/``atomic limit''& Free (pre)-Hilbert module generators\\
 \hline
\end{tabular}
\caption{A dictionary to convert between the usual commutative Bloch theory constructions, into noncommutative geometry language. More precise descriptions of the right-side can be found in the main text.}\label{table:dictionary}
\end{center}
\end{table}

\section{Algebras of rapidly decaying sequences} \label{SectionRapidDecay}

Let $\Gamma$ be a finitely generated group. Its {\em group algebra} $\CC[\Gamma]$ is the set of all finite formal linear combinations of the group elements $\gamma\in\Gamma$. It has a $*$-involution given by
\begin{equation}
  \Bigl(\sum_{\gamma \in \Gamma} a_\gamma \gamma\Bigr)^* = \sum_{\gamma \in \Gamma} \overline{a}_\gamma \gamma^{-1}.\label{eqn:group.algebra.involution}
\end{equation}
The group algebra $\CC[\Gamma]$ acts on the Hilbert space
\begin{equation*}
  \ell^2(\Gamma) = \Bigl\{ b = \sum_{\gamma \in \Gamma} b_\gamma \gamma ~\Bigl|~ \|b\|_{\ell^2(\Gamma)} := \sum_{\gamma \in \Gamma} |b_\gamma|^2 < \infty \Bigr\}
\end{equation*}
by translation (i.e.\ left regular representation),
\begin{equation*}
(a\cdot b)_\gamma=\sum_{\rho\in\Gamma} a_\rho b_{\gamma\rho^{-1}}.
\end{equation*}
 Multiplication by $a \in \CC[\Gamma]$ is bounded, hence one obtains a representation of $\CC[\Gamma]$ on the bounded operators $\cL(\ell^2(\Gamma))$, which is in fact a $*$-representation. The {\em reduced group $C^*$-algebra} is then the $C^*$-algebra $C^*_r(\Gamma) \subseteq \cL(\ell^2(\Gamma))$ obtained by completing $\CC[\Gamma]  \subseteq \cL(\ell^2(\Gamma))$ with respect to the operator norm. Denoting by $u \in \ell^2(\Gamma)$  the element which is one at the unit of $\Gamma$ and zero otherwise, the map
\begin{equation*}
  C^*_r(\Gamma) \longrightarrow \ell^2(\Gamma), \quad a \longmapsto a \cdot u
\end{equation*}
provides a continuous embedding of $C^*_r(\Gamma)$ into $\ell^2(\Gamma)$, and further into the space $\ell^\infty(\Gamma)$ of all bounded, $\Gamma$-indexed sequences. In particular, elements of $C^*_r(\Gamma)$ are bounded $\Gamma$-indexed sequences.

Let $\mathscr{S}$ be a finite generating set of $\Gamma$ with $1\not\in \mathscr{S}$ and $\mathscr{S}^{-1} = \mathscr{S}$, and let 
\begin{equation*}
  L(\gamma) = \min \bigl\{ n \in \NN_0 \mid \exists s_1,\dots, s_n \in \mathscr{S}:  \gamma = s_1\cdots s_n \bigr\}
\end{equation*}
be the corresponding {\em length function}. Now for any auxiliary Banach space $A$, we define
\begin{equation}
  H^\infty(\Gamma, A) := \Bigl\{ a=\sum_{\gamma \in \Gamma} a_\gamma \gamma  ~\Bigl|~ \sum_{\gamma \in \Gamma} |a_\gamma|_A^2 L(\gamma)^s < \infty~~\text{for all}~~ s \geq 0.\Bigr\},\label{eqn:rapid.decrease.sequence.definition}
\end{equation}
the space of {\em rapidly decreasing sequences} indexed by $\Gamma$ with values in $A$. 

\begin{remark}
Above, we wrote $|a_\gamma|_A$ for the pointwise norms of the $A$-valued sequence $a\in H^\infty(\Gamma, A)$, and we will also write $|a|_A\in H^\infty(\Gamma,\CC)$ for the sequence of pointwise $A$-norms. When $A=\CC$, we simply write $H^\infty(\Gamma)=H^\infty(\Gamma, \CC)$, which can be thought of as the space of as ``smooth functions on the dual of $\Gamma$''. For example, $H^\infty(\ZZ)$ comprises the Fourier coefficient sequences of smooth functions on the dual circle $\TT=\wh{\ZZ}$.
\end{remark}

The space $H^\infty(\Gamma, A)$ is topologized by the increasing sequence of Hilbert space norms
\begin{equation} \label{SobolevNorm}
  \|a\|_{s, A}^2 = \sum_{\gamma \in \Gamma} |a_\gamma|_A^2 \bigl(1+L(\gamma)\bigr)^{2s},
\end{equation}
which turn $H^\infty(\Gamma, A)$ into a Fr\'echet space. One easily checks that the definition of $H^\infty(\Gamma, A)$ is independent of the choice of finite generating set $\mathscr{S}$; in particular, any such choice gives rise to an equivalent set of norms \eqref{SobolevNorm}.

\begin{lemma} \label{LemmaPolGrowth}
Assume that $\Gamma$ has {\em polynomial growth}, meaning that
\begin{equation} \label{PolyGrowth}
  \sum_{\gamma \in \Gamma} \bigl(1+L(\gamma)\bigr)^{-s} < \infty
\end{equation}
for some $s >0$ large enough. Then for any $C^*$-algebra $A$, the space $H^\infty(\Gamma, A)$ is a Fr\'echet algebra in a natural way, which is continuously included in the tensor product\footnote{For two $C^*$-algebras $A_1\subset \cL(\cH_1)$ and $A_2\subset \cL(\cH_2)$, their algebraic tensor product sits inside \mbox{$\cL(\cH_1\otimes\cH_2)$} and its completion therein is the spatial tensor product, see Appendix T.5 of \cite{WO} for details of this and other possible $C^*$-algebra tensor products. In our case where $\Gamma$ has polynomial growth and is hence amenable, its group $C^*$-algebra is nuclear. So for any $C^*$-algebra $A$, there is actually a unique $C^*$-algebra tensor product $A \otimes C^*_r(\Gamma)$. For details, see \cite{BrownOzawa} Example~2.6.6, Thm.~2.6.8 and Thm.~3.8.7.}
$A \otimes C^*_r(\Gamma)$.
%spatial tensor product
\end{lemma}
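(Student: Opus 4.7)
The plan is to drive everything from the submultiplicativity $1 + L(\rho\sigma) \leq (1+L(\rho))(1+L(\sigma))$, which follows immediately from the triangle inequality for the word-length $L$. Raised to the $s$-th power, this single inequality, combined with the polynomial growth hypothesis \eqref{PolyGrowth}, will yield both the Fr\'echet algebra structure and the continuity of the inclusion into $A \otimes C^*_r(\Gamma)$.

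First I would show that the convolution $(a\cdot b)_\gamma = \sum_\rho a_\rho b_{\gamma\rho^{-1}}$ (using the product in the $C^*$-algebra $A$) converges and defines a continuous operation on $H^\infty(\Gamma, A)$. The triangle inequality in $A$ together with the submultiplicativity above gives the pointwise majorization
\begin{equation*}
(1+L(\gamma))^s |(a\cdot b)_\gamma|_A \leq \sum_{\rho\in\Gamma} \bigl[(1+L(\rho))^s |a_\rho|_A\bigr]\bigl[(1+L(\gamma\rho^{-1}))^s |b_{\gamma\rho^{-1}}|_A\bigr],
\end{equation*}
i.e., a $\CC$-valued convolution of the weighted pointwise $A$-norms of $a$ and $b$. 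Applying Young's inequality $\ell^1 \ast \ell^2 \subseteq \ell^2$ on $\Gamma$ bounds $\|a\cdot b\|_{s, A}$ by a weighted $\ell^1$-norm of $a$ times $\|b\|_{s, A}$. The polynomial growth hypothesis enters here: for $t$ large so that $C_t := \sum_\gamma (1+L(\gamma))^{-2t} < \infty$, Cauchy--Schwarz gives
\begin{equation*}
\sum_\gamma (1+L(\gamma))^s |a_\gamma|_A \leq C_t^{1/2}\,\|a\|_{s+t, A}.
\end{equation*}
Combining yields a submultiplicative-type bound $\|a\cdot b\|_{s,A} \leq C_t^{1/2}\,\|a\|_{s+t, A}\,\|b\|_{s, A}$, proving both that convolution maps $H^\infty(\Gamma, A) \times H^\infty(\Gamma, A)$ into $H^\infty(\Gamma, A)$ and that it is jointly continuous; associativity and compatibility with the $*$-involution \eqref{eqn:group.algebra.involution} then descend from the same property on the dense subalgebra $A \odot \CC[\Gamma]$.

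For the continuous embedding into $A \otimes C^*_r(\Gamma)$, the key observation is that each $\gamma$ acts unitarily on $\ell^2(\Gamma)$, so $\|\gamma\|_{C^*_r(\Gamma)} = 1$. For a finite sum $a = \sum_\gamma a_\gamma\gamma$, viewed in $A\otimes C^*_r(\Gamma)$ via $a \mapsto \sum_\gamma a_\gamma \otimes \gamma$, one immediately gets
\begin{equation*}
\|a\|_{A \otimes C^*_r(\Gamma)} \leq \sum_{\gamma\in\Gamma}|a_\gamma|_A\,\|\gamma\|_{C^*_r(\Gamma)} = \sum_{\gamma\in\Gamma}|a_\gamma|_A \leq C_t^{1/2}\,\|a\|_{t, A},
\end{equation*}
by the same Cauchy--Schwarz bound as before. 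Density of finitely supported sequences in each seminorm $\|\cdot\|_{s,A}$ then extends this to all of $H^\infty(\Gamma, A)$. I do not expect any serious obstacle --- this is essentially the vector-valued version of the classical Jolissaint-type estimate for polynomially growing $\Gamma$. The only bookkeeping to track is the ``loss of derivatives'' from $\|\cdot\|_{s,A}$ to $\|\cdot\|_{s+t,A}$ in the submultiplicative bound, which is harmless for a Fr\'echet algebra structure since $s$ is allowed to range freely.
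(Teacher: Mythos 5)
Your proof is correct, and the broad strategy — convolution estimate, then Cauchy--Schwarz against polynomial growth to absorb the $\ell^1$-weight into a Sobolev norm — matches the paper's. But the key combinatorial step is genuinely different, and cleaner. The paper uses the \emph{additive} triangle inequality $L(\gamma\rho) \leq L(\gamma) + L(\rho)$, binomially expands $L(\gamma)^n$, and bounds the resulting weighted $\ell^2$-convolutions by passing through the operator norm $\| L^k|a|_A\|_{\mathrm{op}}$ (i.e.\ through the already-established estimate \eqref{NormEstimateS}). You instead use the \emph{multiplicative} form $(1+L(\gamma\rho)) \leq (1+L(\gamma))(1+L(\rho))$, which lets the weight $(1+L(\gamma))^s$ distribute over both factors in one stroke without any binomial expansion; then Young's inequality $\ell^1 \ast \ell^2 \subseteq \ell^2$ does the rest. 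The resulting bound $\|ab\|_{s,A} \leq C_t^{1/2}\|a\|_{s+t,A}\|b\|_{s,A}$ even has a fixed loss $t$ (depending only on the growth exponent) rather than a loss growing with $s$. The one thing the paper's route buys that yours does not: since it reduces the product estimate to a single appeal to $\|\cdot\|_{\mathrm{op}} \leq C\|\cdot\|_s$, the paper can observe (Remark~\ref{RemarkPropertyRD}) that the argument goes through verbatim under the weaker hypothesis of property (RD), whereas your argument, which needs the $\ell^1$-control $\sum_\gamma(1+L(\gamma))^s|a_\gamma|_A \leq C_t^{1/2}\|a\|_{s+t,A}$, is tied to polynomial growth. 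That is irrelevant to the lemma as stated, so there is no gap — just a difference in how far the argument generalizes.
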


\begin{proof}
We first show that there exist $C, s > 0$ such that
\begin{equation} \label{NormEstimateS}
  \|a\|_{A \otimes C^*_r(\Gamma)} \leq C \|a\|_{s, A}
\end{equation}
for all $a \in H^\infty(\Gamma, A)$, which implies that $H^\infty(\Gamma, A)$ is continuously included in $A \otimes C^*_r(\Gamma)$.
Namely, for $a \in A \otimes \CC[\Gamma]$ (the algebraic tensor product), we have by the triangle inequality that
\begin{equation*}
  \|a\|_{A\otimes C^*_r(\Gamma)} \leq \sum_{\gamma \in \Gamma} \|a_\gamma \otimes \gamma\|_{A\otimes C^*_r(\Gamma)} = \sum_{\gamma \in \Gamma} \|a_\gamma\|_A.
\end{equation*}
ence by the Cauchy--Schwarz inequality,
\begin{equation*}
\begin{aligned}
  \|a\|_{A\otimes C^*_r(\Gamma)} &\leq \left(\sum_{\gamma \in \Gamma} |a_\gamma|_A^2\bigl(1+L(\gamma)\bigr)^{2s}\right)^{1/2} \left(\sum_{\gamma \in \Gamma} \bigl(1+L(\gamma)\bigr)^{-2s}\right)^{1/2}
  \end{aligned}
\end{equation*}
with the second $a$-independent factor finite by \eqref{PolyGrowth}. Since $A \otimes \CC[\Gamma]$ is dense in $A \otimes C^*_r(\Gamma)$ and in $H^\infty(\Gamma, A)$, this shows the claim.

Clearly, $H^\infty(\Gamma, A)$ is an algebra in the obvious way. To show that the multiplication is continuous, we will show that for any $s \geq 0$, there exist $C^\prime, r > 0$ such that
\begin{equation} \label{ProductNormEstimateS}
  \|a \cdot b\|_{s, A} \leq C^\prime \|a\|_{r, A} \|b\|_{r, A}
\end{equation}
for all $a, b \in H^\infty(\Gamma, A)$. It suffices to consider the case $s=n \in \NN$. Then
\begin{equation*}
  \|a\cdot b\|_{n, A}^2 = \sum_{\gamma \in \Gamma} \Bigl| \sum_{\rho \in \Gamma} a_\rho b_{\rho^{-1} \gamma} \Bigr|^2 L(\gamma)^{2n} 
  \leq \sum_{\gamma \in \Gamma} \Bigl| \sum_{\rho \in \Gamma} |a_\rho|_A |b_{\rho^{-1} \gamma}|_A L(\gamma)^n \Bigr|^2,
\end{equation*}
Using the {\em triangle inequality}
\begin{equation} \label{TriangleInequality}
    L(\gamma\rho) \leq L(\gamma) + L(\rho) \quad \text{for all} \quad \gamma, \rho \in \Gamma,
\end{equation}
we obtain
\begin{equation*}
\begin{aligned}
  \|a\cdot b\|_{n, A} &\leq \sum_{k=0}^n \binom{n}{k} \bigl\| L^k|a|_A \cdot L^{n-k}|b|_A\bigr\|_{\ell^2(\Gamma)} \\
  &\leq \sum_{k=0}^n \binom{n}{k} \bigl\| L^k|a|_A\bigr\|_{\mathrm{op}} \bigl\|L^{n-k}|b|_A\bigr\|_{\ell^2(\Gamma)} \\
  &\leq C \sum_{k=0}^n \binom{n}{k} \bigl\| L^k|a|_A\bigr\|_{s} \bigl\||b|_A\bigr\|_{n-k}\\
  &= C \sum_{k=0}^n \binom{n}{k} \| a\bigr\|_{s+k, A} \bigl\|b\bigr\|_{n-k, A},
\end{aligned}
\end{equation*}
where $|a|_A, |b|_A \in H^\infty(\Gamma)$ are the respective pointwise $A$-norms of $a, b$; moreover, we used the estimate \eqref{NormEstimateS}. Since the norms \eqref{SobolevNorm} are increasing in strength, this implies \eqref{ProductNormEstimateS} with $r = s+n$. 
\end{proof}

\begin{remark} \label{RemarkPropertyRD}
More generally, a group $\Gamma$ is said to have property (RD), if there exist $C, s>0$ such that $\|a\|_{\mathrm{op}} \leq C \|a\|_s$ for all $a \in \CC[\Gamma]$, which implies $H^\infty(\Gamma) \subseteq C^*_r(\Gamma)$. The lemma above shows that groups of polynomial growth have property (RD). Conversely, it is known that an amenable group has property (RD) if and only if it is of polynomial growth \cite{Jolissaint}. The proof above also shows that also under the assumption (RD), $H^\infty(\Gamma, A)$ is a Fr\'echet algebra for any $C^*$-algebra $A$, i.e.\ that we have estimates of the form \eqref{ProductNormEstimateS}.
\end{remark}

A subalgebra $S$ of a $C^*$-algebra $B$ is called {\em spectral}, if for any $b \in S$ that is invertible in $B$, we already have $b^{-1} \in S$. Moreover, we say that $S$ is {\em closed under holomorphic functional calculus} when $f(b) \in S$ for any function $f$ that is is holomorphic on an open neighborhood of the spectrum of $b$ (in $B$). 

\begin{proposition} \label{PropSpectral}
  Suppose that $\Gamma$ has polynomial growth. Then for any unital $C^*$-algebra $A$, the dense subalgebra $H^\infty(\Gamma, A)$ of $A \otimes C^*_r(\Gamma)$ is spectral and closed under holomorphic functional calculus.
\end{proposition}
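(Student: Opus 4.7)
The plan is to first reduce the closure under holomorphic functional calculus to spectrality (inverse-closedness), and then establish spectrality by a Neumann series argument in the Fr\'echet topology of $H^\infty(\Gamma, A)$. For the reduction, I would invoke the standard fact that a dense, continuously embedded Fr\'echet $*$-subalgebra $S$ of a unital Banach algebra $B$ that is spectral is automatically closed under holomorphic functional calculus: given $a \in S$ and $f$ holomorphic on a neighbourhood of $\sigma_B(a) = \sigma_S(a)$ (the equality using spectrality), the Cauchy integral $f(a) = \frac{1}{2\pi\im}\oint_\gamma f(z)(z-a)^{-1}\,\dd z$ converges in $S$, because the resolvent $z \mapsto (z-a)^{-1}$ is continuous from $\CC \setminus \sigma_B(a)$ into the Fr\'echet topology of $S$ via a local Neumann series $(z-a)^{-1} = (z_0 - a)^{-1}\sum_{n \geq 0}((z_0-z)(z_0-a)^{-1})^n$.

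For spectrality, given $a \in H^\infty(\Gamma, A)$ invertible in $B := A \otimes C^*_r(\Gamma)$ with inverse $b$, the element $a^*a \in H^\infty(\Gamma, A)$ (by $*$-closure from \eqref{eqn:group.algebra.involution} and multiplicative closure from Lemma \ref{LemmaPolGrowth}) is positive and invertible in $B$ with inverse $bb^*$. Since $b = (a^*a)^{-1} a^*$, it suffices to show $(a^*a)^{-1} \in H^\infty(\Gamma, A)$, so I may assume $a$ is positive self-adjoint with $\sigma_B(a) \subset [m, M]$ for some $m > 0$. For $|z|$ sufficiently large, the Neumann series $(z-a)^{-1} = z^{-1}\sum_n(a/z)^n$ converges absolutely in $B$, and I will arrange for it to converge also in every Fr\'echet seminorm $\|\cdot\|_{s, A}$, yielding $(z-a)^{-1} \in H^\infty(\Gamma, A)$ for such $z$. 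Local Neumann series then propagate this membership to all $z \in \CC \setminus \sigma_B(a)$, in particular $z = 0$, whence $a^{-1} \in H^\infty(\Gamma, A)$.

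The convergence of the Neumann series in $\|\cdot\|_{s, A}$ reduces to a Leibniz-type refinement of Lemma \ref{LemmaPolGrowth}:
\begin{equation*}
\|xy\|_{s, A} \leq C_s\bigl(\|x\|_{s, A}\|y\|_B + \|x\|_B\|y\|_{s, A}\bigr),
\end{equation*}
which, iterated, gives $\|y^n\|_{s, A} \leq C_s'(C_s\|y\|_B)^{n-1}\|y\|_{s, A}$, a geometric bound summable once $|z|$ is large enough that $C_s\|a/z\|_B < 1$. The main obstacle is establishing this Leibniz bound; I expect it to follow by splitting $(1+L(\gamma))^s \leq \sum_k \binom{s}{k}(1+L(\rho))^k L(\rho^{-1}\gamma)^{s-k}$ via the triangle inequality \eqref{TriangleInequality}, then using the continuous inclusion \eqref{NormEstimateS} (property RD) to convert intermediate Sobolev norms on one factor back to operator norms. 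Handling the ``mixed'' terms with $0 < k < s$ requires an interpolation between $\|\cdot\|_B$ and $\|\cdot\|_{s, A}$, and this interpolation step is the technical heart of the proof.
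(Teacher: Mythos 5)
Your overall strategy is to reduce closure under holomorphic functional calculus to spectrality, then establish spectrality directly by showing that the Neumann series for the resolvent converges in the Fr\'echet topology of $H^\infty(\Gamma, A)$.  The reduction in the first step is standard and fine; the issue is the second step, and you correctly identify the Leibniz/Moser-type estimate
\begin{equation*}
\|xy\|_{s, A} \leq C_s\bigl(\|x\|_{s, A}\|y\|_B + \|x\|_B\|y\|_{s, A}\bigr)
\end{equation*}
as the technical heart.  This is a genuine gap, and I believe the estimate cannot be obtained in the stated generality.  The estimate \eqref{ProductNormEstimateS} from Lemma~\ref{LemmaPolGrowth} inherently suffers a ``derivative loss'': it bounds $\|ab\|_{s,A}$ in terms of the \emph{higher} seminorms $\|a\|_{s+n, A}$, $\|b\|_{s+n, A}$, so iterating it gives $\|a^m\|_{s, A}$ controlled only by Sobolev indices that grow with $m$, which is useless for summing the Neumann series.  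To repair this you would need the interpolation inequality $\|a\|_{k, A} \lesssim \|a\|_B^{1-k/s}\|a\|_{s, A}^{k/s}$, which in turn requires $\|a\|_{0, A} = (\sum_\gamma |a_\gamma|_A^2)^{1/2} \lesssim \|a\|_B$.  This holds for $A = \CC$ (and more generally $A = \mathrm{M}_n(\CC)$ with a constant $\sqrt{n}$), but fails for infinite-dimensional $A$: for $A = \cK(\ell^2(\NN))$ and $\Gamma = \ZZ$, taking $a_j = e_j \otimes e_j^*$ for $1 \leq j \leq N$ gives $\|a\|_B = 1$ but $\|a\|_{0, A} = \sqrt{N}$.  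Likewise, the step of converting ``$\|L^k|a|_A\|_{\mathrm{op}}$ back to an operator norm on $a$'' is not available: property (RD) gives $\|\cdot\|_{\mathrm{op}} \lesssim \|\cdot\|_s$, not the reverse, and in any case the map $a \mapsto |a|_A$ (pointwise $A$-norms of coefficients) is not bounded from $A\otimes C^*_r(\Gamma)$ to $C^*_r(\Gamma)$.  Even for $A = \CC$ and general $\Gamma$ of polynomial growth, the only known proofs of such Moser estimates rely on dyadic Littlewood--Paley decomposition, which has no analogue here.

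The paper avoids all of this by a structurally different argument.  Instead of estimating Neumann series, it constructs a closed unbounded derivation $\delta = [M_L, \cdot]$ on $\cL(H \otimes \ell^2(\Gamma))$ (with $M_L$ the unbounded multiplier by the length function) and appeals to Ji's criterion (Proposition~\ref{PropJi}): the intersection $\bigcap_n \mathrm{dom}(\delta^n) \cap B$, being a dense subalgebra built from domains of a closed derivation, is \emph{automatically} spectral.  Spectrality is essentially the algebraic identity $\delta(a^{-1}) = -a^{-1}\delta(a)a^{-1}$ and its higher analogues, which carry no derivative loss and require no product estimates.  The technical work in the paper is then to show $H^\infty(\Gamma, A) = \bigcap_n \mathrm{dom}(\delta^n) \cap B$, and it is only \emph{one} direction of this inclusion (that any element of $\bigcap_n \mathrm{dom}(\delta^n)$ has rapidly decaying coefficients) that uses polynomial growth, not merely (RD).  Notably, the paper points out that Ji's original result covers only $A = \mathrm{M}_n(\CC)$ — exactly the regime where your interpolation would survive — and that polynomial growth is what buys the passage to arbitrary $A$.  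Your proposal is thus not a variant of the paper's proof but a distinct path that, as written, does not close the argument.
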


In order to prove this proposition, we use the following criterion of Ji, Thm.~1.2 of \cite{Ji}. In fact, Ji proves a version of Prop.~\ref{PropSpectral} for the special case where $A = \mathrm{M}_n(\CC)$, a finite-dimensional algebra, but under the weaker condition that $\Gamma$ only satisfies property (RD). We adapt his proof to show that under the stronger condition of polynomial growth, one can allow $A$ to be arbitrary.

\begin{proposition} \label{PropJi}
Let $L$ be a unital $C^*$-algebra and $B \subseteq L$ be a closed subalgebra containing the unit of $L$. Let
\begin{equation*}
  \delta: L \supset \mathrm{dom}(\delta) \longrightarrow L
\end{equation*}
be a closed, unbounded derivation. Then $S := \bigcap_{n=1}^\infty \mathrm{dom}(\delta^n) \cap B$ is a subalgebra of $B$, and if $S$ is dense in $B$, then it is spectral in $B$ and closed under holomorphic functional calculus.
\end{proposition}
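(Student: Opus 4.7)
The argument breaks naturally into four steps.

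\emph{Step 1 (subalgebra property).} The Leibniz rule $\delta(ab)=\delta(a)b+a\delta(b)$ shows that $\mathrm{dom}(\delta)$ is a subalgebra of $L$. An induction on $n$ then establishes that each $\mathrm{dom}(\delta^n)$ is a subalgebra: if $a,b\in\mathrm{dom}(\delta^{n+1})$, then both $a,b$ and $\delta(a),\delta(b)$ lie in $\mathrm{dom}(\delta^n)$, so by the inductive hypothesis $\delta(ab)=\delta(a)b+a\delta(b)\in\mathrm{dom}(\delta^n)$, whence $ab\in\mathrm{dom}(\delta^{n+1})$. Intersecting with the subalgebra $B$ yields that $S$ is itself a subalgebra; this portion does not require density.

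\emph{Step 2 (Fréchet structure on $S$).} Closedness of $\delta$ passes to each iterate $\delta^n$, so that $(\mathrm{dom}(\delta^n),\|\cdot\|_{(n)})$ is a Banach space for the graph norm $\|a\|_{(n)}:=\sum_{k=0}^n\|\delta^k(a)\|_L$. Hence $S$ is complete in the Fréchet topology induced by the increasing family of norms $\{\|\cdot\|_{(n)}\}_{n\geq 0}$.

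\emph{Step 3 (spectrality).} The crucial input is the classical inversion lemma for closed derivations (due to Sakai): if $b\in\mathrm{dom}(\delta)$ is invertible in $L$, then $b^{-1}\in\mathrm{dom}(\delta)$ with
\begin{equation*}
\delta(b^{-1})=-\,b^{-1}\,\delta(b)\,b^{-1}.
\end{equation*}
Its proof approximates $b^{-1}$ by elements of $\mathrm{dom}(\delta)$ and invokes the closedness of the graph of $\delta$ in $L\times L$. Granting it, induction on $n$ shows that $b\in\mathrm{dom}(\delta^n)$ invertible in $L$ implies $b^{-1}\in\mathrm{dom}(\delta^n)$: if $b\in\mathrm{dom}(\delta^{n+1})$, then $b\in\mathrm{dom}(\delta^n)$ so $b^{-1}\in\mathrm{dom}(\delta^n)$ by the inductive hypothesis; since $\delta(b)\in\mathrm{dom}(\delta^n)$ as well, Step 1 implies $\delta(b^{-1})=-b^{-1}\delta(b)b^{-1}\in\mathrm{dom}(\delta^n)$, so $b^{-1}\in\mathrm{dom}(\delta^{n+1})$. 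Now for $b\in S$ invertible in $B$, one has $b^{-1}\in B$ by definition, and the induction gives $b^{-1}\in\bigcap_n\mathrm{dom}(\delta^n)$; hence $b^{-1}\in S$.

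\emph{Step 4 (holomorphic functional calculus).} Let $b\in S$ and let $f$ be holomorphic on a neighborhood $U$ of $\mathrm{sp}_B(b)$. Choose a contour $\gamma\subset U$ surrounding $\mathrm{sp}_B(b)$ and write
\begin{equation*}
f(b)=\frac{1}{2\pi i}\oint_\gamma f(z)\,(z-b)^{-1}\,dz\in B.
\end{equation*}
For each $z\in\gamma$, applying Step 3 to $z-b\in S$ yields $(z-b)^{-1}\in S$. Iterating the inversion formula via the Leibniz rule expresses $\delta^k((z-b)^{-1})$ as a finite sum of products of the form $(z-b)^{-1}\delta^{i_1}(b)(z-b)^{-1}\cdots\delta^{i_r}(b)(z-b)^{-1}$ with $i_1+\cdots+i_r=k$, each of which is jointly continuous in $z$ along $\gamma$. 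Consequently $z\mapsto(z-b)^{-1}$ is continuous from $\gamma$ into the Fréchet space $S$, and the Cauchy integral converges in $S$; therefore $f(b)\in S$.

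\emph{Main obstacle.} All steps except Step 3's base case are essentially formal manipulations with the Leibniz rule and routine resolvent estimates. The substantive difficulty is the Sakai inversion lemma, whose proof requires a careful approximation scheme exploiting the closed graph of $\delta$ to handle a non-bounded, non-holomorphic derivation on a general (not necessarily commutative or symmetric) domain.
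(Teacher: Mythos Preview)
The paper does not supply its own proof of this proposition: it is quoted as Thm.~1.2 of Ji \cite{Ji} for the spectrality statement, together with an observation of Valette \cite{Valette} (below Prop.~8.12) for closure under holomorphic functional calculus. There is therefore no detailed argument in the paper to compare against; your proposal goes further by actually sketching the proof, and your outline is essentially the standard one underlying those references.

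One point deserves tightening. Your ``Sakai inversion lemma'' is stated for $b\in\mathrm{dom}(\delta)$ invertible in $L$, with the indicated one-line sketch (``approximate $b^{-1}$ by elements of $\mathrm{dom}(\delta)$ and use closedness''). As written this is not quite enough: for a generic approximating sequence $c_n\to b^{-1}$ there is no reason for $\delta(c_n)$ to converge, so closedness alone does not apply, and indeed the lemma in that generality typically needs either density of $\mathrm{dom}(\delta)$ in $L$ or $*$-closedness of the domain. What makes the argument go through here is precisely the hypothesis you have, namely density of $S$ in $B$: for $b\in S$ invertible in $B$ one picks $c\in S$ with $\|1-bc\|<1$, uses the Neumann series $(bc)^{-1}=\sum_{k\geq 0}(1-bc)^k$ together with the Leibniz estimate $\|\delta^m((1-bc)^k)\|\leq C_m k^m\|1-bc\|^{k-m}$ and closedness of $\delta$ to place $(bc)^{-1}$ in every $\mathrm{dom}(\delta^n)$, and then writes $b^{-1}=c\,(bc)^{-1}\in S$. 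This both fixes the formulation and in fact bypasses the separate induction on $n$ in your Step~3. Your Step~4 is fine once Step~3 is in place; the contour integral converges in the Fr\'echet topology of $S$ exactly as you describe.
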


We remark that the theorem of Ji does not include the additional statement that $S$ is closed under holomorphic functional calculus. However, this follows easily from an additional argument by Valette, see \cite{Valette}, below Prop.~8.12.

\begin{proof}[of Prop.~\ref{PropSpectral}]
Let $A \subset \cL(H)$ for some Hilbert space $H$. We will use Prop.~\ref{PropJi} for $L = \cL(H \otimes \ell^2(\Gamma))$ and $B = A \otimes C^*_r(\Gamma) \subseteq L$. To get our hands on a derivation $\delta$, we define the densely defined, unbounded operator 
\begin{equation*}
  M_L: H \otimes \ell^2(\Gamma) \supset \mathrm{dom}(M_L)\longrightarrow H \otimes \ell^2(\Gamma)
\end{equation*}
given by pointwise multiplication with the length function $L$, where we set $\mathrm{dom}(M_L) := H \otimes \CC[\Gamma]$, making $M_L$ densely defined. Now define
\begin{equation*}
\mathrm{dom}(\delta) = \bigl\{ a \in A \otimes C^*_r(\Gamma) \mid [M_L, a] ~\text{is densely defined and bounded} \bigr\}.
\end{equation*}
Hence for any $a \in \mathrm{dom}(\delta)$, the operator $[M_L, a]$ extends by continuity to an element of $\cL(H \otimes \ell^2(\Gamma))$, again denoted by $[M_L, a]$; the operator $\delta$ is then given by
\begin{equation*}
  \delta: A \otimes C^*_r(\Gamma) \supset \mathrm{dom}(\delta) \longrightarrow  \cL\bigl(H \otimes \ell^2(\Gamma)\bigr), \quad a \longmapsto [M_L, a].   
\end{equation*}
It is straightforward to show by induction that for any $n \in \NN$, we have the formula 
\begin{equation} \label{FormulaDeltaN}
  \delta^n(a) h = \sum_{\gamma, \rho \in \Gamma} \bigl(L(\gamma) - L(\rho^{-1}\gamma)\bigr)^n \cdot a_\rho h_{\rho^{-1}\gamma}  \gamma
\end{equation}
for $a \in A \otimes C^*_r(\Gamma)$ and $h \in H \otimes \CC[\Gamma]$ (note that this is a well-defined, not necessarily square-summable, $\Gamma$-indexed sequence, as for fixed $\gamma$, the sum over $\rho$ is in fact finite). From the triangle inequality \eqref{TriangleInequality}, we obtain for $a \in A \otimes \CC[\Gamma]$, $h \in H \otimes \CC[\Gamma]$ that
\begin{equation*}
\begin{aligned}
  \|\delta^n(a) h\|_{\ell^2(\Gamma)\otimes H}^2 &\leq \sum_{\gamma\in \Gamma} \Bigl| \sum_{\rho \in \Gamma} L(\rho)^n|a_\rho|_A |h_{\rho^{-1}\gamma}|_H\Bigr|^2 = \bigl\| L^n|a|_A \cdot |h|_H\bigr\|_{\ell^2(\Gamma)}^2 \\
  &\leq \bigl\| L^n|a|_A \bigr\|_{\mathrm{op}}^2\bigl\| |h|_H\bigr\|_{\ell^2(\Gamma)}^2 = \bigl\| L^n|a|_A \bigr\|_{\mathrm{op}}^2\|h\|_{H \otimes \ell^2(\Gamma)}^2.
\end{aligned}
\end{equation*}
Here $|a|_A$ denotes the element of $C^*_r(\Gamma)$ (as a bounded $\Gamma$-indexed sequence) obtained by taking the pointwise $A$-norm of $a$; similarly for $|h|_H \in \ell^2(\Gamma)$. This shows that $A \otimes \CC[\Gamma] \subset \mathrm{dom}(\delta^n)$ for each $n \in \NN$, hence the intersection of all $\mathrm{dom}(\delta^n)$ is dense. Moreover, since $\Gamma$ has polynomial growth, Lemma~\ref{LemmaPolGrowth} implies that there exist $C, s \geq 0$ such that
\begin{equation*}
\bigl\| L^n|a|_A \bigr\|_{\mathrm{op}} \leq C \|L^n|a|_A\|_s = C \|a\|_{s+n, A}.
\end{equation*} 
This shows that
\begin{equation*}
  \|\delta^n(a)\|_{A \otimes C^*_r(\Gamma)} \leq C \|a\|_{s+n, A},
\end{equation*}
which implies that $H^\infty(\Gamma, A) \subseteq \mathrm{dom}(\delta^n)$, for any $n \in \NN$. 

To see that $\delta$ is closed, let $a_n \in \mathrm{dom}(\delta)$ converge to some $a \in A \otimes C^*_r(\Gamma)$. We claim that this means in particular that $a_{n, \gamma} \rightarrow a_\gamma$ in $A$ for every $\gamma \in \Gamma$. Namely, given $\xi \in H$, denote by $u_\xi$ the element of $\ell^2(\Gamma) \otimes H$ which is $\xi$ at the unit element of $\Gamma$ and zero otherwise. Then we have
\begin{equation*}
  |a_{n, \gamma} - a_\gamma|^2_A \leq \sup_{|\xi|=1}\sum_{\gamma \in \Gamma} \bigl|(a_{n, \gamma} - a_\gamma)\xi\bigr|_H^2 = \sup_{|\xi|=1}\bigl\|(a_n-a)u_\xi\bigr\|_{H \otimes \ell^2(\Gamma)}^2 \leq \|a_n-a\|^2_{A \otimes C^*_r(\Gamma)},
\end{equation*}
which converges to zero as $n \rightarrow \infty$, proving the claim. Having established this fact, formula \eqref{FormulaDeltaN} implies that for all $h \in H \otimes \CC[\Gamma]$,
\begin{equation*}
\delta(a_n)h \longrightarrow [M_L, a]h
\end{equation*}
pointwise, in the sense that the coefficients $(\delta(a_n)h)_\gamma$ converge in $H$ for each $\gamma \in \Gamma$; here $[M_L, a]h$ is some $\Gamma$-indexed sequence, not necessarily square-summable. But now if $\delta(a_n)$ converges to $b \in \cL(H \otimes \ell^2(\Gamma))$, then in particular, $\delta(a_n)h$ converges pointwise to $bh$, for all $h \in H \otimes \CC[\Gamma]$. Hence in fact $bh = [M_L, a]h$ for each $H \otimes \CC[\Gamma]$. We obtain that $H \otimes \CC[\Gamma] \subset \mathrm{dom}([M_L, a])$, hence $[M_L, a]$ is densely defined, and moreover, $[M_L, a]$ is bounded, since $b$ is bounded. This shows $a \in \mathrm{dom}(\delta)$, and $b = \delta(a)$, hence $\delta$ is a closed operator.

With a view on the result of Ji cited above, it remains to show that 
\begin{equation*}
  \bigcap_{n=1}^\infty \mathrm{dom}(\delta^n) \subseteq H^\infty(\Gamma, A)
\end{equation*}
in order to show that $H^\infty(\Gamma, A)$ is spectral in $A \otimes C^*_r(\Gamma)$, and \emph{it is this step where we need the stronger condition of polynomial growth instead of just property} (RD).
Namely, let $a \in \mathrm{dom}(\delta^n)$ for any $n \in \NN$. Given $\xi \in H$, denote by $u_\xi$ the element of $\ell^2(\Gamma) \otimes H$ which is $\xi$ at the unit element of $\Gamma$ and zero otherwise. Then as $a \in \mathrm{dom}(\delta^n)$, we have
\begin{equation*}
\delta^n(a) u_\xi = \sum_{\gamma \in \Gamma} L(\gamma)^n a_\gamma(\xi)  \in \ell^2(\Gamma) \otimes H,
\end{equation*}
 in fact 
\begin{equation*}
 C_n^2 |\xi|_H^2 = C_n^2 \|u_\xi\|_{\ell^2(\Gamma) \otimes H}^2 \geq \|\delta^n(a) u_\xi\|_{\ell^2(\Gamma) \otimes H}^2 = \sum_{\gamma \in \Gamma} L(\gamma)^{2n} |a_\gamma(\xi)|_H^2 
\end{equation*}
for some constant $C_n>0$, depending only on $n$. In particular, we have that $|a_\gamma|_A \leq C_n L(\gamma)^{-n}$, for any $n \in \NN$. Since $\Gamma$ has  polynomial growth, this implies that $a \in H^\infty(\Gamma, A)$.
\end{proof}

\section{The Hilbert module of a group action} \label{SectionExamples}

In this section, we review the notion of (Pre-)Hilbert modules over $C^*$-algebras, and introduce the basic examples relevant to our paper.

Let $A$ be a pre-$C^*$-algebra, by which we mean a $*$-closed subalgebra of a $C^*$-algebra. By a {\em pre-Hilbert $A$-module}, we mean a right $A$-module $M$ together with an $A$-valued inner product,
\begin{equation*}
  ( \, \cdot \, | \, \cdot \,) : M \times M \longrightarrow A,
\end{equation*}
which is $\CC$-antilinear in the first argument, $\CC$-linear in the second argument, and satisfies the identities
\begin{equation}\label{ScalarProductProperties}
  (x | ya) = (x | y)a \quad \text{and} \quad (y | x) = (x | y)^*,
\end{equation}
for all $x, y \in M$ and $a \in A$. Moreover, we require that $(x | x)$ lies in the cone of positive elements of $A$, for each $x \in M$; here we say that an element of $A$ is positive if it is positive in the completion of $A$. Using the $C^*$-property of the norm of $A$, one deduces the Cauchy--Schwarz inequality
\begin{equation} \label{CSU}
  \|(x|y)\|_A \leq \|(x | x)\|_A^{1/2}\|(y | y)\|_A^{1/2},
\end{equation}
which implies that the positive homogeneous functional
\begin{equation} \label{DefinitionNormOnM}
  \|x\|_M := \|(x | x)\|_A^{1/2}
\end{equation}
satisfies the triangle inequality, hence defines a norm on $M$. If $M$ is complete with respect to the norm \eqref{DefinitionNormOnM}, it is called a \emph{Hilbert $C^*$-module over $A$} or simply a {\em Hilbert $A$-module}. Otherwise, starting from a pre-Hilbert $A$-module $M$, one can ``simultaneously complete'' $A$ and $M$ to obtain a Hilbert $C^*$-module, see pp.\ 4--5 of \cite{Lance}.

\begin{example}[Standard Hilbert $A$-module] \label{L2Module}
Let $H$ be a separable Hilbert space and $A$ a $C^*$-algebra. The {\em interior tensor product} %\footnote{In fact, in this specific example, the interior tensor product $H \otimes A$ coincides with the {\em exterior} tensor product.} 
$H \otimes A$ is defined as follows (for general facts regarding the interior product, which in this case happens to coincide with the exterior product, see e.g.\ \cite[Chapter 4]{Lance}). On the algebraic tensor product of $H$ and $A$, define the inner product
\begin{equation} 
  \Bigl( \sum_\alpha h_\alpha \otimes a_\alpha \Bigl| \sum_\beta k_\beta \otimes b_\beta \Bigr) = \sum_{\alpha, \beta} \langle h_\alpha, k_\beta\rangle_H\,a_\alpha^* b_\beta.\label{eqn:inner.product.definition}
\end{equation}
We then denote by $H \otimes A$ the completion of the algebraic tensor product with respect to the norm given by \eqref{DefinitionNormOnM} in terms of this inner product. For $H = \ell^2(\NN)$, we obtain the {\em standard (countably-generated) Hilbert $A$-module} $\ell^2(A)$ discussed in \cite[p.~237]{WO}.
If we take $H = \CC^n$, we obtain the standard finitely generated free module $A^n$.
\end{example}

\begin{example}
Let $\Gamma$ be a finitely generated group having polynomial growth or, more generally, property (RD), so that the space $H^\infty(\Gamma)$ is a $*$-subalgebra of $C^*_r(\Gamma)$, c.f.\ Prop.~\ref{LemmaPolGrowth} and Remark~\ref{RemarkPropertyRD}.
Given a separable Hilbert space $H$, the space $H^\infty(\Gamma, H)$ as defined in Eq.~\eqref{eqn:rapid.decrease.sequence.definition} is a right $H^\infty(\Gamma)$-module in a natural way. Moreover, with the inner product
\begin{equation} \label{InnerProductOnL2S}
  (h|k) = \sum_{\gamma, \rho \in \Gamma} \langle h_\rho, k_{\rho\gamma}\rangle_H \gamma,
\end{equation}
$H^\infty(\Gamma, H)$ becomes a pre-Hilbert $H^\infty(\Gamma)$-module. This inner product coincides with the one defined in \eqref{eqn:inner.product.definition} above for $A = C^*_r(\Gamma)$. The inner product indeed takes values in $H^\infty(\Gamma)$: For any $s \geq 0$, we have
\begin{equation*}
\begin{aligned}
  \|(h|k)\|_s^2 = \sum_{\gamma \in \Gamma} \left| \sum_{\rho \in \Gamma} \langle h_\rho, k_{\rho\gamma} \rangle_H\right|^2 L(\gamma)^{2s} \leq \sum_{\gamma \in \Gamma} \left| \sum_{\rho \in \Gamma} |h_\rho|_H |k_{\rho\gamma}|_H\right|^{2} L(\gamma)^{2s},
\end{aligned} 
\end{equation*}
hence
\begin{equation}\label{InnerproductEstimate}
  \|(h|k)\|_s \leq \bigl\| |h|_H^* \cdot |k|_H \bigr\|_s \leq C \bigl\||h|_H^*\bigr\|_r \bigl\||k|_H\bigr\|_r = C^\prime \|h\|_{r, H} \|k\|_{r, H},
\end{equation}
using the estimate \eqref{ProductNormEstimateS}. We also have $H^\infty(\Gamma, H) \subseteq H \otimes C^*_r(\Gamma)$, because
\begin{equation*}
  \|(h|h)\|_{\mathrm{op}} \leq C\|(h|h)\|_s \leq C^{\prime\prime} \|h\|_{r, H}^2
\end{equation*}
using estimate \eqref{NormEstimateS} together with \eqref{InnerproductEstimate}. Of course, the inner product \eqref{InnerProductOnL2S} on $H^\infty(\Gamma, H)$ coincides with the restriction of the inner product of $H \otimes C^*_r(\Gamma)$ from Example \ref{L2Module}; this shows that \eqref{InnerProductOnL2S} has all required properties of a $H^\infty(\Gamma)$-valued inner product.
\end{example}

For our last example, we assume the Basic Setup from the introduction; in other words, let $X$ be a complete Riemannian manifold and let $\Gamma$ be a group acting properly discontinuously on $X$ by isometries, such that $X/\Gamma$ is compact. Moreover, let $E$ be a $\Gamma$-equivariant vector bundle on $X$ with a $\Gamma$-equivariant fiber metric. Denote by $L^2(X, E)$ the space of square-integrable sections of $E$. The data induce an isometric right action of $\Gamma$ on $L^2(X, E)$  by pullback, explicitly,
\begin{equation*}
(w \cdot \gamma)(x) := (\gamma^*w)(x) \equiv w\bigl(\gamma \cdot x),
\end{equation*}
for $w \in L^2(X, E)$, $\gamma \in \Gamma$, $x \in X$. With respect to a choice of fundamental domain $\cF$ for the action of $\Gamma$, we obtain a $\Gamma$-equivariant isometry
\begin{equation} \label{IsoPhi}
  \Phi_\cF: L^2(X, E) \longrightarrow L^2(\cF, E) \otimes \ell^2(\Gamma), ~~~~~ w \longmapsto \sum_{\gamma \in \Gamma} \gamma^*w|_\cF \otimes \gamma^{-1},
\end{equation}
where the tensor product is a tensor product of Hilbert spaces; here $\Gamma$ acts on $\ell^2(\Gamma)$ by right multiplication. Notice that we need the action to be effective in order to ensure surjectivity of $\Phi_\cF$. Of course, these right $\Gamma$-actions extend in the obvious fashion to right $*$-module actions of the group algebra $\CC[\Gamma]$.

\begin{example}[The module $L^2_\Gamma(X, E)$]\label{ex:Roe.module}
We now describe a construction due to Roe, cf.\ pp.\ 243 of \cite{Roe} of a Hilbert $C^*_r(\Gamma)$-module $L^2_\Gamma(X, E)$ associated to the action of the group $\Gamma$ on $X$. As a space, $L^2_\Gamma(X, E)$ is a dense subset of $L^2(X, E)$.

On $L^2(X, E)$, we consider the pairing
\begin{equation} \label{eqn:pairing}
  (v | w) = \sum_{\gamma \in \Gamma} \langle \gamma^*v,  w\rangle_{L^2(X, E)} \gamma,
\end{equation}
a priori taking values in the space $\ell^\infty(\Gamma)$ of bounded sequences indexed by $\Gamma$.
Now starting with \emph{compactly supported} $v, w \in L^2_c(X, E)$, we see that $(v | w) \in \CC[\Gamma]$ with $(v | v)$ positive (in $C^*_r(\Gamma)$), and that the pairing satisfies \eqref{ScalarProductProperties}, so $L^2_c(X, E)$ becomes a pre-Hilbert $\CC(\Gamma)$-module (Lemma 2.1 of \cite{Roe}). We then define
\begin{equation} \label{DefL2Gamma}
 L^2_\Gamma(X, E) = ~~\begin{aligned} &\text{completion of}~L^2_c(X, E)~  \text{with respect} \\ &\text{to the norm}~ \|w\|_\Gamma := \|(w|w)\|_{\mathrm{op}}^{1/2}.
 \end{aligned}
\end{equation}
One shows that the right action of $C^*_r(\Gamma)$ preserves $L^2_\Gamma(X, E)$ and that the bracket, restricted to $L^2_\Gamma(X, E)$, takes values in $C^*_r(\Gamma) \subset \ell^\infty(\Gamma)$. This turns $L^2_\Gamma(X, E)$ into a Hilbert $C^*_r(\Gamma)$-module.

We remark that if  $u \in \ell^2(\Gamma)$ is the element which is one at the unit element and zero otherwise, then
\begin{equation*}
  \|(w, w)u\|_{\ell^2(\Gamma)} = \sum_{\gamma \in \Gamma} |\langle w, \gamma^*w\rangle_{L^2(X, E)}|^2 \geq \|w\|_{L^2(X, E)}^2 \|u\|_{\ell^2(\Gamma)},
\end{equation*}
hence $\|w\|_{\Gamma}^2 = \|(w, w)\|_{\mathrm{op}} \geq \|w\|_{L^2(X, E)}^2$ which shows that in fact, the completion \eqref{DefL2Gamma} can be realized as a subspace of $L^2(X, E)$. Therefore, we always stipulate $L^2_\Gamma(X, E) \subseteq L^2(X, E)$.
\end{example}

Given a fundamental domain $\cF$, we see that $\Phi_\cF$ defined in \eqref{IsoPhi} restricts to a vector space isomorphism
\begin{equation*}
  \Phi_\cF : L^2_c(X, E) \longrightarrow L^2(\cF, E) \otimes \CC[\Gamma].
\end{equation*}
Moreover, the bracket \eqref{eqn:pairing} can be equivalently written as
\begin{equation*}
 (v | w) = \sum_{\gamma, \rho} \langle \rho^*v, \gamma^*w\rangle_{L^2(\cF, E)} \rho \gamma^{-1} ,
\end{equation*}
which shows that $\Phi_\cF$ maps the bracket of $L^2_c(X, E)$ to the bracket \eqref{eqn:inner.product.definition}. Extending this observation by continuity, one has the following result.

\begin{proposition} \label{PropIsoToStandardModule}
After choosing a  fundamental domain $\cF$ for the $\Gamma$-action, $\Phi_\cF$ restricts to an isomorphism of Hilbert $C^*_r(\Gamma)$-modules
\begin{equation*}
  \Phi_\cF: L^2_\Gamma(X, E) \longrightarrow L^2(\cF, E) \otimes C^*_r(\Gamma),
\end{equation*}
where on the right hand side, we have the Hilbert $C^*_r(\Gamma)$-module from Example~\ref{L2Module}.
\end{proposition}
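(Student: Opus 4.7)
The strategy is to extend the already-observed properties from the dense subspace $L^2_c(X,E)$ by continuity. The text just before the proposition records three facts about the restriction $\Phi_\cF|_{L^2_c(X,E)} : L^2_c(X,E) \to L^2(\cF,E) \otimes \CC[\Gamma]$: it is a vector space isomorphism; it is $\CC[\Gamma]$-equivariant (this is essentially built into the definition \eqref{IsoPhi}); and it intertwines the Roe bracket \eqref{eqn:pairing} with the inner product \eqref{eqn:inner.product.definition}. Since the module norms on both sides are defined via the formula $\|x\| = \|(x|x)\|_{\mathrm{op}}^{1/2}$ applied to the respective brackets, $\Phi_\cF$ is automatically an isometry between these two dense subspaces equipped with their Hilbert module norms.

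The first step is to invoke the universal property of completion. By definition, $L^2_\Gamma(X,E)$ is the completion of $L^2_c(X,E)$ in the norm $\|\cdot\|_\Gamma$, so an isometry out of $L^2_c(X,E)$ extends uniquely and continuously to $L^2_\Gamma(X,E)$. This gives an isometric embedding
\begin{equation*}
\Phi_\cF : L^2_\Gamma(X,E) \longrightarrow L^2(\cF,E) \otimes C^*_r(\Gamma).
\end{equation*}
To see that the image is all of $L^2(\cF,E) \otimes C^*_r(\Gamma)$, it suffices to show that the image of $L^2_c(X,E)$, namely $L^2(\cF,E) \otimes_{\mathrm{alg}} \CC[\Gamma]$, is dense in $L^2(\cF,E) \otimes C^*_r(\Gamma)$ in the Hilbert module norm. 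But this is easy: for $h \in L^2(\cF,E)$ and $a \in C^*_r(\Gamma)$, pick $a_n \in \CC[\Gamma]$ with $a_n \to a$ in operator norm (this is possible since $\CC[\Gamma]$ is dense in $C^*_r(\Gamma)$); then $\|h \otimes a - h \otimes a_n\|^2 = \|h\|_H^2\, \|(a-a_n)^*(a-a_n)\|_{\mathrm{op}} = \|h\|_H^2\, \|a - a_n\|_{C^*_r(\Gamma)}^2 \to 0$. Finite sums of such elementary tensors lie in $L^2(\cF,E) \otimes_{\mathrm{alg}} C^*_r(\Gamma)$, which is dense in the completion by definition, so a two-step approximation gives what is needed.

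Finally, the two remaining structural properties --- right $C^*_r(\Gamma)$-linearity and preservation of the bracket --- extend from the dense subspace by continuity. Specifically, both $(x,a) \mapsto x \cdot a$ and $(x,y) \mapsto (x|y)$ are jointly continuous in the Hilbert module norms (the first by $\|xa\| \leq \|x\|\|a\|$, the second by Cauchy--Schwarz \eqref{CSU}), so the identities $\Phi_\cF(w \cdot a) = \Phi_\cF(w) \cdot a$ and $(\Phi_\cF v | \Phi_\cF w) = (v | w)$ persist on the completion once they hold on $L^2_c(X,E)$.

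The only step with any substance is the density argument in the second paragraph; everything else is formal. Once that density is in hand, the proposition reduces to the soft principle that a bracket-preserving bijection between dense sub-pre-Hilbert-modules of two Hilbert $A$-modules extends uniquely to an isomorphism of Hilbert $A$-modules.
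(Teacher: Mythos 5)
Your proposal is correct and follows essentially the same route the paper takes: the paper simply records that $\Phi_\cF$ intertwines the bracket \eqref{eqn:pairing} with \eqref{eqn:inner.product.definition} on the dense subspace $L^2_c(X,E)$ and then asserts the proposition by ``extending by continuity.'' You have spelled out what that continuity argument actually involves --- the isometric extension to the completion, the density of $L^2(\cF,E)\otimes_{\mathrm{alg}}\CC[\Gamma]$ in the Hilbert-module norm to get surjectivity, and the joint-continuity facts that propagate the module action and bracket preservation --- but there is no substantive divergence from the paper's implicit argument.
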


\begin{remark} \label{RemarkDecay}
It is easy to see that if $w \in L^2(X, E)$ has the decay condition
\begin{equation}
  \sum_{\gamma \in \Gamma} \|\gamma^*w\|_{L^2(\cF, E)} < \infty\label{eqn:Kuchment.decay}
\end{equation}
with respect to some fundamental domain $\cF$, then $w \in L^2_\Gamma(X, E)$. Condition Eq.~\eqref{eqn:Kuchment.decay} was considered in \cite{Kuchment}, for $X=\RR^d, \Gamma=\ZZ^d$, and trivial $E$, in connection with Chern class obstructions to localised Wannier bases (see \S \ref{sec:physics} for more on Wannier bases). Also in this abelian setting with $d\leq 3$, another decay condition
\begin{equation}
\sum_{j=1}^n\int_{\RR^d}|x|^{2s}|w_j(x)|^2\;<\infty\label{eqn:power.decay}
\end{equation}
was considered in \cite{Monaco} for the related problem of finding the \emph{optimal} decay of a (composite) Wannier basis for a possibly topologically non-trivial Bloch bundle. Amongst other results, it was found that any $s<1$ could always be achieved whether or not the Bloch bundle is topologically trivial; the threshold case $s=1$, corresponding to $H^1$-trivialisability of the Bloch bundle, is attainable by some Wannier basis exactly when the Bloch bundle is topologically trivialisable. Unfortunately, it seems that there is no simple relationship between the decay condition \eqref{eqn:power.decay} and membership in our $L^2_{\ZZ^d}(\RR^d,E)$.
\end{remark}

\subsection{Comparison with literature}\label{rem:Gruber.comparison}
Roe constructed the Hilbert $C^*_r(\Gamma)$-module $L^2_\Gamma(X,E)$ of Example \ref{ex:Roe.module} for the mildly less general case where $E$ is a trivial line bundle, but with $X$ allowed to be a general proper metric space with a proper cocompact isometric $\Gamma$-action \cite{Roe}. One motivation of \cite{Roe} was to connect \emph{coarse geometry} to the \emph{analytic assembly map} in the arena of the Baum--Connes conjectures; in particular, Lemma 2.3 of \cite{Roe} provides an identification of the compact operators on this Hilbert $C^*$-module with the \emph{equivariant Roe algebra} $C^*(X,\Gamma)$ (completion in $\cL(L^2(X))$ of the $\Gamma$-equivariant locally compact, finite propagation operators). In a subsequent work of the authors \cite{Ludewig-Thiang}, the algebra $C^*(X,\Gamma)$ plays a prominent role in proving, via coarse index methods, the robustness of edge states for Chern topological insulators of generic shapes $U\subset X=\RR^2$. 

One important role of the bundle $E$ is to allow for gauge (associated) bundles, especially in the context of electromagnetism (with ${\rm U}(1)$ as local gauge group) and quantum Hall Hamiltonians/magnetic Schr\"{o}dinger operators. Here, the $\Gamma$-action on $X$ is allowed to be lifted to a \emph{projective} unitary representation on $L^2(X,E)$, as encoded by a ${\rm U}(1)$-valued group 2-cocycle $\sigma$. For free, properly discontinuous, cocompact $\Gamma$-actions on a Riemannian manifold $X$, Gruber (\S V.B of \cite{Gruber}) made constructions almost identical to Example \ref{ex:Roe.module}, modified slightly to the projective-linear context, to produce inside $L^2(X,E)$ a standard Hilbert $C^*$-module over the \emph{twisted} group algebra $C^*_r(\Gamma,\sigma)$. Note that $C^*_r(\Gamma,\sigma)$ is generally noncommutative, even if $\Gamma$ is an abelian group such as $\ZZ^d$ (see \S \ref{sec:twisted.algebras}). The paper \cite{Gruber} uses Hilbert $C^*$-module language to study noncommutative Bloch theory (also studied in \cite{BS,MM}), with the intention to understanding some spectral properties of abstract $\mathcal{A}$-elliptic operators on Hilbert $\mathcal{A}$-modules, with $\mathcal{A}$ a general unital $C^*$-algebra. 

The commutative Bloch theory setting also applies to $C^*_r(\ZZ^d,\sigma)$ under a so-called \emph{rational flux} condition --- for such rational cocycles $\sigma$, the algebra $C^*_r(\ZZ^d,\sigma)$ is Morita equivalent to $C^*_r(\ZZ^d)$. For example, the discrete magnetic translations on $X=\RR^d$ corresponding to rational fluxes per unit cell will generate such a $C^*_r(\ZZ^d,\sigma)$. A suitable superlattice (a subgroup of $\ZZ^d$ whose fundamental domain has integer flux though it) will act linearly rather than projective linearly, and commutative Bloch theory can be carried out with respect to this superlattice. In this context, the Bloch--Floquet transform converts $L^2(\RR^d)$ into a Hilbert bundle over $\TT^d=\widehat{\ZZ}^d$, and topological aspects of its spectral subbundles (sometimes called \emph{Bloch bundles}) as modules over $C(\TT^d)$ is studied in \cite{DNPanati}, especially \S7 therein. The availability of a \emph{bona fide} topological space $\TT^d$ whose bundle theory mirrors the module theory of $C^*_r(\Gamma)$ is very useful for physics applications, including finding the strongest statement of the so-called localisation-topology dichotomy/correspondence (e.g.\ \cite{Monaco}). 

Our focus, however, is on the setting in which no such topological space exists for $C^*_r(\Gamma)$ or $C^*_r(\Gamma,\sigma)$, yet we can state a reasonable form of the localisation-\emph{noncommutative} topology dichotomy/correspondence (e.g.\ Theorem \ref{MainTheoremIntro}, and examples of applications in \S \ref{sec:physics}).

\section{Finitely generated projective pre-Hilbert modules}\label{sec:fgp.pre.Hilbert}

For a unital $C^*$-algebra $A$, it is known, cf.\ Thm.~15.3.8 of \cite{WO} that there is a correspondence between finitely generated projective (f.g.p.) $A$-modules and projections in $\cK(H)\otimes A$, the algebra of compact adjointable module maps on the Hilbert $A$-module $H \otimes A\cong \ell^2(A)$, c.f.~Example~\ref{L2Module}. Explicitly, this means that given a projection $p\in\cK(H)\otimes A$, one has that the Hilbert $A$-submodule $p(H \otimes A)$ is isomorphic to $q(A^n)$ for some $q\in \mathrm{M}_n(A), n\in\NN$.

For the dense subalgebra $H^\infty(\Gamma) \subseteq C^*_r(\Gamma)$, we can achieve a similar result for pre-Hilbert $H^\infty(\Gamma)$-modules. 

\begin{lemma}\label{LemmaSpectralProjective}
Let $\Gamma$ be a finitely generated group of polynomial growth.
Let $p$ be a projection in  $H^\infty(\Gamma, \cK(H)) \subset \cK(H)\otimes C^*_r(\Gamma)$, where $H$ is a separable Hilbert space. Then $p(H^\infty(\Gamma, H))$ is isomorphic as a pre-Hilbert $H^\infty(\Gamma)$-module to a f.g.p.\ $H^\infty(\Gamma)$-module.
\end{lemma}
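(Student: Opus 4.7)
The plan is to imitate the classical argument that a projection in the compact operators on a Hilbert module yields a finitely generated projective submodule (cf.\ \cite[Thm.~15.3.8]{WO}), but to execute every step within the smooth subalgebras $H^\infty(\Gamma,\cK(H))$ and $H^\infty(\Gamma,\cL(H))$. The target is an explicit partial isometry $V$ with $V^*V=p$ and $VV^*=q$ for some projection $q\in M_n(H^\infty(\Gamma))$, which will implement a module isomorphism $p(H^\infty(\Gamma,H))\cong q\,H^\infty(\Gamma)^n$.

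First I would use that $p\in\cK(H)\otimes C^*_r(\Gamma)$ is compact in the Hilbert module sense: approximating $p$ by $(e\otimes 1_\Gamma)p$ as $e$ ranges over the finite rank projections of $\cK(H)$, one picks orthonormal $e_1,\dots,e_n\in H$ such that $e:=\sum_{i=1}^n|e_i\rangle\langle e_i|$, viewed as the constant sequence $e\cdot 1_\Gamma\in H^\infty(\Gamma,\cK(H))$, satisfies $\|(1-e)p\|<1/2$. Setting $T:=pep+(1-p)$ then yields an element that commutes with $p$ (since $pT=Tp=pep$), differs from $1$ by $pep-p\in H^\infty(\Gamma,\cK(H))$ of operator norm below $1/2$, and is therefore invertible in the ambient $C^*$-algebra $\cL(H)\otimes C^*_r(\Gamma)$. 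Applying Prop.~\ref{PropSpectral} to the unital algebra $\cL(H)$, the Fr\'echet subalgebra $H^\infty(\Gamma,\cL(H))$ is spectral and closed under holomorphic functional calculus, so $S:=T^{-1/2}$ lies in $H^\infty(\Gamma,\cL(H))$ and commutes with $p$. Since $H^\infty(\Gamma,\cK(H))$ is a closed ideal of $H^\infty(\Gamma,\cL(H))$ and the holomorphic function $(1+z)^{-1/2}-1$ vanishes at the origin, one further obtains $S-1\in H^\infty(\Gamma,\cK(H))$, and hence $pS^2\in H^\infty(\Gamma,\cK(H))$.

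Next I would assemble the partial isometry. Define the adjointable evaluation map $\Phi\colon H\otimes C^*_r(\Gamma)\to C^*_r(\Gamma)^n$ by $\Phi(v)_i:=\langle e_i,v\rangle$, so that $\Phi^*\Phi=e$ and $\Phi$ manifestly restricts to $H^\infty(\Gamma,H)\to H^\infty(\Gamma)^n$. Set $V:=\Phi\circ S\circ p$, whence $V^*=pS\Phi^*$. Using $[S,p]=0$, $pep=pTp$, and $STS=S^2T=1$, a direct computation yields $V^*V=pSeSp=S(pTp)S=p(STS)p=p$, while $VV^*=\Phi\,pS^2\,\Phi^*=:q$ is automatically a projection since $V$ is a partial isometry. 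Expanding the coefficients $q_{ij}=\sum_{\gamma\in\Gamma}\langle e_i,(pS^2)_\gamma\,e_j\rangle_H\,\gamma$ and invoking $pS^2\in H^\infty(\Gamma,\cK(H))$ shows $q\in M_n(H^\infty(\Gamma))$. Since $V$ and $V^*$ preserve the smooth subspaces, they restrict to mutually inverse $H^\infty(\Gamma)$-linear isometries between $p(H^\infty(\Gamma,H))$ and $q\,H^\infty(\Gamma)^n$, exhibiting the former as isomorphic to the visibly f.g.p.\ pre-Hilbert $H^\infty(\Gamma)$-module $q\,H^\infty(\Gamma)^n$.

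The main obstacle is producing $S=T^{-1/2}$ inside the smooth subalgebra. This is precisely where the polynomial growth hypothesis is essential: Prop.~\ref{PropSpectral} handles an arbitrary unital coefficient algebra only under polynomial growth, whereas under mere property (RD) the analogous spectral invariance result is known only for finite-dimensional coefficient algebras (cf.\ \cite{Ji}), which would not accommodate $\cL(H)$.
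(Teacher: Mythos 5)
Your proof is correct and rests on the same essential pillar as the paper's argument, namely Prop.~\ref{PropSpectral} (spectral invariance of $H^\infty(\Gamma,A)$), but the algebraic route is genuinely different. The paper follows [WO, Lemma~15.4.1]: it truncates $p$ to $\pi_n p\pi_n$, uses holomorphic functional calculus to produce a nearby projection $q\in M_n(H^\infty(\Gamma))$, forms the invertible $z=(2q-1)(2p-1)+1$, and uses the polar decomposition of $z$ (taken in the smooth subalgebra) to build a unitary conjugating $p$ to $q$. You instead construct a partial isometry $V=\Phi S p$ directly (with $S=T^{-1/2}$, $T=pep+(1-p)$) satisfying $V^*V=p$ and $VV^*=q\in M_n(H^\infty(\Gamma))$; this is the Murray--von Neumann style argument rather than the unitary-conjugation one, and it avoids the intermediate step of perturbing $p$ to a nearby projection. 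Both buy the same thing, and both hinge on being able to take inverses and square roots inside the Fr\'echet subalgebra, which is where polynomial growth enters. Two small points worth making explicit in your write-up: (1)~the paper's Prop.~\ref{PropSpectral} is stated for arbitrary unital $A$, so your invocation with $A=\cL(H)$ is legitimate, though since $T\in 1+H^\infty(\Gamma,\cK(H))$ you could equally well use $A=\cK^+(H)$ as the paper does; (2)~your assertion that $H^\infty(\Gamma,\cK(H))$-multiplication preserves $H^\infty(\Gamma,H)$, needed for $V$ and $V^*$ to restrict, requires the same bilinear product estimates as Lemma~\ref{LemmaPolGrowth} but applied to the module action rather than the algebra product --- true by the same computation, but strictly speaking not literally quoted in the paper.
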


\begin{proof}
By Prop.~\ref{PropSpectral}, $H^\infty(\Gamma, A)$ is spectral and closed under holomorphic functional calculus in $A \otimes C^*_r(\Gamma)$ for $A = \mathrm{M}_n(\CC)$ as well as $A = \cK^+(H)$, the $C^*$-algebra of compact operators on $H$ with a unit adjoined. The proof is then essentially an adaptation of Lemma~15.4.1 of \cite{WO}. 

We may assume that $H = \ell^2(\NN)$ and write $\cK = \cK(H)$. Let $\pi_n \in \cK \otimes C^*_r(\Gamma)$ denote the standard rank $n$ projection in $\cK$, tensored with the identity in $C^*_r(\Gamma)$. As $\{\pi_n\}_{n\in\NN}$ is an approximate identity for $\cK\otimes C^*_r(\Gamma)$, after writing $q_n\coloneqq \pi_np \pi_n$ for the truncation (no longer a projection in general but still self-adjoint), we can arrange for 
\begin{equation*}
\|p-q_n\|\ <\epsilon\leq\frac{1}{12}. 
\end{equation*}
Furthermore, $q_n$ is positive with $\|q_n\|\leq 1$, so that 
\begin{equation*}
\|q_n^2-q_n\|\leq \|q_n(q_n-p)\| + \|(q_n-p)p\| + \|p-q_n\| <3\epsilon\leq\frac{1}{4}.
\end{equation*}
Note that $q_n$ can be regarded as an element of ${\rm M}_n(H^\infty(\Gamma))$. If $\lambda\in\RR$ is in the spectrum of $q_n$ (which can be taken in ${\rm M}_n(H^\infty(\Gamma))$ or in ${\rm M}_n(C^*_r(\Gamma))$ since  the first is spectral in the latter), then by the above calculation, $0\leq \lambda-\lambda^2<\epsilon\leq\frac{1}{4}$. This implies that
\begin{equation*}
\lambda\in \bigl[0,\tfrac{1}{2}-\delta\bigr)\cup\bigl(\tfrac{1}{2}+\delta,1\bigr], \quad \text{where} \quad \delta=\frac{1}{2}\sqrt{1-4\epsilon},
\end{equation*}
so $q$ has a spectral gap at $\tfrac{1}{2}$.
The function $f = f(\mu)$ which is 0 to the left of $\mu = \frac{1}{2}$ and 1 to the right is therefore holomorphic in a neighbourhood of the spectrum of $q_n$, so we obtain a projection $q=f(q_n)\in {\rm M}_n(H^\infty(\Gamma))\subset H^\infty(\Gamma, \cK) \subset\cK\otimes C^*_r(\Gamma)$ which satisfies $q \leq \pi_n$ by construction (that $q \in \mathrm{M}_n(H^\infty(\Gamma))$ follows because $\mathrm{M}_n(H^\infty(\Gamma))$ is closed under holomorphic functional calculus). Moreover, we have
\begin{equation*}
 \|q_n - q\| = \max_{\mu \in \mathrm{Spec}(q_n)} |\lambda - f(\lambda)| < \frac{1}{2}-\frac{1}{2}\sqrt{1-4\epsilon} \leq \frac{1}{8},
\end{equation*}
by the choice of $\epsilon < \tfrac{1}{12}$. Therefore
\begin{equation*}
\|p-q\|\leq \|p-q_n\| + \|q_n-q\|< \frac{1}{12}+\frac{1}{8} <1,
\end{equation*}
and a standard construction (e.g.\ Prop.\ 5.2.6 in \cite{WO}) gives 
\begin{equation*}
z=(2q-1)(2p-1)+1\in H^\infty(\Gamma, \cK^+) \subset \cK^+\otimes C^*_r(\Gamma)
\end{equation*}
implementing $q z=zp$, where $\cK^+$ is the unital $C^*$-algebra given by adjoining a unit to $\cK$.
 Then $\|z-2\|=\|2(2q-1)(p-q)\|<2$, so that $z$ is invertible in $\cK^+\otimes C^*_r(\Gamma)$. In fact, we have $z^{-1}\in H^\infty(\Gamma, \cK^+)$, as by Prop.~\ref{PropSpectral}, $H^\infty(\Gamma, \cK^+)$ is spectral in $\cK \otimes C^*_r(\Gamma)$. Note that $zp=q z$ gives $z^*q=p z^*$ and also $p(z^*z)=(z^*z)p$, so with $u=z(z^*z)^{-1/2}$ being the unitary in the polar decomposition of $z$, which again exists by Prop.~\ref{PropSpectral}, we obtain a unitary equivalence $up u^{-1}=q$ in $H^\infty(\Gamma, \cK^+)$. 
 
 The unitary $u \in H^\infty(\Gamma, \cK^+)$ constructed above gives an isomorphism
 \begin{equation*}
   \Phi: p(H^\infty(\Gamma, H)) \longrightarrow q(H^\infty(\Gamma)^n) \subset q\bigl(H^\infty(\Gamma, H)\bigr), \quad x \longmapsto u x.
 \end{equation*}
 Note that since $x \in p(H^\infty(\Gamma, H))$, we have $x = px = u^{-1}qux$, hence $\Phi(x) = qux = q(qux) \in q(H^\infty(\Gamma)^n)$. Clearly, $\Phi$ commutes with right multiplication by $H^\infty(\Gamma)$, hence is a module map. Moreover, since $u$ is unitary, we have
 \begin{equation*}
   \bigl(\Phi(x)\bigl|\Phi(y)\bigr) = (ux|uy) = (x|u^*u y) = (x|y),
 \end{equation*}
 hence $\Phi$ preserves the inner product. This finishes the proof.
 \end{proof}

Of course, Lemma \ref{LemmaSpectralProjective} also implies that $p(H \otimes C^*_r(\Gamma))$ is unitarily isomorphic as a Hilbert $C^*_r(\Gamma)$-module to a direct summand of $C^*_r(\Gamma)^n$. A priori however, the minimal required $n$ could be smaller in the $C^*_r(\Gamma)$-module case than in the $H^\infty(\Gamma)$-module case. The next lemma shows that this is not the case.

\begin{lemma}\label{projmodlemma}
Let $p$ be a projection in $H^\infty(\Gamma, \cK(H)) \subset \cK(H)\otimes C^*_r(\Gamma)$ as in Lemma \ref{LemmaSpectralProjective}.
If $p(H \otimes C^*_r(\Gamma))$ is isomorphic as a Hilbert $C^*_r(\Gamma)$-module to a direct summand of $C^*_r(\Gamma)^n$ for some $n\in\NN$, then $p(H^\infty(\Gamma, H))$ is isomorphic as a pre-Hilbert $H^\infty(\Gamma)$-module to a direct summand of $H^\infty(\Gamma)^n$ for the same $n$.
\end{lemma}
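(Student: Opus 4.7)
The plan is to lift the Murray--von Neumann equivalence provided by the hypothesis from $\cK(H)\otimes C^*_r(\Gamma)$ down into the smooth subalgebra $H^\infty(\Gamma, \cK(H))$, using the spectrality results of Prop.~\ref{PropSpectral} as the main technical input. Identifying $\mathrm{M}_n(C^*_r(\Gamma))$ with the corner of $\cK(H)\otimes C^*_r(\Gamma)$ cut out by the standard rank-$n$ projection $\pi_n$, the hypothesis supplies a projection $q \in \mathrm{M}_n(C^*_r(\Gamma))$ and a partial isometry $V \in \cK(H)\otimes C^*_r(\Gamma)$ with $V^*V = p$ and $VV^* = q$; the goal is to produce a projection $q_* \in \mathrm{M}_n(H^\infty(\Gamma))$ and a partial isometry $U \in H^\infty(\Gamma, \cK(H))$ with $U^*U = p$ and $UU^* = q_*$, from which the map $x \mapsto Ux$ will give the desired isomorphism.

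The first step is to replace $q$ by a smooth projection of the same size. By density of $\mathrm{M}_n(H^\infty(\Gamma))$ in $\mathrm{M}_n(C^*_r(\Gamma))$, approximate $q$ in operator norm by a self-adjoint $a \in \mathrm{M}_n(H^\infty(\Gamma))$ with $\|a^2-a\|$ arbitrarily small, reproducing the spectral-gap argument already used in Lemma~\ref{LemmaSpectralProjective}. Applying $\chi_{(1/2,\infty)}$ through holomorphic functional calculus---which keeps the output inside $\mathrm{M}_n(H^\infty(\Gamma))$ by Prop.~\ref{PropSpectral} with $A = \mathrm{M}_n(\CC)$---produces a projection $q_* \in \mathrm{M}_n(H^\infty(\Gamma))$ with $\|q-q_*\|$ as small as desired. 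For close enough approximations, $q$ and $q_*$ are Murray--von Neumann equivalent in $\mathrm{M}_n(C^*_r(\Gamma))$ via the standard $z=(2q-1)(2q_*-1)+1$ construction, so after composing $V$ with the resulting unitary we may assume $VV^* = q_*$ at the outset.

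The second step is to build the partial isometry $U$ inside the smooth algebra. Approximate $V$ in operator norm by some $W \in H^\infty(\Gamma, \cK(H))$ and set $W' := q_* W p \in H^\infty(\Gamma, \cK(H))$; since $V = q_* V p$, the element $W'$ remains close to $V$, hence $W'^*W'$ is close to $V^*V = p$. Then $x := W'^*W' + (1-p)$ is close to the unit in $(\cK(H)\otimes C^*_r(\Gamma))^+$, hence invertible, with $x^{-1/2} \in H^\infty(\Gamma, \cK(H)^+)$ by Prop.~\ref{PropSpectral}. Define $U := W' x^{-1/2}$. A direct calculation using the commutation $xp = px = W'^*W'$ gives $U^*U = p$, and $U = q_* U p$ yields $UU^* \leq q_*$. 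On the other hand $U$ remains close to $V$ in operator norm, hence $UU^*$ is close to $VV^* = q_*$; by taking $W$ close enough to $V$ we can force $\|q_* - UU^*\| < 1$. Since the difference $q_* - UU^*$ is a projection and any nonzero projection has norm one, this forces $q_* = UU^*$, producing the required $U$.

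The main obstacle is the second step: propagating an operator-norm approximation $W \to V$ through a chain of nonlinear manipulations (forming $W'^*W'$, extracting $x^{-1/2}$ via holomorphic functional calculus, forming $UU^*$) while keeping every intermediate object inside $H^\infty(\Gamma, \cK(H)^+)$. Prop.~\ref{PropSpectral} is exactly what allows the functional calculus step to remain inside the smooth algebra, and the rigidity of projections---every nonzero projection has norm one---is what converts the approximate inequality $\|UU^* - q_*\| < 1$ into the exact equality $UU^* = q_*$ needed for the conclusion.
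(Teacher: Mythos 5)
Your argument is correct and takes a genuinely different route from the paper's. The paper first invokes Lemma~\ref{LemmaSpectralProjective} to replace $p$ by a smooth matrix projection $q$ (of some possibly large size), then hypothesizes a unitary $y$ conjugating $q$ into a smaller corner, smooths $y$ and the corner projection $\check{q}$ separately, and finally invokes the rigidity of nearby projections to obtain a unitary equivalence inside $\mathrm{M}_n(H^\infty(\Gamma))$. You bypass the matrix-reduction step entirely: you extract directly from the hypothesis the partial isometry $V$ implementing the module isomorphism, smooth the target projection $q$ to $q_*$ via the holomorphic functional calculus made available by Prop.~\ref{PropSpectral}, smooth and renormalize $V$ to a genuine partial isometry $U \in H^\infty(\Gamma, \cK(H))$ with $U^*U = p$, and close the argument by the rigidity of projections: $q_* - UU^*$ is a projection of norm strictly less than one, hence zero. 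Both proofs rest on the same essentials --- density of the smooth algebra in the $C^*$-algebra, spectral invariance, and stability of projections and unitaries under small perturbations --- but your version avoids the intermediate step of producing a conjugating \emph{unitary} for matrix projections (a step the paper only states hypothetically with ``suppose a unitary $y$ implements a further reduction''), and constructs the smooth partial isometry in one pass, which is arguably cleaner. One very minor slip: the intertwiner you want for conjugating $q$ to $q_*$ is $z = (2q_*-1)(2q-1)+1$, which satisfies $zq = q_*z$, rather than $(2q-1)(2q_*-1)+1$; this is immaterial since either choice yields a unitary implementing the Murray--von Neumann equivalence, only with the roles of $q$ and $q_*$ interchanged.
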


In particular, $p(H^\infty(\Gamma, H))$ is (isomorphic to) a pre-Hilbert $H^\infty(\Gamma)$-module of rank $n$ if and only if $p(H \otimes C^*_r(\Gamma))$ is (isomorphic to) a free Hilbert $C^*_r(\Gamma)$-module of rank $n$. 

\begin{proof}
Lemma \ref{LemmaSpectralProjective} gives $p(H^\infty(\Gamma, H))\cong q(H^\infty(\Gamma)^n)$ and $p(H \otimes C^*_r(\Gamma))\cong q(C^*_r(\Gamma)^n)$ for some $n\in\NN$, with $q$ a projection in ${\rm M}_n(H^\infty(\Gamma))\subset {\rm M}_n(C^*_r(\Gamma))$. Now suppose a unitary $y\in {\rm M}_n(C^*_r(\Gamma))$ implements a further reduction 
\begin{equation*}
yq y^{-1}=\begin{pmatrix} \check{q} & 0 \\ 0 & 0 \end{pmatrix} \quad \text{with} \quad \check{q}\in {\rm M}_{n-1}(C^*_r(\Gamma)).
\end{equation*} 
Choose a unitary $y_0 \in {\rm M}_n(H^\infty(\Gamma))$ close to $y$ so that $y_0 q y_0^{-1}$ is a projection in $\mathrm{M}_n(H^\infty(\Gamma))$ close to $yq y^{-1}={\rm diag}(\check{q},0)$. 

On the other hand, pick $\check{r}_0=\check{r}_0^*\in {\rm M}_{n-1}(H^\infty(\Gamma))$ (not necessarily a projection) close to $\check{q}\in {\rm M}_{n-1}(C^*_r(\Gamma))$ such that its spectrum avoids the line ${\rm Re}(\mu)=\frac{1}{2}$. 
As in the proof of Lemma~\ref{LemmaSpectralProjective}, holomorphic functional calculus gives a projection $\check{q}_0 :=f(\check{r}_0)\in {\rm M}_{n-1}(H^\infty(\Gamma))$ close to $\check{q}$. 

Now the projections $q'=y_0 q y_0^{-1}$ and $q''={\rm diag}(\check{q}_0,0)$ are in ${\rm M}_n(H^\infty(\Gamma))$ and are both close to ${\rm diag}(\check{q},0)$, hence close to each other; therefore, they are unitarily equivalent in ${\rm M}_n(H^\infty(\Gamma))$ (take the polar decomposition of the invertible $(2q''-1)(2q'-1)+1\in {\rm M}_n(H^\infty(\Gamma))$). This gives a unitary $u \in \mathrm{M}_n(H^\infty(\Gamma))$ such that
\begin{equation*}
  u (y_0 q y_0^{-1})u^{-1} = \begin{pmatrix} \check{q}_0 & 0 \\ 0 & 0 \end{pmatrix},
\end{equation*}
which finishes the proof.
\end{proof}

The above results combine to the following corollary.

\begin{corollary} \label{CorollaryOfProjModLemma}
Let $p$ be a projection in $H^\infty(\Gamma, \cK(H)) \subset \cK(H)\otimes C^*_r(\Gamma)$ as in Lemma \ref{LemmaSpectralProjective}, and let $P = p(H \otimes C^*_r(\Gamma))$ be the corresponding $C^*_r(\Gamma)$-module. Then $P$ is a finitely generated and projective Hilbert $C^*_r(\Gamma)$-module. Moreover, for each $n \in \NN$, there exists a Hilbert $C^*_r(\Gamma)$-module $Q$ such that
  \begin{equation*}
    P \oplus Q \cong C^*_r(\Gamma)^n
  \end{equation*}
  if and only if there exists a pre-Hilbert $H^\infty(\Gamma)$-module $Q^\prime$ such that
  \begin{equation*}
    \bigl(P \cap H^\infty(\Gamma, H)\bigr) \oplus Q^\prime \cong H^\infty(\Gamma)^n.
  \end{equation*}
\end{corollary}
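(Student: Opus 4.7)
The plan is to assemble this corollary directly from Lemmas \ref{LemmaSpectralProjective} and \ref{projmodlemma}, after clarifying the intersection $P\cap H^\infty(\Gamma,H)$. My first step is to verify the identification $P\cap H^\infty(\Gamma,H)=p(H^\infty(\Gamma,H))$. For this I need that $p\in H^\infty(\Gamma,\cK(H))$ leaves $H^\infty(\Gamma,H)$ invariant, i.e.\ that $H^\infty(\Gamma,\cK(H))$ acts continuously on $H^\infty(\Gamma,H)$; this follows by running the same Cauchy--Schwarz / triangle-inequality argument as in the proof of Lemma \ref{LemmaPolGrowth}, now with the $\cK(H)$-on-$H$ action in place of algebra multiplication. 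Once this is in hand, the reverse inclusion is immediate from $p^2=p$: any $x\in P\cap H^\infty(\Gamma,H)$ satisfies $x=px\in p(H^\infty(\Gamma,H))$.

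For the first claim (finite generation and projectivity of $P$), I would apply Lemma \ref{LemmaSpectralProjective} to produce a pre-Hilbert $H^\infty(\Gamma)$-module isomorphism $p(H^\infty(\Gamma,H))\cong q(H^\infty(\Gamma)^n)$ for some projection $q\in \mathrm{M}_n(H^\infty(\Gamma))\subset \mathrm{M}_n(C^*_r(\Gamma))$. Since this isomorphism preserves the $H^\infty(\Gamma)$-valued inner product, it is isometric with respect to the associated $C^*_r(\Gamma)$-norms, hence extends by continuity to a Hilbert $C^*_r(\Gamma)$-module isomorphism $P\cong q(C^*_r(\Gamma)^n)$, a direct summand of the standard free module.

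The equivalence then splits into two directions. ($\Rightarrow$) If $P\oplus Q\cong C^*_r(\Gamma)^n$, then Lemma \ref{projmodlemma} applies directly, with the same $n$, to produce $Q'$ with $p(H^\infty(\Gamma,H))\oplus Q'\cong H^\infty(\Gamma)^n$. ($\Leftarrow$) Given a decomposition $(P\cap H^\infty(\Gamma,H))\oplus Q'\cong H^\infty(\Gamma)^n$, I would pass to Hilbert $C^*_r(\Gamma)$-module completions: the completion of $H^\infty(\Gamma)^n$ is $C^*_r(\Gamma)^n$; that of $p(H^\infty(\Gamma,H))$ is $P$ (since $H^\infty(\Gamma,H)$ is dense in $H\otimes C^*_r(\Gamma)$ and $p$ is bounded, so $p(H^\infty(\Gamma,H))$ is dense in $P$); and $Q'$ completes to some Hilbert $C^*_r(\Gamma)$-module $Q$. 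Extending the pre-Hilbert isomorphism by continuity then yields $P\oplus Q\cong C^*_r(\Gamma)^n$.

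The only genuine subtlety is the first step: the continuous action of $H^\infty(\Gamma,\cK(H))$ on $H^\infty(\Gamma,H)$, which both underwrites the identification $P\cap H^\infty(\Gamma,H)=p(H^\infty(\Gamma,H))$ and ensures the relevant completions are compatible. Everything else is direct invocation of the two preceding lemmas together with a routine continuity extension, which is automatic because the $H^\infty(\Gamma)$-valued inner product refines the ambient $C^*_r(\Gamma)$-valued one.
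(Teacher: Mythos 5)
Your proposal is correct and follows the paper's intended route: the paper simply states that Lemmas \ref{LemmaSpectralProjective} and \ref{projmodlemma} combine to give the corollary (with the $\Leftarrow$ direction being the routine completion argument you describe, and the $\Rightarrow$ direction being exactly Lemma \ref{projmodlemma}). You also correctly flag the one implicit prerequisite --- that $p$ preserves $H^\infty(\Gamma,H)$, so that $P\cap H^\infty(\Gamma,H)=p(H^\infty(\Gamma,H))$ --- which indeed follows from the same Cauchy--Schwarz/triangle-inequality estimates as in Lemma \ref{LemmaPolGrowth}, applied to the $\cK(H)$-on-$H$ module pairing.
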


\section{Rapidly decaying functions and admissible operators}\label{sec:submodules.section}

Let $X$ be a complete Riemannian manifold and let $E$ be a Hermitian vector bundle over $X$. In this section, we consider the following function spaces inside $L^2(X, E)$, whose definitions refer to a choice of basepoint $x_0 \in X$, but are in fact independent of it. Later, we will relate these function spaces under the assumption of \emph{polynomial growth}.

\begin{definition}\label{defn:Schwartzclass}
  For some fixed $x_0 \in X$, we define
  \begin{equation*}
  \begin{aligned}
     L^2_{\mathrm{rd}}(X, E) &:= \bigl\{ w \in L^2(X, E) ~\bigl|~ \forall n \in \NN\; \exists C>0: \|w\|_{L^2(X \setminus B_{R, x_0}, E)} \leq C R^{-n}\bigr\}\\
          \cS(X, E) &:= \bigl\{ w \in C^\infty(X, E) ~\bigl|~ \forall n, m \in \NN\; \exists C>0: |\nabla^mw(x)| \leq C d(x_0, x)^{-n}\bigr\}
   \end{aligned}
  \end{equation*}
The first is the space of $L^2$-sections of $E$ having {\em rapid decay}, while the second one is the space of $E$-valued {\em Schwartz functions} on $X$.
\end{definition}

If we are in our Basic Setup from the introduction, which we will assume from now on, the Milnor--\v{S}varc lemma applies \cite[Thm.~8.37]{GGT}, so $\Gamma$ must be finitely generated, and, as a metric space with its word metric, be quasi-isometric to $X$. In other words, there exist constants $0 < \varepsilon < C$ such that
\begin{equation} \label{MetricEquivalence}
  \varepsilon\bigl(1+ L(\gamma)\bigr) \leq 1+ d(x_0, \gamma x_0) \leq C \bigl(1+ L(\gamma)\bigr)
\end{equation}
for all $\gamma \in \Gamma$ and all $x_0 \in X$. Of course, these constants (as well as the length function) depend on the choice of finite generating set $\mathscr{S}$ for $\Gamma$ (c.f.~Section~\ref{SectionRapidDecay}). We moreover assume that $X$ has {\em polynomial volume growth}, meaning that there exists $s \geq 0$ such that for some (equivalently, for any) $x_0 \in X$, we have
\begin{equation} \label{PolynomialGrowthX}
    \mathrm{vol}\bigl(B_{R, x_0}\bigr) < C (1+R)^s
\end{equation}
for some constant $C>0$ and all $R>0$, where $B_{R, x_0}$ is the ball around $x_0$ of radius $R$. The estimates \eqref{MetricEquivalence} imply that this condition is equivalent to the condition that $\Gamma$ has polynomial growth in the sense of \eqref{PolyGrowth}. The estimate \eqref{PolynomialGrowthX} now implies the inclusions
\begin{equation} \label{Inclusions}
  \cS(X, E) \subseteq L^2_{\mathrm{rd}}(X, E) \subseteq L^2_\Gamma(X, E);
\end{equation}
in fact, the second inclusion follows from the following lemma.

\begin{lemma}
For any bounded fundamental domain, the map $\Phi_{\cF}$ defined in \eqref{IsoPhi} restricts to an isomorphism of pre-Hilbert $H^\infty(\Gamma)$-modules
\begin{equation*}
  \Phi_\cF: L^2_{\mathrm{rd}}(X, E) \stackrel{\cong}{\longrightarrow} H^\infty(\Gamma, H) \subset H \otimes C^*_r(\Gamma).
\end{equation*}
where $H = L^2(\cF, E)$.
\end{lemma}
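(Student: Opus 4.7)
My plan is to reduce the statement to a purely set-theoretic inclusion by invoking the already-established unitary isomorphism $\Phi_\cF: L^2_\Gamma(X,E) \to H\otimes C^*_r(\Gamma)$ from Prop.~\ref{PropIsoToStandardModule}. Since by \eqref{Inclusions} we have $L^2_{\mathrm{rd}}(X,E)\subseteq L^2_\Gamma(X,E)$ and since $H^\infty(\Gamma,H)\subseteq H\otimes C^*_r(\Gamma)$ by the estimates in Section~\ref{SectionRapidDecay}, it will suffice to show that $\Phi_\cF$ restricts to a \emph{bijection} between the dense subspaces $L^2_{\mathrm{rd}}(X,E)$ and $H^\infty(\Gamma,H)$; the $H^\infty(\Gamma)$-module structure and compatibility of inner products will then come for free from the ambient isomorphism.

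The geometric input will be the boundedness of $\cF$: fix $x_0\in X$ and choose $R_0>0$ with $\cF\subset B_{R_0, x_0}$. Since $\Gamma$ acts by isometries, for every $\gamma\in\Gamma$ and $y\in\gamma\cF$ one has $d(\gamma x_0, y)\leq R_0$, so the Milnor--\v{S}varc comparison \eqref{MetricEquivalence} and the triangle inequality give
\begin{equation*}
\varepsilon\bigl(1+L(\gamma)\bigr)-1-R_0 \;\leq\; d(x_0,y)\;\leq\; C\bigl(1+L(\gamma)\bigr)-1+R_0.
\end{equation*}
In particular there exist constants $\alpha,\beta>0$ such that $\gamma\cF\subset X\setminus B_{R,x_0}$ whenever $L(\gamma)>\alpha R$, and $\gamma\cF\cap (X\setminus B_{R,x_0})=\emptyset$ whenever $L(\gamma)<\beta R$.

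For the forward inclusion I will take $w\in L^2_{\mathrm{rd}}(X,E)$ and use isometry invariance to write $\|\gamma^*w|_\cF\|_H^2=\int_{\gamma\cF}|w(y)|^2\,\dd y$. Once $L(\gamma)$ is large enough, this is bounded by $\|w\|_{L^2(X\setminus B_{R(\gamma),x_0},E)}^2$ with $R(\gamma)$ of order $L(\gamma)$, so the rapid-decay hypothesis gives $\|\gamma^*w|_\cF\|_H\leq C_n(1+L(\gamma))^{-n}$ for every $n\in\NN$. Combined with the polynomial growth \eqref{PolyGrowth}, each Hilbert norm \eqref{SobolevNorm} of $\Phi_\cF(w)$ is then finite, so $\Phi_\cF(w)\in H^\infty(\Gamma,H)$. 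Conversely, given $a=\sum_\rho a_\rho\rho\in H^\infty(\Gamma,H)$ I define $w\in L^2(X,E)$ by $w(y):=a_{\gamma^{-1}}(\gamma^{-1}y)$ for $y\in\gamma\cF$; this is well-defined a.e.\ as $X=\bigsqcup_\gamma \gamma\cF$ modulo a null set, and the construction forces $\gamma^*w|_\cF=a_{\gamma^{-1}}$, hence $\Phi_\cF(w)=a$. For $R$ large, the second geometric estimate above bounds
\begin{equation*}
\|w\|_{L^2(X\setminus B_{R,x_0},E)}^2 \;\leq\; \sum_{L(\gamma)\geq \beta R}\|a_{\gamma^{-1}}\|_H^2,
\end{equation*}
and the rapid decay of $a$ together with polynomial growth makes this $\leq C_n R^{-n}$ for every $n$, proving $w\in L^2_{\mathrm{rd}}(X,E)$.

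The only delicate point will be the quantitative translation between tail decay on geodesic balls in $X$ and tail decay on Cayley balls in $\Gamma$; this is effected precisely by \eqref{MetricEquivalence} together with the polynomial volume growth \eqref{PolynomialGrowthX}, and the boundedness of $\cF$ is what keeps the constants $\alpha,\beta$ uniform in $\gamma$. Once these are in hand, the convergence of all the geometric sums involved is routine.
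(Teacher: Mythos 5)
Your proof is correct and follows essentially the same approach as the paper: both directions hinge on using the quasi-isometry \eqref{MetricEquivalence} together with the boundedness of $\cF$ to translate between rapid decay across geodesic balls in $X$ and rapid decay of $\gamma\mapsto\|\gamma^*w|_\cF\|_H$ in the word length, with polynomial growth providing summability (the paper organizes the sum by the shells $\Gamma_m\setminus\Gamma_{m+1}$ rather than by a pointwise bound $\|\gamma^*w|_\cF\|_H\lesssim(1+L(\gamma))^{-n}$, but the content is the same). One small presentational caveat: you cite \eqref{Inclusions} at the outset, but the inclusion $L^2_{\mathrm{rd}}(X,E)\subseteq L^2_\Gamma(X,E)$ is stated in the paper as a \emph{consequence} of this lemma, so you should not assume it; fortunately your actual argument never needs it, since you apply the unitary $\Phi_\cF$ on all of $L^2(X,E)$ and then conclude $L^2_{\mathrm{rd}}(X,E)\subseteq L^2_\Gamma(X,E)$ from the proven equality $\Phi_\cF(L^2_{\mathrm{rd}}(X,E))=H^\infty(\Gamma,H)\subseteq H\otimes C^*_r(\Gamma)$.
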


\begin{proof}
For $R \geq 0$, let $\Gamma_R$ be the set of all $\gamma \in \Gamma$ such that $\gamma\cF$ is not included in $B_{R, x_0}$. Then for $w \in L^2(X, E)$ and any $s \geq 0$, we have the estimate
\begin{equation*}
  \|w\|_{L^2(X \setminus B_{R, x_0}, E)}^2 \leq \sum_{\gamma \in \Gamma_R} \|\gamma^* w\|_{H}^2 \leq \bigl\|\Phi_\cF(w)\bigr\|_{s, H}^2 \cdot \sup_{\gamma \in \Gamma_R} \bigl(1+L(\gamma)\bigr)^{-2s}.
\end{equation*}
Together with \eqref{MetricEquivalence}, this shows that if $\Phi_\cF(w) \in H^\infty(\Gamma, H)$, then $w \in L^2_{\mathrm{rd}}(X, E)$.

To see the converse, note that for $\gamma \in \Gamma \setminus \Gamma_R$, we have $\gamma \cF \subset B_{R, x_0}$, hence \eqref{MetricEquivalence} implies that
\begin{equation*}
  1+L(\gamma) \leq C\bigl(1 + d(x_0, \gamma x_0)\bigr) \leq C^\prime (1 + R);
\end{equation*}
here we assume that $x_0 \in \cF$. Using this, we obtain for any $w \in L^2(X, E)$ and any $s \geq 0$
\begin{equation*}
\begin{aligned}
  \bigl\|\Phi_\cF(w)\bigr\|_{s, H}^2 &\leq \sum_{m=0}^\infty \sum_{\gamma \in \Gamma_{m} \setminus \Gamma_{m+1}} \|w\|_{L^2(\gamma \cF, E)}^2 \bigl(1+L(\gamma)\bigr)^{2s} \\
  &\leq C \sum_{m=1}^\infty \bigl(2+m\bigr)^{2s} \|w\|_{L^2(X \setminus B_{m, x_0}, E)}^2.
  \end{aligned}
\end{equation*}
If now $w \in L^2_{\mathrm{rd}}(X, E)$, this sum converges, showing that $\Phi_\cF(w) \in H^\infty(\Gamma, H)$.
\end{proof}

Wrapping up the results from above, with respect to the choice of a bounded fundamental domain, we have the following diagram, where each of the horizontal arrows, given by the map $\Phi_\cF$, is an isomorphism preserving the respective (pre-)Hilbert module structures.
\begin{equation} \label{IsoPhiDiagram}
  \begin{tikzcd}[column sep = 1.7cm]
    L^2(X, E) \ar[r, "\Phi_{\cF}"] & L^2(\cF, E) \otimes \ell^2(\Gamma) \\
    L^2_\Gamma(X, E) \ar[u, hookrightarrow] \ar[r, "\Phi_{\cF}"] & L^2(\cF, E) \otimes C^*_r(\Gamma) \ar[u, hookrightarrow]  \\
    L^2_{\mathrm{rd}}(X, E) \ar[u, hookrightarrow] \ar[r, "\Phi_{\cF}"] & H^\infty\bigl(\Gamma, L^2(\cF, E)\bigr) \ar[u, hookrightarrow]  \\
    L^2_c(X, E) \ar[u, hookrightarrow] \ar[r, "\Phi_{\cF}"] & L^2(\cF, E) \otimes \CC[\Gamma] \ar[u, hookrightarrow]  \\
   \end{tikzcd}
\end{equation}

\begin{remark}
Under the condition of polynomial growth, $H^\infty(\Gamma)$ is in fact a nuclear space \cite[Thm.~3.1.7]{Jolissaint}, hence the algebraic tensor product with any other Banach space has a unique tensor product topology. In particular, the tensor product $L^2(\cF, E) \otimes H^\infty(\Gamma)$ is unambiguously defined and equal to $H^\infty\bigl(\Gamma, L^2(\cF, E)\bigr)$. However, we will not need this fact.
\end{remark}

\begin{definition} \label{DefAdmissible}
  We say that an operator $A \in \cL(L^2(X, E))$ is {\em admissible} if it has a smooth integral kernel $a(x, y)$ which is rapidly decaying in the sense that for $n, m \in \NN$ and each $s \geq 0$, there exists a constant $C>0$ such that
\begin{equation} \label{KernelEstimates}
  \bigl|\nabla_x^n \nabla_y^m a(x, y)\bigr|  < C \bigl(1 + d(x, y)\bigr)^{-s}
\end{equation}
for all $x, y \in X$. 
\end{definition}

\begin{lemma} \label{LemmaMapsToSchwartz}
If $A \in \cL(L^2(X, E))$ is an admissible operator and $w \in L^2_{\mathrm{rd}}(X, E)$, then $A(w) \in \cS(X, E)$.
\end{lemma}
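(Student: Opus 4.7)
The plan is to write $A(w)(x) = \int_X a(x,y) w(y) \, dy$ and establish smoothness plus rapid decay of all derivatives. Smoothness, together with the formula $\nabla_x^m A(w)(x) = \int_X \nabla_x^m a(x,y) w(y) \, dy$, follows by a standard differentiation-under-the-integral argument: the estimates \eqref{KernelEstimates} with $s$ sufficiently large, combined with polynomial volume growth \eqref{PolynomialGrowthX}, give a $y$-integrable majorant locally uniformly in $x$. So the real content is to show rapid polynomial decay of $\nabla_x^m A(w)(x)$ in $d(x_0, x)$, for every $m$.

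For this, fix $m, n \in \NN$, take $x$ with $R := d(x_0, x)/2 \geq 1$, and split $X = \Omega_1 \cup \Omega_2$, where $\Omega_1 = \{y : d(x,y) \geq R\}$ and $\Omega_2 = \{y : d(x_0, y) \geq R\}$. By the triangle inequality, $d(x_0,x) \leq d(x_0,y) + d(y,x)$ forces every $y$ into at least one of these sets. On $\Omega_2$, use Cauchy--Schwarz:
\begin{equation*}
  \left|\int_{\Omega_2} \nabla_x^m a(x,y)\, w(y)\, dy\right| \leq \|w\|_{L^2(X \setminus B_{R, x_0}, E)} \left(\int_X |\nabla_x^m a(x,y)|^2\, dy\right)^{1/2}.
\end{equation*}
The first factor decays faster than any power of $R$ since $w \in L^2_{\mathrm{rd}}(X, E)$; the second factor will be bounded uniformly in $x$ (see below). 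On $\Omega_1$, apply the kernel estimate \eqref{KernelEstimates} with a large exponent $s$ and Cauchy--Schwarz in the opposite order:
\begin{equation*}
  \left|\int_{\Omega_1} \nabla_x^m a(x,y)\, w(y)\, dy\right| \leq \|w\|_{L^2(X,E)} \left(\int_{d(x,y) \geq R} \bigl(1+d(x,y)\bigr)^{-2s}\, dy\right)^{1/2}.
\end{equation*}

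The main technical step -- and the only place polynomial volume growth enters non-trivially -- is bounding the integral $\int_{d(x,y) \geq R}(1+d(x,y))^{-2s} dy$ (and likewise $\int_X(1+d(x,y))^{-2s}dy$ appearing in the $\Omega_2$ estimate). The plan is to decompose the domain into annuli $B_{k+1, x} \setminus B_{k, x}$ for $k \geq R$ (resp.\ $k \geq 0$); by \eqref{PolynomialGrowthX} each annulus has volume bounded by $C(1+k)^{s_0}$ for some fixed $s_0$, so the integral is dominated by $C\sum_{k \geq R}(1+k)^{s_0 - 2s}$. Choosing $s$ large enough that $2s - s_0 > n + 1$ gives a bound of order $R^{-n}$ on $\Omega_1$ and a bound uniform in $x$ for the $\Omega_2$ factor. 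Combining the two estimates yields $|\nabla_x^m A(w)(x)| \leq C_{m,n} d(x_0,x)^{-n}$, proving $A(w) \in \cS(X, E)$.

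The only real obstacle is the bookkeeping in the annular decomposition, where one must choose the decay exponent $s$ in \eqref{KernelEstimates} large enough relative to both the target decay rate $n$ and the volume growth exponent $s_0$ from \eqref{PolynomialGrowthX}; since Definition \ref{DefAdmissible} allows $s$ to be taken arbitrarily large, this presents no actual difficulty.
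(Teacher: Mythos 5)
Your proof is correct, and it is essentially the same idea as the paper's argument — both hinge on the triangle inequality relating $d(x_0,x)$, $d(x_0,y)$, $d(x,y)$ together with polynomial volume growth — but the execution is structurally different and, in fact, more careful on one point. The paper's proof passes directly from $|\nabla_x^m a(x,y)|\,|w(y)|$ to a bound $C(1+d(x,y))^{-s}(1+d(x_0,y))^{-s}$, i.e.\ it silently replaces $|w(y)|$ by a pointwise rapidly decaying function $(1+d(x_0,y))^{-s}$. But $w\in L^2_{\mathrm{rd}}(X,E)$ only gives rapid decay of the $L^2$-tails $\|w\|_{L^2(X\setminus B_{R,x_0})}$, not a pointwise bound, so that step is not literally justified as written. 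Your splitting $X=\Omega_1\cup\Omega_2$ with Cauchy--Schwarz on each piece is precisely the device that converts the $L^2$-tail decay of $w$ and the pointwise kernel decay \eqref{KernelEstimates} into the desired pointwise decay of $\nabla^m_x A w$, so it fills in that implicit step cleanly. One small point you should make explicit: the uniformity in $x$ of $\int_X(1+d(x,y))^{-2s}\,dy$ needs the volume-growth constant in \eqref{PolynomialGrowthX} to be independent of the center, which holds here because the metric is $\Gamma$-invariant and the action is cocompact; with that noted, the annular bookkeeping closes the argument.
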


\begin{proof}
Let $a(x, y)$ be the integral kernel of $A$ and let $w \in L^2_{\mathrm{rd}}(X, E)$. We have
\begin{equation*}
  \nabla^m Aw(x) = \int_X \nabla_x^m a(x, y) w(y) \dd y,
\end{equation*}
which is absolutely convergent by the polynomial growth condition on $X$ and the decay of $a(x, y)$. Now for any $s \geq 0$, we have
\begin{equation*}
  \bigl|\nabla^n Aw(x)\bigr| \leq \int_X \bigl| \nabla_x^n a(x, y)\bigr| |w(y)| \dd y \leq C\int_X \bigl(1+d(x, y)\bigr)^{-s}\bigl(1+d(y, x_0)\bigr)^{-s} \dd y.
\end{equation*}
By the triangle inequality, we have for any $r \geq 0$ that
\begin{equation*}
  \frac{d(x, x_0)^r}{\bigl(1+d(x, y)\bigr)^s\bigl(1+d(y, x_0)\bigr)^s} \leq \bigl(1+d(y, x_0)\bigr)^{r-s},
\end{equation*}
which is integrable with respect to $y$ for $s \geq 0$ large enough, since $X$ has polynomial growth. Thus for any $n \in \NN$ and $r\geq 0$, $|\nabla^n Aw(x)|d(x, x_0)^r$ is bounded independent from $x$, which was what we needed to show.
\end{proof}

\begin{proposition} \label{PropRapidProjective}
Let $p \in \cL(L^2(X, E))$ be an admissible $\Gamma$-invariant projection. Then $P = p(L^2_\Gamma(X, E))$ is a f.g.p.\ Hilbert $C^*_r(\Gamma)$ submodule of $L^2_\Gamma(X, E)$. Moreover, given $n \in \NN$, there exists a f.g.p.\ Hilbert $C^*_r(\Gamma)$-module $Q$ such that
 \begin{equation*}
   P \oplus Q \cong C^*_r(\Gamma)^n
 \end{equation*}
if and only if there exists there exists a f.g.p.\ pre-Hilbert $H^\infty(\Gamma)$-module $Q^\prime$ such that 
\begin{equation*}
  P \cap \cS(X, E) \oplus Q^\prime \cong H^\infty(\Gamma)^n.
\end{equation*}
\end{proposition}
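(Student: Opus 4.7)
The plan is to transport everything to the standard module side via $\Phi_\cF$ and then invoke Corollary~\ref{CorollaryOfProjModLemma}. Fix a bounded fundamental domain $\cF$, write $H = L^2(\cF, E)$, and set
\[
  \tilde p := \Phi_\cF \circ p \circ \Phi_\cF^{-1} \;\in\; \cL\bigl(H \otimes \ell^2(\Gamma)\bigr).
\]
Since $p$ is $\Gamma$-invariant and $\Phi_\cF$ is $\Gamma$-equivariant, $\tilde p$ is a projection commuting with the right $\Gamma$-action on $\ell^2(\Gamma)$.

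The main step is to show $\tilde p \in H^\infty(\Gamma, \cK(H))$. Using the integral kernel $p(x,y)$ of $p$, one checks by direct calculation on the dense subspace $H \otimes \CC[\Gamma]$ that $\tilde p$ admits the formal expansion $\tilde p = \sum_{\gamma \in \Gamma} \tilde p_\gamma \otimes \gamma$, where $\tilde p_\gamma \colon H \to H$ is the integral operator on $\cF$ with kernel $k_\gamma(x,y) = p(x, \gamma^{-1} y)|_{\cF \times \cF}$ (up to the convention of $\Phi_\cF$; the exact form is irrelevant for what follows). Each $k_\gamma$ is smooth on the bounded set $\cF \times \cF$, hence $\tilde p_\gamma$ is Hilbert--Schmidt, so $\tilde p_\gamma \in \cK(H)$. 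For the decay, the admissibility of $p$ (Definition~\ref{DefAdmissible}) combined with the quasi-isometry estimate \eqref{MetricEquivalence} gives, for any $s \geq 0$ and $x, y \in \cF$,
\[
  |k_\gamma(x,y)| \leq C\bigl(1 + d(x, \gamma^{-1} y)\bigr)^{-s} \leq C'\bigl(1 + L(\gamma)\bigr)^{-s},
\]
since $\cF$ is bounded, so that $\|\tilde p_\gamma\|_{\cK(H)} \leq \|k_\gamma\|_{L^2(\cF \times \cF)} \leq C''(1+L(\gamma))^{-s}$. This shows $\tilde p \in H^\infty(\Gamma, \cK(H))$.

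With this in hand, Corollary~\ref{CorollaryOfProjModLemma} applies directly to $\tilde p$: the module $\tilde p(H \otimes C^*_r(\Gamma))$ is finitely generated and projective, and admits a complementary Hilbert $C^*_r(\Gamma)$-module $Q$ with $\tilde p(H \otimes C^*_r(\Gamma)) \oplus Q \cong C^*_r(\Gamma)^n$ if and only if $\tilde p(H^\infty(\Gamma, H))$ admits a complementary pre-Hilbert $H^\infty(\Gamma)$-module $Q'$ with $\tilde p(H^\infty(\Gamma, H)) \oplus Q' \cong H^\infty(\Gamma)^n$. The diagram \eqref{IsoPhiDiagram} transports these isomorphisms back to $X$: the middle row gives $P = p(L^2_\Gamma(X, E)) \cong \tilde p(H \otimes C^*_r(\Gamma))$ as Hilbert $C^*_r(\Gamma)$-modules.

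It remains to identify $P \cap \cS(X, E)$ with $\tilde p(H^\infty(\Gamma, H))$ as pre-Hilbert $H^\infty(\Gamma)$-modules. I claim $P \cap \cS(X,E) = p(L^2_{\mathrm{rd}}(X,E))$: if $w \in P \cap \cS(X,E)$, then $w = p(w)$ and $w \in \cS(X, E) \subseteq L^2_{\mathrm{rd}}(X, E)$, so $w \in p(L^2_{\mathrm{rd}}(X,E))$; conversely, if $w = p(u)$ with $u \in L^2_{\mathrm{rd}}(X,E)$, then $w \in \cS(X,E)$ by Lemma~\ref{LemmaMapsToSchwartz}, and $w = p(u) \in P$ because $L^2_{\mathrm{rd}}(X,E) \subseteq L^2_\Gamma(X,E)$. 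The third row of \eqref{IsoPhiDiagram} then transports $p(L^2_{\mathrm{rd}}(X,E))$ isomorphically to $\tilde p(H^\infty(\Gamma, H))$. Combining these identifications with the equivalence provided by Corollary~\ref{CorollaryOfProjModLemma} finishes the proof.

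The technical heart of the argument, and the step I expect to require the most care, is verifying that $\tilde p \in H^\infty(\Gamma, \cK(H))$, in particular justifying the pointwise expansion $\tilde p = \sum_\gamma \tilde p_\gamma \otimes \gamma$ (equivalently, that $\tilde p$ lies in the closure of $\cK(H) \otimes \CC[\Gamma]$ rather than merely commuting with the right $\Gamma$-action) and converting the kernel decay of $p$ into operator-norm decay of the $\tilde p_\gamma$. Everything else is a clean application of the machinery already developed in \S\ref{sec:fgp.pre.Hilbert} and the diagram \eqref{IsoPhiDiagram}.
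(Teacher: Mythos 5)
Your proposal is correct and follows essentially the same path as the paper's own proof: conjugate $p$ to $\tilde p = \Phi_\cF p \Phi_\cF^{-1}$, use the $\Gamma$-invariance and admissibility of the kernel together with the quasi-isometry estimate \eqref{MetricEquivalence} to show $\tilde p \in H^\infty(\Gamma,\cK(H))$, identify $P \cap \cS(X,E)$ with $P \cap L^2_{\mathrm{rd}}(X,E)$ via Lemma~\ref{LemmaMapsToSchwartz}, and then invoke Corollary~\ref{CorollaryOfProjModLemma} through the diagram \eqref{IsoPhiDiagram}. The only cosmetic difference is that you bound $\|\tilde p_\gamma\|_{\mathrm{op}}$ via the Hilbert--Schmidt norm of $k_\gamma$ whereas the paper uses the sup norm of the kernel times $\mathrm{vol}(\cF)$; both give the same rapid decay in $L(\gamma)$.
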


\begin{proof}
With a view on \eqref{IsoPhiDiagram}, the result follows from Corollary~\ref{CorollaryOfProjModLemma}, if we can show that 
\begin{equation} \label{EqLemmaInclusion}
  \Phi_\cF \,p \, \Phi_\cF^{-1} \in H^\infty\bigl(\Gamma, \cK(H)\bigr), \quad \text{for} \quad H = L^2(\cF, E).
\end{equation}
for some bounded fundamental domain $\cF$. Corollary~\ref{CorollaryOfProjModLemma} states that the existence of a f.g.p.\ Hilbert $C^*_r(\Gamma)$-module $Q$ as above implies the existence of a f.g.p.\ pre-Hilbert $H^\infty(\Gamma)$-module $Q^\prime$ such that $P \cap L^2_{\mathrm{rd}}(X, E) \oplus Q^\prime \cong H^\infty(\Gamma)^n$ and vice versa. 

However, if for $w \in P \cap L^2_{\mathrm{rd}}(X, E)$, then $w = p(v) \in L^2_{\mathrm{rd}}(X, E)$ with $v \in L^2_\Gamma(X, E)$. Hence, since $p$ is a projection, $w = p(p(v))$, which by Lemma~\ref{LemmaMapsToSchwartz} is contained in $\cS(X, E)$, as $p(v) \in L^2_{\mathrm{rd}}(X, E)$. In other words, we have
\begin{equation*}
  P \cap L^2_{\mathrm{rd}}(X, E) = P \cap \cS(X, E).
\end{equation*}
To show \eqref{EqLemmaInclusion}, let $p(x, y)$ be the integral kernel of $p$. Fix $f \in L^2(\cF, E)$, and let $w = \Phi_F^{-1}(f \otimes \gamma^{-1})$. We have
\begin{equation*}
\begin{aligned}
  p(w)(x) &= \int_{\gamma \cF} p(x, y) f(\gamma^{-1}y) \dd y = \int_\cF p(\gamma^{-1}x,  y)f(y) \dd y,
\end{aligned}
\end{equation*}
where we used that the integral kernel of $p$ is $\Gamma$-invariant, $p(\gamma x, \gamma y) = p(x, y)$ for all $\gamma \in \Gamma$, $x, y \in X$. Hence
\begin{equation*}
\begin{aligned}
  \Phi_\cF\,p(w)(x) &= \sum_{\rho \in \Gamma} \int_\cF p( \gamma^{-1} \rho x, y) f(y) \dd y \otimes \rho^{-1}\\
  &= \sum_{\rho \in \Gamma} \int_\cF p(\rho x, y) f(y) \dd y \otimes \rho^{-1} \gamma^{-1}.
\end{aligned}
\end{equation*}
In other words, we have 
\begin{equation*}
  \Phi_\cF \,p \,\Phi_\cF^{-1} = \sum_{\gamma \in \Gamma} A_\gamma \otimes \gamma^{-1}, ~~~~\text{with}~~~~(A_\gamma f)(x) = \int_\cF p(\gamma x, y) f(y) \dd y,
\end{equation*}
which is a compact operator, as it has a bounded integral kernel. For the operator norm of $A_\gamma$, we have the estimate
\begin{equation*}
  \|A_\gamma\|_{\mathrm{op}} \leq \mathrm{vol}(\cF) \sup_{x, y \in \cF} \bigl| p(\gamma x, y)\bigr| \leq C \bigl(1+d(\gamma x, y)\bigr)^{-s} = C^\prime \bigl(1+L(\gamma)\bigr)^{-s},
\end{equation*}
for any $s \geq 0$, where we used that $p$ is admissible, and that $\cF$ is bounded, together with \eqref{MetricEquivalence}. This follows from the fact that $p$ is admissible and that $X$ is quasi-isometric to $\Gamma$. This implies \eqref{EqLemmaInclusion} since $\Gamma$ has polynomial growth.
\end{proof}

\section{The main theorem}\label{main.theorem.section}

Let $X,E,\Gamma$ be as in the Basic Setup of the introduction. Thm.~\ref{MainTheoremIntro} from the introduction is a special case of the following result, Thm.~\ref{TheoremMainTheorem}, combined with Thm.~\ref{ThmBlackBox} further below.

Recall that a {\em tight frame} of a Hilbert space $H$ is a collection $f_1, f_2, \dots$ of (possibly linearly dependent) elements of $H$ such that for all $h\in H$, we have
\begin{equation*}
   \|h\|_H^2 = \sum_{j=1}^\infty |\langle h, f_j\rangle|^2.
\end{equation*}

\begin{theorem} \label{TheoremMainTheorem}
Suppose that $\Gamma$ has polynomial growth and let $D$ be a $\Gamma$-invariant, self-adjoint (unbounded) operator on $L^2(X, E)$ such that $\psi(D)$ is admissible for each Schwartz function $\psi \in \cS(\RR)$. Let $S$ be a compact subset of the spectrum of $D$ which is separated from the rest of the spectrum and let $L^2_S(X, E)$ be the corresponding spectral subspace. Set 
\begin{equation*}
P_S := L^2_S(X, E) \cap L^2_\Gamma(X, E).
\end{equation*}
\begin{enumerate}
\item[(i)] $P_S$ is a f.g.p.\ Hilbert $C^*_r(\Gamma)$-module.
\item[(ii)] Suppose that there exists a f.g.p.\ Hilbert $C^*_r(\Gamma)$-module $Q$ such that $P_S \oplus Q \cong C^*_r(\Gamma)^n$. Then there exist $w_1, \dots, w_n \in \cS(X, E)\cap P_S$ 
such that these functions and their translates form a tight frame of $L^2_S(X, E)$. Moreover, if $Q = \{0\}$, then we can arrange for this tight frame to be an orthonormal basis.
\item[(iii)] Conversely, suppose that there exist $w_1, \dots, w_n \in P_S$ that together with their $\Gamma$-translates form a tight frame of $L^2_S(X, E)$. Then there exists a f.g.p.\ Hilbert $C^*_r(\Gamma)$-module $Q$ such that $P_S \oplus Q \cong C^*_r(\Gamma)^n$, and if the $w_1, \dots, w_n$ together with their translates are linearly independent, then $P_S$ is a free Hilbert $C^*_r(\Gamma)$-module.
\end{enumerate}
\end{theorem}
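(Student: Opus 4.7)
The plan is to reduce parts (i)--(iii) to applications of Proposition~\ref{PropRapidProjective}, combined with a dictionary between the Hilbert $C^*_r(\Gamma)$-module structure on $P_S$ and the frame condition in the ambient Hilbert space $L^2_S(X,E)$. For (i), since $S$ is compact and isolated in $\mathrm{Spec}(D)$, I pick $\psi \in C_c^\infty(\RR) \subset \cS(\RR)$ equal to $1$ on a neighbourhood of $S$ and to $0$ on a neighbourhood of $\mathrm{Spec}(D)\setminus S$. Then $p_S := \psi(D)$ is the spectral projection onto $L^2_S(X,E)$, is admissible by hypothesis, and commutes with the $\Gamma$-action. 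Proposition~\ref{PropRapidProjective} then shows that $p_S(L^2_\Gamma(X,E))$ is a f.g.p.\ Hilbert $C^*_r(\Gamma)$-module; since $p_S$ preserves $L^2_\Gamma$ and acts as the identity on $L^2_S \cap L^2_\Gamma$, this submodule coincides with $P_S$.

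For (ii), given a unitary Hilbert module isomorphism $U: C^*_r(\Gamma)^n \to P_S \oplus Q$, let $V: C^*_r(\Gamma)^n \to P_S$ be $U$ composed with projection onto the first factor; this $V$ is a coisometry, i.e.\ $V V^* = \mathrm{id}_{P_S}$. Set $w_j := V(e_j)$ for the standard generators $e_j \in C^*_r(\Gamma)^n$. By Proposition~\ref{PropRapidProjective}, the splitting refines to $(P_S \cap \cS(X,E)) \oplus Q' \cong H^\infty(\Gamma)^n$; since $e_j \in H^\infty(\Gamma)^n$, this gives $w_j \in P_S \cap \cS(X,E)$. For $h \in P_S$, the reproducing identity
\begin{equation*}
  h \;=\; V V^* h \;=\; \sum_{j=1}^n w_j \cdot (w_j \,|\, h),
\end{equation*}
combined with $(w_j \,|\, h) = \sum_{\gamma \in \Gamma} \langle \gamma^*w_j,\, h\rangle_{L^2} \gamma$ and $w_j \cdot \gamma = \gamma^* w_j$, yields
\begin{equation*}
  h \;=\; \sum_{j=1}^n \sum_{\gamma \in \Gamma} \langle \gamma^* w_j,\, h\rangle_{L^2(X,E)}\; \gamma^* w_j,
\end{equation*}
which is the tight-frame identity, first on the dense subspace $P_S \subset L^2_S(X,E)$ and then on all of $L^2_S(X,E)$ by continuity. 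If in addition $Q = 0$, then $V$ is also an isometry, so the frame is in fact an orthonormal basis.

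For (iii), I run the argument in reverse: given $w_1,\ldots,w_n \in P_S$ whose $\Gamma$-translates form a tight frame, define the right $C^*_r(\Gamma)$-linear map $V: C^*_r(\Gamma)^n \to P_S$ by $V(e_j) = w_j$, with boundedness coming from the frame bound. The same calculation as above, read backwards, shows $V V^* = \mathrm{id}_{P_S}$, so $V$ is a coisometry and $Q := \ker V$ furnishes the desired splitting $C^*_r(\Gamma)^n \cong P_S \oplus Q$. Linear independence of the $\{\gamma^* w_j\}$ translates into injectivity of $V^*$, equivalently $V$ being isometric, whence $V$ is unitary, $Q = 0$, and $P_S$ is free of rank $n$.

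The main obstacle is the careful translation in (ii)--(iii) between the coisometry/isometry condition on Hilbert-module maps into $P_S$ and the tight-frame/orthonormal-basis condition in the ambient Hilbert space $L^2_S(X,E)$, including the density of $P_S$ in $L^2_S$ needed to propagate the frame identity. Once this dictionary is in hand, the Schwartz-class refinement $w_j \in \cS(X,E)$ is automatic from the $H^\infty(\Gamma)$-version of Proposition~\ref{PropRapidProjective}, which itself relies on the polynomial-growth hypothesis and the smoothing Lemma~\ref{LemmaMapsToSchwartz}.
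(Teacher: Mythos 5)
Your part~(i) coincides with the paper's argument, and part~(ii) is morally the same as the paper's, only phrased in terms of the coisometry $V = \mathrm{pr}_1\circ U$ rather than the explicit isomorphism $\alpha$ of pre-Hilbert $H^\infty(\Gamma)$-modules; the key computation $(\alpha(e_i\otimes\rho)\,|\,\alpha(e_j\otimes\gamma)) = \delta_{ij}\rho^*\gamma$ is the same. For part~(iii) you take a genuinely different route. The paper invokes the frame-dilation theorem of Han et al.\ to lift the tight frame $\{\gamma^*w_j\}$ to an orthonormal basis $\{v_{j,\gamma}\}$ of an enlarged Hilbert space $L^2_S(X,E)\oplus H$, then puts a $C^*_r(\Gamma)$-action and $C^*_r(\Gamma)$-valued inner product on a dense subspace of $H$ by hand and completes to get $Q$. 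You instead define the module map $V: C^*_r(\Gamma)^n \to P_S$ by $V(e_j)=w_j$, observe that $V$ is adjointable (automatic out of a finitely generated free module), deduce $VV^*=\mathrm{id}_{P_S}$ from the tight-frame identity by polarization, and take $Q = \ker V = \mathrm{range}(1-V^*V)$. This avoids invoking the dilation theorem and produces $Q$ for free as a complemented submodule. One small point worth making explicit is the passage from the $L^2$-level tight-frame identity to the module-level identity $VV^*=\mathrm{id}_{P_S}$: the cleanest way is to compute $\langle VV^*h, k\rangle_{L^2} = \sum_{j,\gamma}\overline{\langle\gamma^*w_j,h\rangle}\langle\gamma^*w_j,k\rangle = \langle h,k\rangle_{L^2}$ for $h\in P_S$, $k\in L^2_S(X,E)$, which uses the relation $\langle v,w\rangle_{L^2} = (v\,|\,w)_e$ between the scalar and $C^*_r(\Gamma)$-valued inner products.

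There is, however, a gap in your treatment of the linear-independence case. You write that ``linear independence of $\{\gamma^*w_j\}$ translates into injectivity of $V^*$, equivalently $V$ being isometric.'' This is backwards: once $VV^* = \mathrm{id}_{P_S}$, the map $V^*$ is automatically injective, so that condition carries no information. What you actually need is $\ker V = 0$, i.e.\ $V^*V = \mathrm{id}$, which is the statement that $V$ is isometric. But linear independence of the vectors $\gamma^*w_j$ only gives injectivity of $V$ restricted to the algebraic span $\CC[\Gamma]^n$, and it does not immediately follow that the kernel of $V$ on the completion $C^*_r(\Gamma)^n$ is trivial. The missing ingredient is the standard fact that a tight (Parseval) frame consisting of linearly independent vectors is automatically an orthonormal basis: if $f_k$ had $\|f_k\|<1$, writing $f_k = \sum_i\langle f_i,f_k\rangle f_i$ and invoking linear independence forces $\langle f_k,f_k\rangle = 1$, a contradiction. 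Once one knows $\langle\rho^*w_i,\gamma^*w_j\rangle_{L^2} = \delta_{ij}\delta_{\rho\gamma}$, the definition $(v\,|\,w) = \sum_{\gamma}\langle\gamma^*v,w\rangle_{L^2}\gamma$ gives $(w_i\,|\,\gamma^*w_j) = \delta_{ij}\gamma$, hence $V^*V = \mathrm{id}$ and $V$ is unitary, so $P_S\cong C^*_r(\Gamma)^n$. With this correction your argument is a valid alternative to the paper's dilation-based proof.
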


\begin{remark}
  Statement (iii) above is false if $w_1, \dots, w_n\in L^2_S(X, E)$ are not constrained to be in $P_S$. Indeed, there are examples where there exists an orthonormal basis of $L^2_S(X, E)$ consisting of $w_1, \dots, w_n$ and their translates, without $P_S$ being free, see the $\Gamma=\ZZ^2$ case in \S\ref{sec:nonabelian.examples}. However, Thm.~\ref{TheoremMainTheorem} states that for a basis comprising $w_1, \dots, w_n$ and their translates, if the $w_j$ do satisfy the mild decay property of being contained in $L^2_\Gamma(X, E)$ (c.f.~Remark~\ref{RemarkDecay}), then $P_S$ is free, and one can further choose the $w_j$ from the Schwartz space $\mathcal{S}(X, E)$.
\end{remark}

\begin{proof}
Let $p_S$ be the spectral projection in $L^2(X, E)$ associated to the subset $S \subset \mathrm{spec}(D)$. Since $S$ is separated from the rest of the spectrum, we can write $p_S = \psi(D)$ using functional calculus, where $\psi$ is a compactly supported smooth function such that $\psi(\lambda) = 1$ for $\lambda \in S$ and $\psi(\lambda) = 0$ whenever $\lambda \in \mathrm{spec}(D) \setminus S$. Thus $p_S$ is admissible in the sense of Def.~\ref{DefAdmissible}, and by Prop.~\ref{PropRapidProjective}, 
\begin{equation*}
P_S = L^2_S(X, E) \cap L^2_\Gamma(X, E) = p_S\bigl(L^2_\Gamma(X, E)\bigr)
\end{equation*}
 is finitely generated and projective, which is claim (i).

To show claim (ii), let $Q$ be a f.g.p.\ Hilbert $C^*_r(\Gamma)$-module such that $P_S \oplus Q \cong C^*_r(\Gamma)^n$. By Prop.~\ref{PropRapidProjective}, there exists a pre-Hilbert $H^\infty(\Gamma)$-module $Q^\prime$ and an isomorphism of pre-Hilbert $H^\infty(\Gamma)$-modules
\begin{equation*}
  \alpha:  H^\infty(\Gamma)^n = \CC^n \otimes H^\infty(\Gamma) \longrightarrow P_S \cap H^\infty(\Gamma, E) \oplus Q^\prime.
\end{equation*}
By Lemma~\ref{LemmaMapsToSchwartz}, we have $P_S \cap H^\infty(\Gamma, E) = P_S \cap \cS(X, E)$.
Define $w_1, \dots, w_n \in P_S \cap \cS(X, E)$ by $w_j := \mathrm{pr}_1(\alpha(e_j \otimes \mathbf{1}))$, where $\mathrm{pr}_1$ is the projection onto the first factor $P_S \cap \cS(X, E)$ and $e_j \in \CC^n$ is the $j$-th unit vector. Since $\alpha$ (being an isomorphism of pre-Hilbert $H^\infty(\Gamma)$-modules) preserves the inner products, the vectors $\gamma^*w_j = \alpha(e_j \otimes \gamma)$ satisfy 
\begin{equation} \label{ONB}
\bigl(\alpha(e_i \otimes \rho) \bigl| \alpha(e_j \otimes \gamma) \bigr) = (e_i \otimes \rho \,|\, e_j \otimes \gamma) = \delta_{ij} \rho^* \gamma.
\end{equation}
Composing the $H^\infty(\Gamma)$-valued inner product with the standard trace on $H^\infty(\Gamma)$ gives a scalar product on $P_S \cap \cS(X, E) \oplus Q^\prime$, and by \eqref{ONB}, the vectors $\gamma^*w_j$, $j = 1, \dots, n$, $\gamma \in \Gamma$, form an orthonormal basis of the completion with respect to this inner product. 

The inner product on $P_S\cap \cS(X, E) \subset L^2(X, E)$ obtained this way coincides with the restriction of the standard inner product on $L^2(X, E)$, and since $\cS(X, E)$ is dense in $L^2(X, E)$, the completion of $P_S\cap \cS(X, E)$ is $L^2_S(X, E)$. Because the orthogonal projection of an orthonormal basis to a subspace forms a tight frame of the subspace, this shows that the sections $\gamma^*w_j = \mathrm{pr}_1(\alpha(e_j \otimes \gamma))$, $j=1, \dots, n$, $\gamma \in \Gamma$ form a tight frame of $L^2_S(X, E)$. Clearly, if $Q = \{0\}$, then the $\gamma^*w_j$ are linearly independent, as then $\gamma^* w_j = \alpha(e_j \otimes \gamma)$, the $e_j \otimes \gamma$ are linearly independent and $\alpha$ is a vector space isomorphism.

To show (iii), let $w_1, \dots, w_n \in P_S$ be such that $\gamma^*w_j$, $j=1, \dots, n$, $\gamma \in \Gamma$, forms a tight frame of $L^2_S(X, E)$. From the characterization of tight frames \cite{Han}, there exists a Hilbert space $H$ and an orthonormal basis $v_{j, \gamma}$, $j=1, \dots, n$, $\gamma \in \Gamma$ of $L^2_S(X, E) \oplus H$ such that $\gamma^*w_j = \mathrm{pr}_1(v_{j, \gamma})$. Setting
\begin{equation*}
  v_{j, \gamma} \cdot \rho := v_{j, \gamma \rho}
\end{equation*}
for $\rho \in \Gamma$ defines a right representation of $C^*_r(\Gamma)$ on $L^2_S(X, E) \oplus H$. This action preserves $H$; in fact, if we let $v_{j, \gamma} = (\gamma^*w_j, v^\prime_{j, \gamma})$, then $v^\prime_{j, \gamma} \cdot \rho = v_{j, \gamma\rho}^\prime$. Setting
\begin{equation*}
  (v_{i, \gamma}^\prime | v_{j, \eta}^\prime) := \langle v_{i, \mathbf{1}}^\prime, v_{j, \mathbf{1}}^\prime\rangle_H \gamma^{-1}\eta
\end{equation*}
defines a $C^*_r(\Gamma)$-valued inner product on the subspace $Q_0 \subset H$ consisting of finite linear combinations of the $v^\prime_j, \gamma$, $j=1, \dots, n$, $\gamma \in \Gamma$. Letting $Q \subset H$ be the completion of $Q_0$ with respect to the norm induced from the inner product and the norm on $C^*_r(\Gamma)$ defines a Hilbert-$C^*_r(\Gamma)$-module such that $P_S \oplus Q \cong C^*_r(\Gamma)^n$, via the obvious isomorphism of Hilbert-$C^*(\Gamma)$-modules sending $v_{j, \gamma}$ to $e_j \otimes \gamma$. If the $\gamma^*w_j$ were in fact linearly independent, then $H = \{0\}$, so that $P_S \cong C^*_r(\Gamma)^n$.
\end{proof}

The following result shows that there are many examples of operators $D$ for which Thm.~\ref{TheoremMainTheorem} applies.

\begin{theorem} \label{ThmBlackBox}
Suppose that $\Gamma$ has polynomial growth.
Let $D$ be a self-adjoint, $\Gamma$-invariant differential operator acting on sections of $E$ and assume that either $D$ is elliptic of order one; or that $D$ is of order two and of Laplace type. Then for each Schwartz function $\psi \in \mathcal{S}(\RR)$, the operator $\psi(D)$ is admissible in the sense of Def.~\ref{DefAdmissible}.
\end{theorem}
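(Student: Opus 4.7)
The plan is the classical Cheeger--Gromov--Taylor approach: combine the Fourier representation of the functional calculus with finite propagation speed for the wave equation. For $\psi \in \cS(\RR)$, $\hat\psi$ is also Schwartz, and in the first-order case the spectral theorem gives
\[
  \psi(D) = \frac{1}{\sqrt{2\pi}} \int_\RR \hat\psi(t)\, e^{itD}\, \dd t,
\]
where $e^{itD}$ has \emph{finite propagation speed} $v := \sup \|\sigma(D)\| < \infty$, sending sections supported in $B_{r, x_0}$ to sections supported in $B_{r + v|t|, x_0}$. For $D$ of Laplace type, the Bochner--Lichnerowicz--Weitzenb\"ock formula gives a lower bound $D + c \geq 1$ for some $c > 0$; writing $\psi(\lambda) = \tilde\psi\bigl(\sqrt{\lambda+c}\bigr)$ with $\tilde\psi(s) := \psi(s^2 - c)$ even and Schwartz reduces the analysis to $\cos\bigl(t\sqrt{D+c}\bigr)$, which has finite propagation speed by the standard energy estimate for $\partial_t^2 u + (D+c) u = 0$.

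\medskip

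Fix $x, y \in X$ and smooth bump functions $\phi_1, \phi_2$ of unit $L^2$-norm supported in unit balls around $x, y$, and let $R := d(x,y)$. Finite propagation forces $\langle e^{itD} \phi_2, \phi_1\rangle = 0$ for $|t| < (R-2)/v$; combined with unitarity of $e^{itD}$ and the Schwartz bound $|\hat\psi(t)| \leq C_N (1+|t|)^{-N}$ this yields $|\langle \psi(D) \phi_2, \phi_1\rangle| \leq C_N' R^{-N}$ for every $N \in \NN$. To upgrade this weak estimate to a pointwise bound on a smooth kernel $a(x,y)$, invoke elliptic regularity: $\psi_k(\lambda) := (1+\lambda^2)^k \psi(\lambda)$ is Schwartz for every $k$, and $(1+D^2)^{-k}$ gains $2k$ Sobolev derivatives by ellipticity of $D^2$. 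Writing $\psi(D) = (1+D^2)^{-k}\psi_k(D)$ and applying the weak estimate to $\psi_k(D)$, one passes to pointwise decay of $a(x,y)$ via Sobolev embedding---with uniform constants thanks to the bounded geometry of $X$ inherited from the cocompact isometric $\Gamma$-action---and simultaneously obtains smoothness of $a$.

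\medskip

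The mixed derivatives $\nabla_x^n \nabla_y^m a(x,y)$ correspond to the Schwartz kernel of $P\, \psi(D)\, Q^*$ for differential operators $P, Q$ of orders $n, m$. Taking $k$ sufficiently large and commuting $P, Q$ through $(1+D^2)^{-k}$, elliptic regularity absorbs the outer differential operators and reduces the derivative estimate to the zeroth-order case just established, applied to a modified Schwartz function in place of $\psi$.

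\medskip

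The main technical hurdle is obtaining the decay estimate \emph{uniformly in} $x, y \in X$, rather than just on compact subsets. Here $\Gamma$-invariance of $D$ (hence of $a(x,y)$) together with cocompactness of the $\Gamma$-action allow one to reduce $x$ to a compact fundamental domain; the decay in $y$ is then uniform precisely because, on a manifold of bounded geometry, the Sobolev embedding constants are uniform across the whole of $X$. This is also the step that genuinely uses the hypothesis of an isometric, cocompact $\Gamma$-action (not merely the polynomial growth of $\Gamma$).
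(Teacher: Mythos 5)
Your argument is essentially the same Cheeger--Gromov--Taylor strategy the paper uses: represent $\psi(D)$ (resp.\ $\psi(\sqrt{D+c})$) as a superposition $\int\hat\psi(t)e^{itD}\,\dd t$ (resp.\ $\int\hat\varphi(t)\cos(t\sqrt{D+c})\,\dd t$) of finite-propagation-speed wave operators, combine the Schwartz decay of $\hat\psi$ with the support constraint to get off-diagonal decay, and then convert to smooth kernel estimates via elliptic regularity of $D^2$ and Sobolev embedding with bounded-geometry-uniform constants. The paper's own proof establishes a quasi-local Sobolev estimate $\|\varphi(D)w\|_{H^k(X\setminus B_{R,K})}\lesssim R^{-n}\|w\|_{H^m}$ along the same lines and then delegates the passage to pointwise kernel decay to Engel's results on quasi-local smoothing operators; you instead carry out the Sobolev-embedding/commutation step explicitly, correctly emphasizing that uniformity of the Sobolev constants (hence bounded geometry, coming from the cocompact isometric $\Gamma$-action) is what this step really needs. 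Both routes buy the same thing, and the one genuine difference is citation versus inline argument.

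One small correction: for general Laplace-type $D$ as in the paper --- where arbitrary $\Gamma$-invariant first- and zeroth-order terms $b_i,c$ are allowed --- there is no Bochner--Lichnerowicz--Weitzenb\"ock formula available, since those identities pertain to specific geometric Laplacians such as the Hodge or spinor Laplacian. The reason the spectrum is bounded below here is the more pedestrian one the paper invokes implicitly: the leading symbol is positive definite and the lower-order coefficients are $\Gamma$-invariant, hence bounded since $X/\Gamma$ is compact, so a G\r{a}rding-type estimate gives $D\geq -C$. After replacing $D$ by $D+\mu$ the rest of your even-function reduction to $\cos(t\sqrt{D+\mu})$ goes through as you write it.
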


Here by a {\em Laplace type} operator, we mean a second order operator such that in local coordinates $x= (x_1, \dots, x_n)$ on $X$, it is given by
\begin{equation*}
 D = -\sum_{i,j=1}^n g^{ij}(x) \frac{\partial^2}{\partial x_i \partial x_j} + \sum_{i=1}^n b_i(x) \frac{\partial}{\partial x_i} + c(x),
\end{equation*}
where $(g^{ij}(x))_{ij=1, \dots, n}$ is the inverse matrix of the coefficient matrix of the Riemannian metric on $X$ and $b_i$, $c$ are certain sections of the endomorphism bundle of $E$. Note that the self-adjointness requirement on $D$ poses some additional restrictions on the lower order coefficients $b_i$ and $c$.

\begin{proof}
The argument is similar to the one in \cite{CGT}. 
We will show that for all $\varphi \in \cS(\RR)$, $m, n, k \in \ZZ$, there exists a constant $C>0$ such that
\begin{equation} \label{WantToShow}
   \bigl\| \varphi(D)w\bigr\|_{H^k(X \setminus B_{R, K}, E)} < C R^{-n} \|w\|_{H^m(X, E)}
\end{equation} 
whenever $w$ is a smooth section of $E$ with support in a compact set $K \subset X$. Here $B_{R, K}$ denotes the set of $x \in X$ with $d(x, K) \leq R$ and for an open subset $Y\subset X$,  $H^k(Y, E) \subset L^2(Y, E)$ is the Sobolev space of sections in $E$ with square-integrable weak derivatives up to order $k$. The result then follows from the results in \cite{Engel}: The estimates \eqref{WantToShow} imply that $\varphi(D)$ is a quasi-local smoothing operator (c.f.\ Def.~2.14 ibid.); Sobolev embedding together with the results of Section~2.4 in \cite{Engel} shows that $\varphi(D)$ is admissible in the sense of Def.~\ref{DefAdmissible} above.

First consider the case that $D$ is of order one. In that case, the wave operator $e^{isD}$ has finite propagation, meaning that there exists a constant $\alpha>0$ such that whenever $w$ has support in a compact set $K \subset X$, then $e^{isD}w$ has support in $B_{\alpha s, K}$. We use the formula
\begin{equation} \label{FormulaFunctionOfOperator}
  \varphi(D)w = \frac{1}{2\pi} \int_{-\infty}^\infty \hat{\varphi}(s) e^{i sD} w \, \dd s
\end{equation}
for Schwartz functions $\varphi$, where $\hat{\varphi}$ denotes the Fourier transform of $\varphi$. Given $k, m \in \NN_0$, let $\varphi(x) := (1+x^2)^{(k-m)/2} \psi(x)$. Then by elliptic estimates, we have $\|\psi(D)w\|_{H^k} = \|\varphi(D) w\|_{H^m}$ (if one defines $H^k$ norms suitably). Let $\chi_R$ be the function that is equal to zero on $B_{R, K}$ and identically one on the complement. 
Now
\begin{equation*}
\begin{aligned}
   \bigl\| \psi(D)w\bigr\|_{H^k(X \setminus B_{R, K}, E)} &=  \bigl\| \chi_R \varphi(D)w\bigr\|_{H^m(X, E)} \\
   &\leq \frac{1}{2\pi} \int_{-\infty}^\infty|\hat{\varphi}(s)|  \bigl\|\chi_R e^{i sD} \varphi\bigr\|_{H^m(X, E)} \, \dd s\\
   &\leq \frac{1}{2\pi} \left(\int_{-\infty}^{-R} + \int_R^\infty\right) |\hat{\psi}(s)|  \bigl\| e^{i sD} \varphi\bigr\|_{H^m(X, E)} \, \dd s
\end{aligned}
\end{equation*}
by the finite propagation of $e^{isD}$.
This gives the estimate \eqref{WantToShow}, since the operator $e^{isD}$ is uniformly bounded independent of $s$ and $\hat{\psi}$ is rapidly decaying.

If $D$ is order two and of Laplace type, the spectrum of $D$ is bounded below. By possibly replacing $D$ by $D+\mu$, we may assume that the spectrum of $D$ is bounded below by $\varepsilon >0$. We may then take $\psi$ to be an even function, in which case formula \eqref{FormulaFunctionOfOperator}, applied for $\sqrt{D}$ instead of $D$ yields
\begin{equation*}
  \psi(D)w = \frac{1}{2\pi} \int_{-\infty}^\infty \hat{\varphi}(s) \cos(s \sqrt{D})w\, \dd s;
\end{equation*}
here $\hat{\varphi}$ is the Fourier transform of $\varphi(s) = \psi(s^2)$. The point is now that $\cos(s \sqrt{D})$ is the solution operator to the wave equation,  $(\tfrac{\partial^2}{\partial s^2} + D)w = 0$, which again has finite propagation speed \cite{Taylor1, BGP}. The proof is then similar to the argument before.
\end{proof}

%=====================================

\section{Application to good Wannier basis existence problem}\label{sec:physics}
We apply our results to the old problem of constructing well-localised \emph{Wannier bases} in solid state physics. Most existing results on Wannier bases apply to the basic case where $\Gamma=\ZZ^d$ is a lattice of translations acting on affine Euclidean space $X=\RR^d$. Fourier transform identifies $C^*_r(\ZZ^d)$ with $C(\TT^d)$, the continuous functions on 
\begin{equation*}
\TT^d={\rm Hom}(\ZZ^d,{\rm U}(1))\equiv\wh{\Gamma},
\end{equation*}
 the \emph{Brillouin zone/torus} in physics, and one studies $\Gamma$-invariant Hamiltonian operators $D=D^*$ acting on the Hilbert space $L^2(\RR^d)$ of quantum mechanical wavefunctions.
 
We will recast the idea of Wannier bases in noncommutative topology/geometry language, so that our results become applicable, e.g.\ to all crystallographic $\Gamma$ at once, and also in certain non-Euclidean settings as illustrated by our final example. It is instructive, however, to first recall the basic notions in the commutative case $\Gamma=\ZZ^d$, which proceeds via classical Bloch theory.

\subsection{Commutative Bloch--Floquet transform and Wannier bases}\label{sec:BlochFloquetreview}
The fundamental domain for the $\ZZ^d$ action on $\RR^d$ is an affine torus $T^d=\RR^d/\ZZ^d$ (not to be confused with the Brillouin torus $\TT^d)$.  The \emph{Bloch--Floquet} transform is a unitary map $\Psi:L^2(\RR^d)\ni f \mapsto \hat{f}\in L^2(\TT^d,L^2(T^d))$, cf.\ Eq.\ \eqref{IsoPhi} with $\ell^2(\ZZ^d)\cong L^2(\TT^d)$, explicitly defined by
$$ \hat{f}(\chi,x)=\sum_{\gamma\in\ZZ^d}f(x+\gamma)\chi(\gamma)^{-1}\equiv \sum_{\gamma\in\ZZ^d}(\gamma^*f)(x)\chi(\gamma)^{-1},\quad \chi\in\TT^d, x\in T^d.$$
This sum of $\gamma$-shifted versions of $f$ weighted by the phase factor $\chi(\gamma)^{-1}$ is often called a \emph{Bloch sum}. If we replace $x$ in the Bloch sum by $x+\gamma$, $\gamma\in\ZZ^d$, we get the \emph{Bloch wave} condition 
\begin{equation*}
\hat{f}(\chi,x+\gamma)=\hat{f}(\chi,x)\chi(\gamma). 
\end{equation*}
Thus each fixed \emph{quasimomentum} $\chi\in\TT^d$, the function $\hat{f}(\chi,\cdot)\in L^2(T^d)$ extends to a $\chi$-quasiperiodic \emph{Bloch ``wavefunction''} on $\RR^d$, albeit not normalisable over $\RR^d$ but only over $T^d$. Equivalently, we write $\hat{f}(\chi,\cdot)\in L^2(T^d;\cL_\chi)$ where each $\cL_\chi\rightarrow T^d$ is the line bundle obtained by quotienting $\RR^d\times \CC$ by $(x,z)\sim(x+\gamma,\chi(\gamma)z)$, $\gamma\in\ZZ^d$. Thus in total (see \S D.3 of \cite{FM}), there is a Hilbert bundle $\cE\rightarrow\TT^d$ with fibres 
\begin{equation*}
\cE_\chi=L^2(T^d;\cL_\chi),
\end{equation*}
 and $\hat{f}\in L^2(\TT^d;\cE)$ is an $L^2$-section of $\cE$. The action of translation by $\gamma\in\ZZ^d$ is represented unitarily on these sections by pointwise multiplication by the continuous function $\chi\mapsto\chi(\gamma)^{-1}$. 
An inversion formula
\begin{equation}
f(x)=\int_{\TT^d} \hat{f}(\chi,x)\,{\rm d}\chi\label{eqn:inverseBF}
\end{equation}
holds, recovering $f\in L^2(\RR^d)$ as a ``superposition'' of Bloch wavefunctions $\hat{f}(\chi,\cdot)\in L^2(T^d;\cL_\chi)$. In reverse, given a section $\phi\in L^2(\TT^d;\cE)$, its \emph{Wannier function} $w\in L^2(\RR^d)$ is the inverse Bloch--Floquet transform \eqref{eqn:inverseBF} of $\phi$.

Often, a certain finite-rank subbundle $\cE_S$ of $\cE$ is of interest, e.g.\ if the spectrum of $D$ has band structure, the spectral subspace $L^2_S(\RR^d)$ for spectra lying between some given spectral gaps is a $\Gamma$-invariant subspace obtainable (after taking Bloch--Floquet transform) as the $L^2$-sections of a locally trivial subbundle $\cE_S$, called a \emph{Bloch bundle} \cite{Panati, Kuchment, FM}. If there are $n$ bands (so $\cE_S$ has rank $n$), one can always find orthonormal measurable sections $\phi_j, j=1,\ldots, n$ for $\cE_S$; then each $\phi_j$ gives rise to a corresponding Wannier wavefunction $w_j\in L^2(\RR^d)$ such that the translates $\gamma^*w_j$ are mutually orthonormal \cite{Kuchment}. So we obtain an orthonormal \emph{Wannier basis} $\gamma^*w_j$, $j=1,\ldots, n$, $\gamma\in\ZZ^d$ for the spectral subspace $L^2_S(\RR^d)$ of interest.

\begin{figure}[ht]
\includegraphics[width=\textwidth]{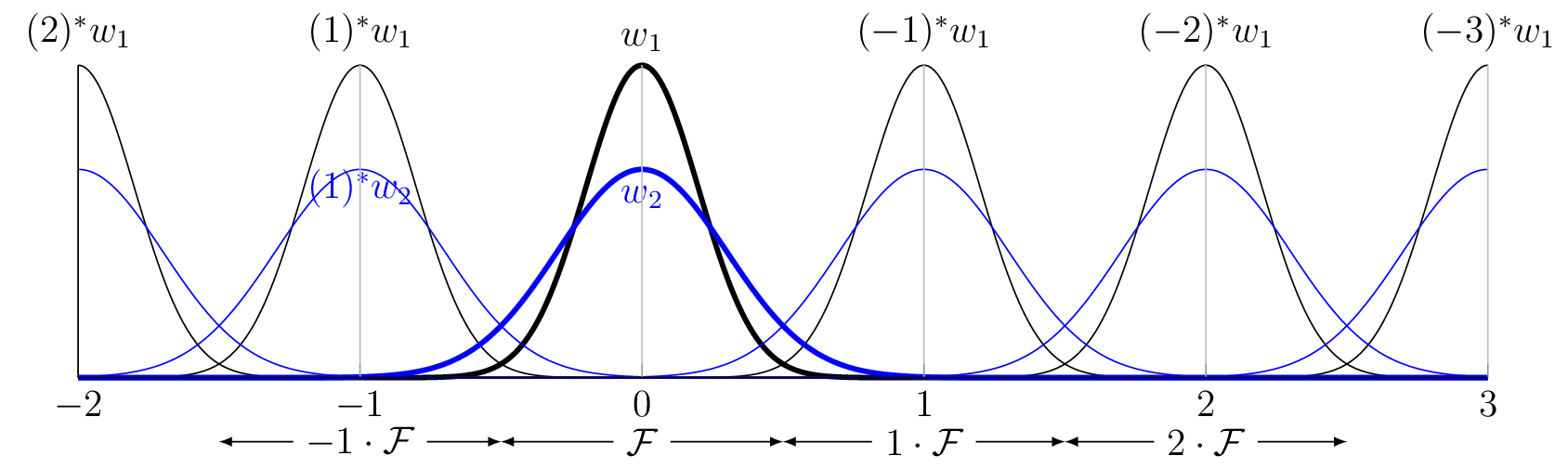}
\caption{Schematic depiction of a good Wannier basis (phase information not indicated) $\gamma^*w_j$, $j=1, 2$, $\gamma \in \ZZ$, obtained from an identification of $P_S=L^2_S(\RR)\cap L^2_\ZZ(\RR)$ as a free $C^*_r(\ZZ)=C(\TT)$-module of rank 2.}\label{fig:Zgoodwannier} \end{figure}

As mentioned in the introduction, when $d=2$, the Chern class of $\cE_S$ obstructs choosing the $\phi_j$ to be continuous, thereby obstructing
the existence of a Wannier basis comprising {\em exponentially decaying} wavefunctions \cite{Brouder}, or even much more mildly decaying ones \cite{Kuchment}. This failure has been turned into a triumph in recent years, because of the experimental discovery and burgeoning theoretical interest in these \emph{topological insulators} in physics, usually characterised exactly by the topological nontriviality of $\cE_S$ as detected, e.g.\ by $K$-theory classes \cite{Kitaev,FM,Thiang}. 

The same obstruction can occur for a range of decay conditions on the Wannier basis. From a physical perspective, a working definition of a topological insulator is one for which $L^2_S(\RR^d)$ does not admit an ``atomic limit'' \cite{Bradlyn}, which we can think of as the nonexistence of localised Wannier bases for $L^2_S(\RR^d)$. For this, a weak decay condition is preferred, so as to argue that a topological insulator necessarily has very delocalised Wannier bases. On the other hand, when the obstruction is not present, we would like to be able to choose Wannier wavefunctions which are as regular (smooth) and/or localised as possible. Let us remark that the rather extreme condition of \emph{compactly supported} Wannier wavefunctions was considered in \cite{Read} in connection with algebraic $K$-theory obstructions, cf.\ our construction of $L^2_c(X,E)$ as a pre-Hilbert $\CC[\Gamma]$-module in \S\ref{SectionExamples}.

\emph{For our purposes, we say that a (Wannier) wavefunction is ``good'' if it belongs to the Schwartz class $\cS(\RR^d)$, cf.\ Definition \ref{defn:Schwartzclass}.}

\subsection{Existence of good Wannier basis: nonabelian symmetry groups}

Quite generally, we can ask the nonabelian-$\Gamma$ analogue of the good Wannier basis existence problem: 

\begin{problem}\label{prob:good.Wannier}
Given a spectral subspace $L^2_S(X,E)$ of an admissible $\Gamma$-invariant operator $D$ on $L^2(X,E)$ (as defined in \S\ref{sec:submodules.section}-\ref{main.theorem.section}), does there exist a good Wannier basis, i.e.\ a set $w_1, \dots, w_n \in \cS(X,E)$ which together with their translates $\gamma^* w_j, \gamma\in\Gamma$ , form an orthonormal basis for $L^2_S(X, E)$?
\end{problem}

When $\Gamma$ has polynomial growth, our Thm.~\ref{TheoremMainTheorem} says that the dense subspace $P_S=L^2_S(X,E)\cap L^2_\Gamma(X,E)$ of ``not-so-poorly decaying'' functions forms a f.g.p.\ Hilbert $C^*_r(\Gamma)$-module. Furthermore, if $P_S$ is freely generated by $w_1,\ldots, w_n$, we can even choose these $w_j$ to be ``good'', i.e.\ in $\cS(X,E)$. Thus, Thm.~\ref{TheoremMainTheorem} answers Problem \ref{prob:good.Wannier} in the affirmative, for arbitrary $L^2_S(X,E)$, if all f.g.p.\ modules over $C^*_r(\Gamma)$ are free. A simple example where this occurs is $\Gamma=\ZZ$ (see Fig.~\ref{fig:Zgoodwannier}), and a new wallpaper group example is given in the next subsection (see Fig.~\ref{fig:pgunitcell}). Generically, there can be f.g.p.\ $C^*_r(\Gamma)$-modules which are not free, in which case our Thm.~\ref{TheoremMainTheorem} answers Problem \ref{prob:good.Wannier} in the negative ---  we can only achieve a {\em tight frame} if we want $w_j$ to be in $P_S$ or better.

The semigroup structure of f.g.p.\ $C^*_r(\Gamma)$-modules is difficult to understand in general, so a tractable first step is to compute $K_0(C^*_r(\Gamma))$. For crystallographic groups, one can appeal to the Baum--Connes conjecture, or use algebraic topology methods after converting $K_0(C^*_r(\Gamma))$ to a twisted equivariant $K$-theory group of $\TT^d$ as in \cite{FM}. For physical application, a reasonable justification for restricting the search to stably (non-)free f.g.p.\ modules can be made, following \cite{Kitaev}: $L^2_S(X,E)$ may be supplemented by well-localised inner atomic shells (free $C^*_r(\Gamma)$-modules) which were not accounted for in the specification of $D$. Thus the $K$-theory of $C^*_r(\Gamma)$ provides physically meaningful invariants which label \emph{topological phases} given a group $\Gamma$ of symmetries \cite{Thiang}.

\begin{figure}[ht]
\includegraphics[width=\textwidth]{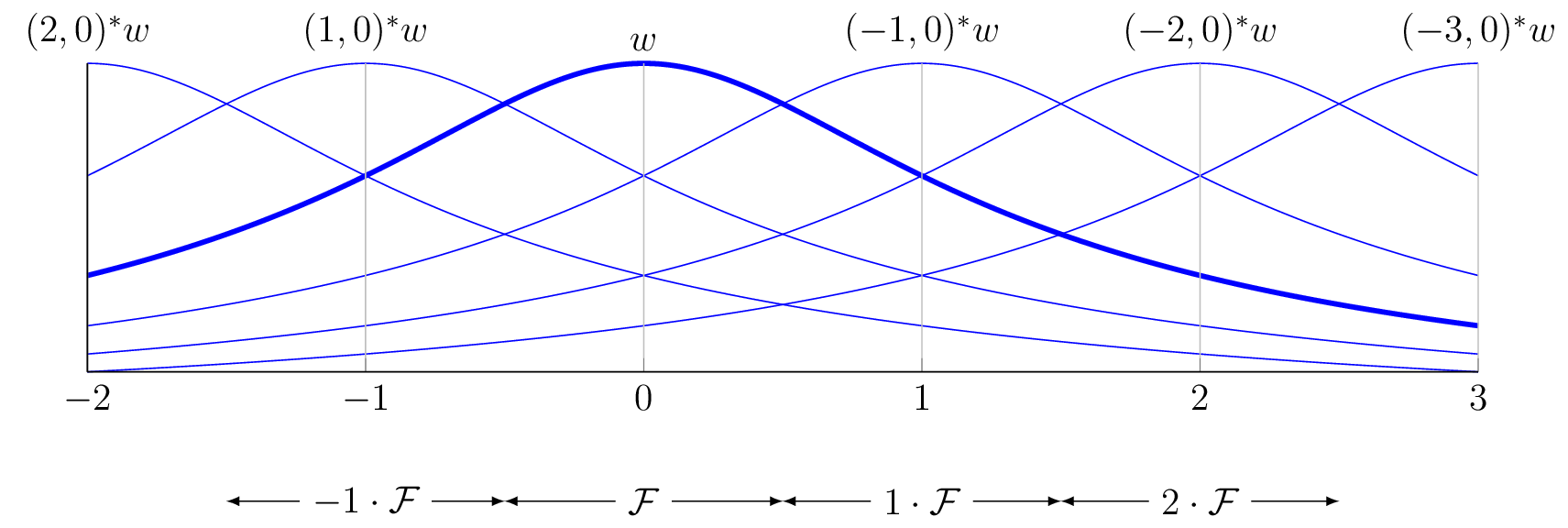}
\caption{If $P_S=L^2_S(\RR^2)\cap L^2_{\ZZ^2}(\RR^2)$ is a projective $C^*_r(\ZZ^2)=C(\TT^2)$-module which is not free, any Wannier basis for $L^2_S(\RR^2)$ must comprise wavefunctions with poor decay properties, as illustrated above with one spatial dimension suppressed.}\label{fig:Z2badwannier}
\end{figure}

\subsection{Crystallographic group examples}\label{sec:nonabelian.examples}
In this subsection, $E$ will be a trivial bundle over Euclidean $\RR^d$ and we omit reference to it. The $\Gamma$-invariant Hamiltonians whose spectral subspaces we are considering are assumed to satisfy the generic conditions of Thm.~\ref{ThmBlackBox}, so the conclusions of the main theorem \ref{TheoremMainTheorem} apply.

\medskip

\noindent {\bf Case $\Gamma=\ZZ^d$.}
The crucial difference between $\Gamma=\ZZ$ and $\Gamma=\ZZ^2$ is that all f.g.p.\ modules over $C^*_r(\ZZ)$ are free, whereas there are topologically nontrivial bundles over $\TT^2$, thus non-free f.g.p.\ modules over $C^*_r(\ZZ^2)$. A vector bundle over $\TT^2$ (f.g.p.\ $C^*_r(\ZZ^2)$-module) is trivial iff its first Chern class $c_1$ vanishes. Furthermore, the reduced $K$-theory of $\TT^2$ coincides with $c_1$ in this case, so a nonvanishing reduced $K$-theory class detects not just the failure of a f.g.p. $C^*_r(\ZZ^2)$-module to be stably free, but its failure to be free. 

The first Chern class obstruction, $c_1$, leads to ``Chern topological insulators'' in $d=2$, for which there is no good Wannier basis for $L^2_S(\RR^2)$. Note that we can measurably trivialise a topologically non-trivial Bloch bundle, and obtain a \emph{bad} Wannier basis for $L^2_S(\RR^2)$, e.g.\ \S3 of \cite{Kuchment}, as illustrated in Fig.\ \ref{fig:Z2badwannier}.
For $d>2$, there are further $K$-theory/Chern class obstructions to being free/stably free.

\medskip

\noindent {\bf Case $\Gamma=\ZZ\rtimes\ZZ$.}
An interesting non-abelian example is $\Gamma=\ZZ\rtimes\ZZ$, where the second copy of $\ZZ$ acts on the first by the nontrivial automorphism of reflection. In crystallography, this group appears as the 2D crystallographic space group (a ``wallpaper group'') $\pg$, and can be realised as a group of isometries of 2D Euclidean space $X=\RR^2$. Since $\Gamma$ is torsion free, the action is free and the fundamental domain is a manifold diffeomorphic to the Klein bottle, as illustrated in Fig.\ \ref{fig:pgunitcell}.

\begin{figure}[h]
\begin{center}
\subfigure{
\includegraphics[width=.55\textwidth]{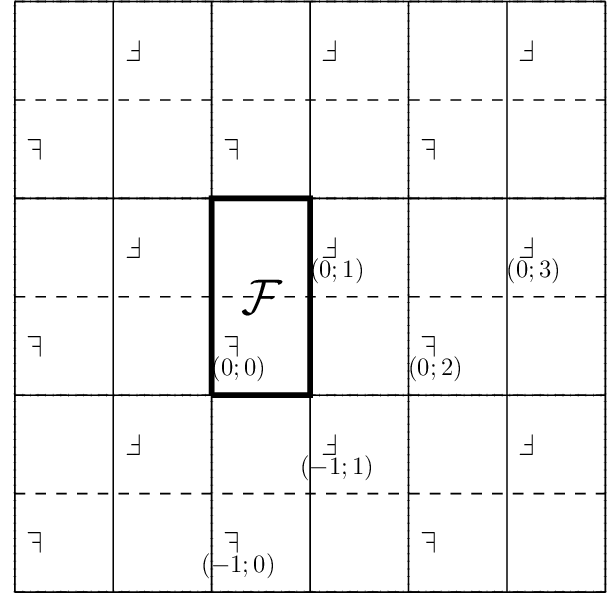}
}
\hspace{1.5em}
\subfigure{
\includegraphics[width=.25\textwidth]{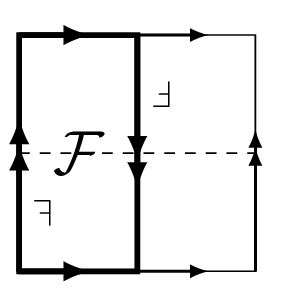}
}
\end{center}
\caption{(L) An embedding of $\pg$ in Euclidean space. The second copy of $\ZZ$ in $\pg=\ZZ\rtimes\ZZ$ acts by a \emph{glide reflection}, i.e.\ reflection in a horizontal glide axis (dotted line) followed by half-translation. Black solid lines enclose a choice of fundamental domain $\cF$, while the translates of $\cF$ are labelled uniquely by $(n_1;n_2)\in\pg$. (R) Illustration of $\cF$ as a Klein bottle, with opposite edges identified according to the arrows. A good Wannier basis always exists for a spectral subspace $L^2_S(\RR^2)$ of a generic $\pg$-invariant Hamiltonian, as in Corollary \ref{cor:pg.good.Wannier.basis}. We can visualise such a basis as smooth, localised ``atomic orbitals'' centered at the sites labelled by the $\Finv$.}\label{fig:pgunitcell}
\end{figure}

\begin{proposition}
Any finitely generated projective module over $C^*_r(\pg)$ is isomorphic to the free module $C^*_r(\pg)^n$ for some unique $n$.
\end{proposition}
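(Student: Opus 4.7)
The proposition decomposes naturally into a $K$-theory computation and a cancellation argument.

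For the $K$-theory, I would use the Pimsner--Voiculescu exact sequence. From the presentation $\pg=\ZZ\rtimes\ZZ$ with outer generator $g$ acting on the inner generator $t$ by $gtg^{-1}=t^{-1}$, identify $C^*_r(\pg)\cong C(\TT)\rtimes_\alpha\ZZ$, where $\alpha(f)(z)=f(z^{-1})$ is the Pontryagin-dual reflection. This reflection induces $\alpha_*=\mathrm{id}$ on $K_0(C(\TT))=\ZZ\cdot[1]$ (the trivial line bundle is preserved) but $\alpha_*=-\mathrm{id}$ on $K_1(C(\TT))=\ZZ\cdot[z]$ (since $[z]\mapsto[z^{-1}]=-[z]$). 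The relevant segment of the Pimsner--Voiculescu sequence then reads
\[
\ZZ\xrightarrow{\,1-\alpha_*=2\,}\ZZ\longrightarrow K_0(C^*_r(\pg))\longrightarrow\ZZ\xrightarrow{\,1-\alpha_*=0\,}\ZZ,
\]
yielding $K_0(C^*_r(\pg))\cong\ZZ$ with the image of $[1]\in K_0(C(\TT))$ as generator, which is precisely the class of the free rank-one module.

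It follows that any f.g.p.\ Hilbert $C^*_r(\pg)$-module $P$ satisfies $[P]=n\cdot[1]$ in $K_0$ for a unique $n\in\NN$ (uniqueness from the injectivity of the faithful canonical trace), and hence is stably isomorphic to $C^*_r(\pg)^n$. To upgrade the stable isomorphism to an honest one, I would exploit the subhomogeneous structure of $C^*_r(\pg)$. Since $g^2tg^{-2}=gt^{-1}g^{-1}=t$, the element $g^2$ is central in $\pg$, so $C^*_r(\langle t,g^2\rangle)\cong C(\TT^2)$ sits inside $C^*_r(\pg)$ with index~$2$, and one obtains an explicit embedding $C^*_r(\pg)\hookrightarrow M_2(C(\TT^2))$ whose image is the algebra of matrices
\[
\begin{pmatrix}f(z,w) & w\,g(z^{-1},w)\\ g(z,w) & f(z^{-1},w)\end{pmatrix},\qquad f,g\in C(\TT^2).
\]
This presents $C^*_r(\pg)$ as a type~I, subhomogeneous $C^*$-algebra whose irreducible representations have dimension at most two and whose primitive ideal space is two-dimensional (essentially the orbit space $\TT^2/\sigma$, a cylinder). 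Classifying projections in $M_n(C^*_r(\pg))$ in this concrete model reduces, away from the two fixed-point circles of $\sigma$, to the classification of vector bundles over a cylinder, whose Picard group is trivial; accounting for the boundary conditions at the fixed-point circles then shows that the only invariant of a projection is its $K_0$-class, giving $P\cong C^*_r(\pg)^n$.

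The main obstacle is the cancellation step. The Pimsner--Voiculescu calculation is routine, but promoting ``stably free of rank $n$'' to ``free of rank $n$'' requires ruling out all Picard-type obstructions for $C^*_r(\pg)$-modules. The cleanest route is via the explicit subhomogeneous realisation inside $M_2(C(\TT^2))$ above, with the delicate point being the careful handling of the degeneration of the matrix-bundle structure at the fixed-point circles $\{\pm 1\}\times\TT$ of $\sigma$.
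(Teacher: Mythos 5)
Your computation of $K_0(C^*_r(\pg))$ via Pimsner--Voiculescu is a legitimate alternative to the paper's route (which uses Baum--Connes and the fact that the Klein bottle is a $B\pg$ with no odd homology), and your identification of $\alpha_*=\mathrm{id}$ on $K_0(C(\TT))$ and $\alpha_*=-\mathrm{id}$ on $K_1(C(\TT))$ is correct. However, you have placed the two labels in the wrong slots: the segment of the Pimsner--Voiculescu sequence around $K_0$ reads
\[
K_0(C(\TT))\xrightarrow{\,1-\alpha_*=0\,}K_0(C(\TT))\longrightarrow K_0(C^*_r(\pg))\longrightarrow K_1(C(\TT))\xrightarrow{\,1-\alpha_*=2\,}K_1(C(\TT)),
\]
with the zero map on $K_0$ feeding in and the multiplication-by-$2$ map on $K_1$ flowing out. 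As you have displayed it, the sequence would yield a split extension $0\to\ZZ/2\to K_0\to\ZZ\to 0$, i.e.\ $K_0\cong\ZZ\oplus\ZZ/2$, rather than $\ZZ$; with the correct ordering one does get $K_0\cong\ZZ$ generated by $[1]$, as you intend.

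The genuine gap is in the cancellation step, which you yourself flag as incomplete. The embedding of $C^*_r(\pg)$ into $M_2(C(\TT^2))$ via the index-$2$ abelian subgroup $\langle t,g^2\rangle\cong\ZZ^2$ is a sensible structural observation, but the assertion that ``accounting for the boundary conditions at the fixed-point circles then shows that the only invariant of a projection is its $K_0$-class'' is precisely the content of the proposition being proved, and you offer no argument for it; a priori it is not clear that no $\ZZ/2$-valued obstruction to trivialising a projection survives at $z=\pm1$ where the bundle structure degenerates. The paper avoids this fixed-point analysis entirely by invoking Rieffel's Theorem~10.8 from \cite{Rieffel}: since $C(\TT)$ has topological stable rank $1$ and carries a faithful finite trace (extraction of the identity coefficient) that is invariant under the flip action of the outer copy of $\ZZ$, Rieffel's theorem applies directly to the crossed product $C(\TT)\rtimes\ZZ\cong C^*_r(\pg)$ and asserts that every stably free finitely generated projective module is already free. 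This single application of Rieffel's stable rank theory replaces your proposed bundle-theoretic analysis and is the key ingredient missing from your argument.
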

\begin{proof}
The Klein bottle is a $B\pg$, and invoking the Baum--Connes isomorphism \cite{BCH} and low-dimension of $B\pg$, we compute that 
$$\ZZ\cong H_0(B\pg)=H_{\rm even}(B\pg)\cong K_0(B\pg)\cong K_0(C^*_r(\pg)).$$
The isomorphism takes the generator to the class $[1]$ of the identity projection in $C^*_r(\pg)$, or equivalently, the free rank-1 module, cf.\ \S7 of \cite{Valette} or \S 5.3 of \cite{GT1}. Thus every f.g.p.\ $C^*_r(\pg)$-module is stably isomorphic to $C^*_r(\pg)^n$ with $n\cdot[1]$ its $K$-theory class. Now, we note that $C^*(\ZZ)$ has a canonical faithful finite trace which extracts the coefficient at the identity, and that this trace is invariant under the reflection action of the second $\ZZ$ in $\pg\cong\ZZ\rtimes\ZZ$. Furthermore, $C^*(\ZZ)\cong\TT$ has (topological) stable rank 1 in the sense that invertibles are dense in $C^*(\ZZ)$, cf.~Prop.~1.7 of \cite{Rieffel}. So Rieffel's Thm.~10.8 in \cite{Rieffel} applies, saying that every stably free f.g.p.\ $C^*_r(\pg)$-module is actually already free.
\end{proof}

This nice property of $\pg$ allows us to invoke Thm.~\ref{MainTheoremIntro}, to answer the good Wannier basis existence problem in the affirmative:
\begin{corollary}\label{cor:pg.good.Wannier.basis}
Let the crystallographic group $\Gamma=\pg=\ZZ\rtimes\ZZ$ act on the Euclidean plane $X=\RR^2$ as above, and let $D$ be a Hamiltonian satisfying the conditions of Thm.~\ref{MainTheoremIntro}. Then for compact separated part $S \subset \mathrm{Spec}(D)$, the module $P_S$ from Thm.~\ref{TheoremMainTheorem} is a \emph{free} $C^*_r(\pg)$-module (of rank $n$ say), so that there exists a good Wannier basis $\gamma^*w_j$, $j=1,\ldots,n$, $\gamma\in\pg$, for $L^2_S(\RR^2)$.
\end{corollary}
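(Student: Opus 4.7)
The proof is a direct combination of three ingredients already in place: polynomial growth of $\pg$, the preceding proposition identifying all f.g.p.\ $C^*_r(\pg)$-modules as free, and Theorem~\ref{MainTheoremIntro} (or equivalently Theorem~\ref{TheoremMainTheorem}). There is essentially no new obstacle; the work consists in verifying that the hypotheses of the main theorem are met in this particular setting.

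First, I would check the ``Basic Setup'' assumptions. The group $\pg = \ZZ \rtimes \ZZ$ acts by Euclidean isometries on $X = \RR^2$; the action is free (hence effective), properly discontinuous, and cocompact (a fundamental domain is the Klein bottle illustrated in Fig.~\ref{fig:pgunitcell}). Since $\pg$ contains $\ZZ^2$ as an index-$2$ normal subgroup of translations, it is virtually abelian and therefore of polynomial growth (alternatively by Gromov's theorem, since it is clearly virtually nilpotent). The bundle $E$ is trivial, so the $\Gamma$-equivariance and $\Gamma$-invariant fiber metric conditions are satisfied automatically. Finally, the hypotheses on $D$ are exactly those required by Theorem~\ref{MainTheoremIntro}.

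Next, I would apply Theorem~\ref{MainTheoremIntro} (and its finer quantitative version in \S\ref{main.theorem.section}) to conclude that $P_S := L^2_S(\RR^2) \cap L^2_\pg(\RR^2)$ is a finitely generated projective Hilbert $C^*_r(\pg)$-module. By the preceding proposition, every f.g.p.\ $C^*_r(\pg)$-module is isomorphic to $C^*_r(\pg)^n$ for a unique $n \in \NN_0$; apply this to $P_S$ to obtain its rank~$n$.

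Finally, having $P_S$ free of rank $n$, the equivalence (i)~$\Leftrightarrow$~(ii) in Theorem~\ref{MainTheoremIntro} supplies sections $w_1, \dots, w_n \in \cS(\RR^2)$ whose $\pg$-translates $\gamma^* w_j$ form an orthonormal basis of $L^2_S(\RR^2)$. This is exactly a good Wannier basis in the sense of Problem~\ref{prob:good.Wannier}, completing the corollary. The main theorem was built precisely to make this step immediate, so the only non-routine issue in the chain is the $K$-theoretic input in the preceding proposition (Baum--Connes plus Rieffel's stable-rank-one argument), which has already been handled.
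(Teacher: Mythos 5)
Your proposal is correct and follows essentially the same route the paper takes: the corollary is treated as an immediate consequence of the preceding proposition (every f.g.p.\ $C^*_r(\pg)$-module is free) together with Theorem~\ref{MainTheoremIntro}/Theorem~\ref{TheoremMainTheorem}. Your careful verification of the Basic Setup hypotheses (free properly discontinuous cocompact isometric action, polynomial growth via the index-$2$ normal subgroup $\ZZ^2 \leq \pg$, trivial bundle $E$) is exactly the checking the paper leaves implicit, and the rest of the chain matches the paper's intended argument.
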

To our knowledge, our technique is the first one that can achieve a result such as Corollary \ref{cor:pg.good.Wannier.basis} for nonabelian $\Gamma$.
\medskip

\begin{figure}[ht]
\includegraphics[width=\textwidth]{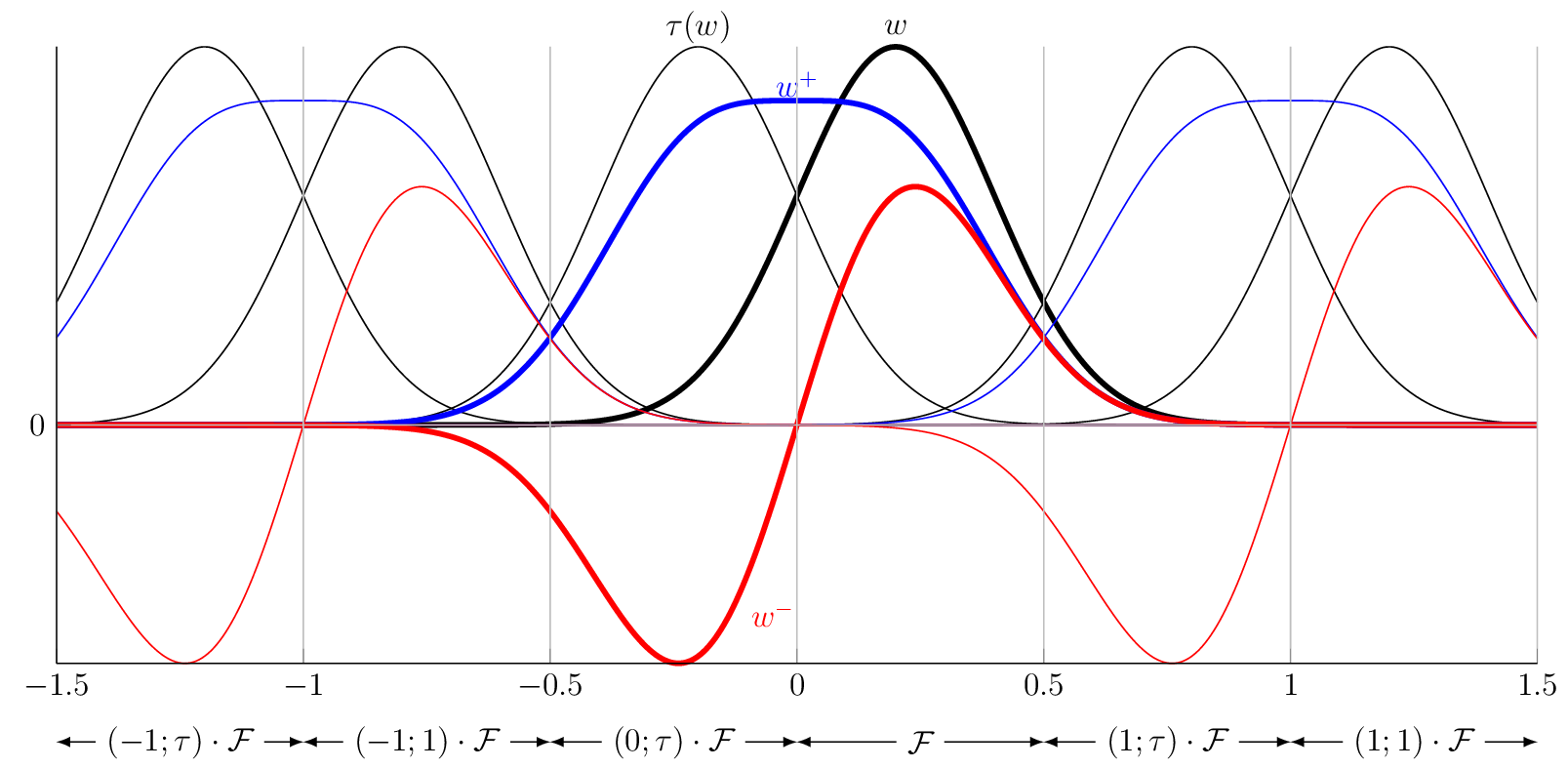}
\caption{If the module $P_S$ corresponding to  a spectral subspace $S \subset \mathrm{Spec(D)}$ is a free $C^*_r(\ZZ\rtimes\ZZ_2)$-module, a good Wannier basis comprising $w$ and its mutually orthogonal translates by $\gamma\in\ZZ\rtimes\ZZ_2$ can be constructed (black). This free module may be written as $P_S^+\oplus P_S^-$ where the submodule $P_S^+$ (resp.\ $P_S^-$) is generated (non-freely) by the thick blue (resp.\ thick red) Wannier function $w^+:=\frac{1}{\sqrt{2}}(w+\tau(w))$ (resp.\ $w^-:=\frac{1}{\sqrt{2}}(w-\tau(w))$). Notice that $\tau(w^+)$ and $w^+=\tau(w^+)$ are \emph{not} orthogonal, similarly for $\tau(w^-)$ and $w^-=-\tau(w^-)$. Decompose $L^2_S(\RR)=L^2_{S^+}(\RR)\oplus L^2_{S-}(\RR)$ where the $\ZZ\rtimes\ZZ_2$-invariant subspace $L^2_{S^+}(\RR)$ is spanned by the $\ZZ$-translates of $w^+$ and similarly for $L^2_{S^-}(\RR)$; their respective f.g.p.\ $C^*_r(\ZZ\rtimes\ZZ_2)$-modules are $P_S^+$ and $P_S^-$. A tight frame for $L^2_{S^+}(\RR)$ comprises $\frac{1}{\sqrt{2}}w^+$, $\frac{1}{\sqrt{2}}\tau(w^+)$ and their translates by $n\in\ZZ$; this tight frame has repeated elements since $w^+=\tau(w^+)$, so it is not an orthonormal basis. The case is  similar for $L^2_{S^-}$.}\label{fig:reflectionWannier}
\end{figure}

\noindent {\bf Case $\Gamma=\ZZ\rtimes\ZZ_2$.}
Maybe the simplest non-abelian crystallographic group is $\ZZ\rtimes\ZZ_2$, with $\ZZ_2$ acting on $\ZZ$ as $n\mapsto -n$. The group $\ZZ\rtimes\ZZ_2$ acts on $\RR$ with the (lift of the) generator $\tau$ of $\ZZ_2$ effecting reflection about some origin. We may compute that
$$K_0(C^*_r(\ZZ\rtimes\ZZ_2))\cong R(\ZZ_2)\oplus\ZZ M^\pm\cong\ZZ^2\oplus\ZZ=\ZZ^3,$$
where $R(\ZZ_2)\cong \ZZ^2$ denotes the representation ring of $\ZZ_2$, and $M^\pm$ is an extra projective $C^*_r(\ZZ\rtimes\ZZ_2)$-module which we will construct later (Fig.\ \ref{fig:reflectionWannierweird}). 
This computation should be contrasted with
$$K_0(C^*_r(\ZZ\times\ZZ_2))\cong K_0(C^*_r(\ZZ))\otimes K_0(C^*_r(\ZZ_2))\cong \ZZ\otimes_\ZZ R(\ZZ_2)\cong\ZZ^2.$$
The $R(\ZZ_2)\subset K_0(C^*_r(\ZZ\rtimes\ZZ_2))$ part is generated by f.g.p.\ $C^*_r(\ZZ\rtimes\ZZ_2)$-modules induced from irreducible $\ZZ_2$-representations as follows. There is a natural action $\alpha$ of $\ZZ_2$ on $C^*_r(\ZZ)$. Let $V$ be a finite-dimensional representation of $\ZZ_2$ and endow $\cE=V\otimes C^*_r(\ZZ)$ with the diagonal $\ZZ_2$ action. Then $\cE$ is a ``free $(C^*_r(\ZZ),\ZZ_2,\alpha)$-module'' in the equivariant $K$-theory sense described in \S11.2 of \cite{Blackadar}, and f.g.p.\ $(C^*_r(\ZZ),\ZZ_2,\alpha)$-modules are direct summands of such $\cE$.  There is a natural way to turn f.g.p.\ $(C^*_r(\ZZ),\ZZ_2,\alpha))$-modules into f.g.p.\ $C^*_r(\Gamma)\rtimes \ZZ_2\cong C^*_r(\ZZ\rtimes\ZZ_2)$-modules, giving a correspondence of the equivariant $K_0^{\ZZ_2}(C^*_r(\ZZ))$ and $K_0(C^*_r(\ZZ\rtimes\ZZ_2))$, see \S11.7 of \cite{Blackadar}. If we let $V^{\rm reg}$ be the regular representation of $\ZZ_2$ (which contains one copy each of the trivial and sign representations), then the above construction recovers $V^{\rm reg}\otimes  C^*_r(\ZZ)$ as the basic free $C^*_r(\ZZ\rtimes\ZZ_2)$-module, and exhibits it as a direct sum of $V^{\rm triv}\otimes  C^*_r(\ZZ)$ and $V^{\rm sign}\otimes  C^*_r(\ZZ)$, see Fig.\ \ref{fig:reflectionWannier}. There is another summand $M^\pm$ of $V^{\rm reg}\otimes  C^*_r(\ZZ)$, illustrated in Fig.\ \ref{fig:reflectionWannierweird}.

Alternatively, in the Fourier transformed picture, $K_0(C^*_r(\ZZ\rtimes\ZZ_2))\cong K_0^{\ZZ_2}(C^*_r(\ZZ))\cong K^0_{\ZZ_2}(\TT)$ where $\TT$ has the dual ``flip'' action of character-conjugation (momentum-reversal in physics). Then one computes $K^0_{\ZZ_2}(\TT)\cong \ZZ^3$, with generators explicitly realised by equivariant line bundles, cf.\ Lemma 4.3 of \cite{Gomi}, which are physically the Bloch bundles corresponding to  $V^{\rm triv}\otimes  C^*_r(\ZZ), V^{\rm sign}\otimes  C^*_r(\ZZ)$, and $M^\pm$.

\begin{figure}[ht]
\includegraphics[width=\textwidth]{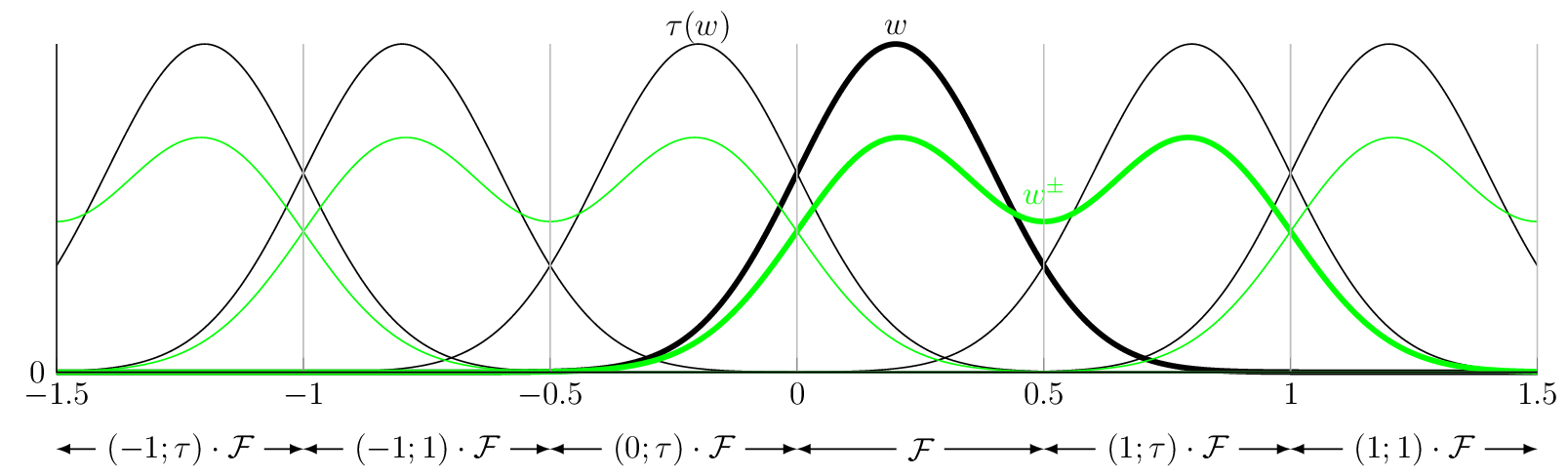}
\caption{Inside the rank-1 free $C^*_r(\ZZ\rtimes\ZZ_2)$-module $P_S$ of Fig.\ \ref{fig:reflectionWannier}, a submodule $M^\pm$ can be generated by $w^\pm:=\frac{1}{\sqrt{2}}(w+(1;\tau)^*w)$ (dark green). Note that $w^\pm$ has reflection centre at $0.5$, whereas $w^+$ of Fig.\ \ref{fig:reflectionWannier} has reflection centre at $0$.}\label{fig:reflectionWannierweird}
\end{figure}

\subsection{Magnetic translations and twisted group algebras}\label{sec:twisted.algebras}
Let $X$ be contractible (vanishing first and second cohomology also suffices), and let $\mathbf{B}$ be a $\Gamma$-invariant closed 2-form on $X$. For ${\rm dim}(X)=2$, $\mathbf{B}$ might arise as the curvature form of a magnetic field perpendicular to a surface. Pick a 1-form $\mathbf{A}$ (connection/vector potential) such that $d\mathbf{A}=\mathbf{B}$, and an origin $x_0\in X$.
For $\gamma \in \Gamma$,
let $\psi_\gamma $ be a function on $X$
satisfying $\gamma^*\mathbf{A}-\mathbf{A} = d \psi_\gamma$, such that
$\psi_\gamma(x_0)=0$ for all $\gamma  \in \Gamma$. For instance, take $\psi_\gamma(x)=\int_{x_0}^x (\gamma^*\mathbf{A}-\mathbf{A})$.
Let $S_\gamma$ denote multiplication by the phase function $\exp(i \psi_\gamma(x))$. Then the (left) \emph{magnetic translations} $T_\gamma^\sigma=(\gamma^{-1})^*\circ S_\gamma$
satisfy $T^\sigma_e={\rm{Id}}, \,\, T^\sigma_{\gamma_1}
T^\sigma_{\gamma_2} = \sigma(\gamma_1, \gamma_2)T^\sigma_{\gamma_1\gamma_2}$,
where $\sigma:\Gamma\times\Gamma\rightarrow {\rm U}(1)$ satisfies the group 2-cocycle condition,
\begin{align*}
\sigma(\gamma,e)&=\sigma(e,\gamma)=1,\qquad\qquad\qquad\gamma\in\Gamma,\\
\sigma(\gamma_1,\gamma_2)\sigma(\gamma_1\gamma_2,\gamma_3)&=\sigma(\gamma_1,\gamma_2\gamma_3)\sigma(\gamma_2,\gamma_3),\qquad\gamma_1,\gamma_2,\gamma_3,\in\Gamma.
\end{align*}
Here, we may verify each $\sigma(\gamma_1,\gamma_2)$ is indeed independent of $x\in X$, and is just a unimodular number.

Thus these magnetic translations furnish a \emph{projective} unitary representation of $\Gamma$ on $L^2(X)$ with cocycle $\sigma$, and one can verify that they commute with $\nabla_\mathbf{A}=d-i\mathbf{A}$ and thus with the magnetic Laplacian
$\frac{1}{2}\nabla_\mathbf{A}^*\nabla_\mathbf{A}$. A $\Gamma$-periodic potential may also be added.

On $\ell^2(\Gamma)$, the left regular representation can be twisted by $\sigma$, by taking
$(L^\sigma_\gamma a)(\gamma^\prime)=a(\gamma^{-1}\gamma^\prime)\sigma(\gamma,\gamma^{-1}\gamma^\prime)$,
obtaining the left $\sigma$-regular representation. We also have the right $\overline{\sigma}$-regular representation,
$(R^{\overline{\sigma}}_\gamma a)(\gamma^\prime)=a(\gamma^\prime\gamma)\overline{\sigma}(\gamma^\prime,\gamma)$,
where $\overline{\sigma}$ denotes the dual cocycle to $\sigma$. Using the cocycle identity, one sees that these projective representations commute with each other. 
We can construct the twisted group algebra $\CC[\Gamma,\overline{\sigma}]$ of finite linear combinations of $L^{\overline{\sigma}}_\gamma$, as well as the norm completion $C^*_r(\Gamma,\overline{\sigma})$ in $\cL(\ell^2(\Gamma))$.
Example \ref{ex:Roe.module} may be modified accordingly to obtain, inside $L^2(X)$, a right Hilbert $C^*_r(\Gamma,\overline{\sigma})$-module $L^2_{\Gamma,\sigma}(X)$ isomorphic to $L^2(\mathcal{F})\otimes C^*_r(\Gamma,\overline{\sigma})$ (cf.\ Proposition \ref{PropIsoToStandardModule}) via the map $\Phi_\mathcal{F}$ of Eq.\ \eqref{IsoPhi}. Up to this point, the discussion in this Subsection can also be found in \S B of \cite{Gruber} with minor differences in conventions.

\vspace{1em}

The good Wannier basis existence problem (Problem \ref{prob:good.Wannier}) carries over to this $\sigma$-twisted setting. Namely, one considers spectral subspaces $L^2_S(X)$ of admissible Hamiltonians acting on $L^2(X)$ which are $(\Gamma,\sigma)$-invariant, while the notion of Wannier basis just needs a modification of ``translates'' by ``$\sigma$-translates.'' 
A Wannier basis is thus some number of copies of the left $\sigma$-regular representation (realised inside $L^2(X)$).

If $\Gamma$ has polynomial growth, we can also construct the Fr\'{e}chet algebras of rapidly decaying sequences $H^\infty(\Gamma,\overline{\sigma},A)$ as in \S \ref{SectionRapidDecay}, since the relevant estimates are unaffected by the ${\rm U}(1)$ scalings. In particular, Lemma \ref{LemmaPolGrowth} and Proposition \ref{SectionRapidDecay} carry over in an identical way. Then the abstract results of \S\ref{sec:fgp.pre.Hilbert} and \S\ref{sec:submodules.section} follow, as does the Main theorem \ref{TheoremMainTheorem}. Thus we deduce that the solution to Problem \ref{prob:good.Wannier} in the $\sigma$-twisted setting is solved by checking whether 
\begin{equation*}
P_S=L^2_S(X)\cap L^2_{\Gamma,\sigma}(X)
\end{equation*}
is a free f.g.p.\ Hilbert $C^*_r(\Gamma,\overline{\sigma})$-module.

Let us mention that for $X$ the Euclidean plane, $\Gamma$ a lattice $\ZZ^2$ of translations, and $\mathbf{B}$ a constant magnetic field, the algebra $C^*_r(\ZZ^2,\overline{\sigma})$ is just the famous noncommutative torus, and there exist  non-trivial projective modules over $C^*_r(\ZZ^2,\overline{\sigma})$ --- the detection of such modules via $K$-theory and noncommutative Chern characters is of great importance in the quantum Hall effect \cite{Bellissard} modelled typically by magnetic Schr\"{o}dinger operators. Furthermore, the Baum--Connes conjectures also admit twisted versions, facilitating the computation of $K_0(C^*_r(\Gamma,\overline{\sigma}))$, in particular whether any non stably-free projective modules exist or not. If a spectral subspace (e.g.\ of a quantum Hall Hamiltonian) has $P_S$ being such a non-free module, then it cannot admit a good Wannier basis. Our analysis corroborates the physics heuristic that magnetic fields are usually responsible for lack of localization of the Wannier basis functions.

\subsection{A non-Euclidean example}\label{sec:non-Euclidean.example}
Consider the Heisenberg Lie group manifold 
$${\rm Heis}_\RR=\left\{\begin{pmatrix} 1 & x & z \\ 0 & 1 & y \\ 0 & 0 & 1 \end{pmatrix}\,:\, x,y,z\in\RR\right\}.$$ Its Lie algebra elements, i.e.\ tangent vectors at the identity, extend to left invariant vector fields on ${\rm Heis}_\RR$. An inner product on the Lie algebra similary extends to a left-invariant Riemannian metric on ${\rm Heis}_\RR$. Thus ${\rm Heis}_\RR$ is topologically $\RR^3$, but geometrically very different. Physically, such a non-Euclidean geometry could model a uniform density of screw dislocations along the $z$-direction \cite{Kleinert,HMT}. By restricting to $x,y,z\in\ZZ$, one obtains the the discrete subgroup ${\rm Heis}_\ZZ$, which of course acts on ${\rm Heis}_\RR$ isometrically by translations. Note that there is a central extension
$$1\rightarrow\ZZ\rightarrow{\rm Heis}_\ZZ\rightarrow \ZZ^2\rightarrow 1,$$
and decomposing over the character space $\TT$ of the central subgroup furnishes $C^*_r({\rm Heis}_\ZZ)$ as a continuous field $\{A_\theta\}_{\theta\in\TT}$ of noncommutative tori. It is known that $K_0(C^*_r({\rm Heis}_\ZZ))\cong \ZZ^3$, see e.g.\ \cite{AP}, with one of the generators being the class of the trivial projection, and that the $K$-theory class of a general projection may be computed via parings with cyclic cocycles \cite{HMT}. So we may compute in principle whether a f.g.p.\ $C^*_r({\rm Heis}_\ZZ)$-module is stably free or not.

\begin{proposition}
If a f.g.p.\ module over $C^*_r({\rm Heis}_\ZZ)$ is stably free, it is even a free module.
\end{proposition}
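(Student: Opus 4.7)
My plan is to follow the template set by the proposition for $\pg$ in the preceding subsection, and reduce ``stably free $\Rightarrow$ free'' to the cancellation property for Hilbert $C^*_r(\mathrm{Heis}_\ZZ)$-modules. The latter is known to follow whenever the coefficient algebra has topological stable rank one, by Rieffel's Thm.~10.8 in \cite{Rieffel} (equivalently, Prop.~6.5.1 of \cite{Blackadar}): stable rank one gives $M\oplus A^n\cong N\oplus A^n \Rightarrow M\cong N$, and applying this with $N=A^n$ to a stably free $M=P$ of rank $n$ yields $P\cong A^n$.

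First I would present $C^*_r(\mathrm{Heis}_\ZZ)$ as a crossed product. Writing generators $X, Y, Z$ with $[X,Y]=Z$ central, we have $\mathrm{Heis}_\ZZ\cong\ZZ^2\rtimes\ZZ$ with $\ZZ^2=\langle Y,Z\rangle$ and the conjugation action of $X$ sending $Y\mapsto YZ$, $Z\mapsto Z$. Fourier-dualising the normal $\ZZ^2$ gives
\begin{equation*}
C^*_r(\mathrm{Heis}_\ZZ)\;\cong\;C(\TT^2)\rtimes_\phi\ZZ, \qquad \phi(\alpha,\beta)=(\alpha\beta,\beta),
\end{equation*}
an Anzai-type skew product. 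The unit corresponds to $1\in C(\TT^2)$, and the class $[1]\in K_0\cong\ZZ^3$ is the image of the rank-one class under the Pimsner--Voiculescu map, placing the stably-free classes along the sublattice $\ZZ\cdot[1]$.

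The main obstacle is the second step: establishing $\mathrm{sr}(C^*_r(\mathrm{Heis}_\ZZ))=1$. A purely fiberwise argument via the central disintegration $C^*_r(\mathrm{Heis}_\ZZ)\cong\{\text{continuous sections of }\{A_\theta\}_{\theta\in\TT}\}$ does \emph{not} suffice: although Rieffel proved that the irrational rotation algebras have stable rank one, the rational fibers are Morita equivalent to $C(\TT^2)$ and thus have stable rank two, so the property can fail to descend to the full field. One must exploit the structure of the entire section algebra instead. I would appeal either to Sudo's work on ranks of discrete nilpotent group $C^*$-algebras, or adapt Phillips--Lin style arguments, where the fact that $\mathrm{Heis}_\ZZ$ is a torsion-free amenable group of polynomial growth with a unique tracial state (and $C^*_r(\mathrm{Heis}_\ZZ)$ presents as a concrete inductive limit of subhomogeneous building blocks over $\TT$) is used to force invertibles to be dense.

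Having stable rank one, the rest is automatic: by Rieffel's cancellation theorem, a projective module $P$ with $[P]=n[1]\in K_0(C^*_r(\mathrm{Heis}_\ZZ))$ is isomorphic to $C^*_r(\mathrm{Heis}_\ZZ)^n$, completing the proof. A secondary check is uniqueness of $n$, which follows from pairing $[P]$ with the canonical faithful tracial state $\tau$ (coefficient at the identity), giving $\tau_*[P]=n$.
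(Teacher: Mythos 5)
Your proposal hinges on establishing $\mathrm{sr}(C^*_r(\mathrm{Heis}_\ZZ))=1$ (topological stable rank one), and this is where the argument fails irreparably: the topological stable rank is in fact at least $2$. Evaluation at a rational $\theta=p/q\in\TT$ in the central disintegration is a \emph{surjective} $*$-homomorphism onto $A_\theta$, which is Morita equivalent to $C(\TT^2)$ and has $\mathrm{sr}(A_\theta)=2$; since topological stable rank can only decrease under passage to quotients (Rieffel, Thm.~4.3), we get $\mathrm{sr}(C^*_r(\mathrm{Heis}_\ZZ))\geq 2$. So the difficulty you flag is not a mere obstruction to ``descending'' a fiberwise property --- it is a proof that the property you want cannot hold. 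Neither Sudo's work nor a Phillips--Lin--type density argument can rescue this; indeed Sudo computes $\mathrm{csr}(C^*_r(\mathrm{Heis}_\ZZ))=2$, not stable rank one. (A side remark: $C^*_r(\mathrm{Heis}_\ZZ)$ also does not have a unique tracial state, since each fiber contributes tracial states.)

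The paper circumvents exactly this issue by working with Rieffel's \emph{general stable rank} $\mathrm{gsr}$, which is a strictly weaker invariant ($\mathrm{gsr}\leq\mathrm{csr}\leq\mathrm{sr}+1$) and is precisely the one governing ``stably free $\Rightarrow$ free'' (Rieffel, Cor.~10.7). Concretely: Sudo's $\mathrm{csr}=2$ bounds $\mathrm{gsr}\in\{1,2\}$, and then the argument from the \emph{proof} of Rieffel's Thm.~10.8 (not the theorem's statement, which requires a crossed-product picture with a stable-rank-one base as in the $\pg$ case) shows that the existence of a faithful finite trace forces $\mathrm{gsr}=1$. Your plan as written --- cancellation from $\mathrm{sr}=1$ --- would be the right strategy if $\mathrm{sr}$ were $1$, but since it is $2$, you must instead pass to $\mathrm{gsr}$ and use the trace argument to rule out $\mathrm{gsr}=2$.
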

\begin{proof}
We require, for a $C^*$-algebra $A$, the notions of general stable rank, gsr($A$), and connected stable rank, csr($A$), as defined in \cite{Rieffel}. They are related by $1\leq$ gsr($A$)$\leq$csr($A$), see pp.\ 328 of \cite{Rieffel}, and the computation csr($C^*_r({\rm Heis}_\ZZ)$)$=2$ was carried out in \cite{Sudo}, so we have gsr($C^*_r({\rm Heis}_\ZZ))=1$ or $2$. As in the proof of Thm.~10.8 of \cite{Rieffel}, to show that gsr($C^*_r({\rm Heis}_\ZZ))=1$, it suffices to observe that $C^*_r({\rm Heis}_\ZZ)$ has a faithful finite trace.  Then by Corollary 10.7 of \cite{Rieffel}, gsr($C^*_r({\rm Heis}_\ZZ)$)$=1$ implies the claim of this Proposition.
\end{proof}

Note that ${\rm Heis}_\ZZ$ is nilpotent, thus of polynomial growth, and acts freely and isometrically on ${\rm Heis}_\RR$ with quotient/fundamental domain a compact nilmanifold. Thus we may apply our Main Thm.~\ref{MainTheoremIntro} to conclude that when a spectral subspace $L^2_S({\rm Heis}_\RR)$ of a ${\rm Heis}_\ZZ$-invariant Hamiltonian $D$ (satisfying the generic conditions of that theorem) has $P_S$ with $K$-theory class $n\cdot[1]\in K_0(C^*_r({\rm Heis}_\ZZ))$, we can already conclude that a good Wannier basis $\gamma^*w_j$, $j=1,\ldots,n$, $\gamma\in\mathrm{Heis}_\ZZ$ for $L^2_S({\rm Heis}_\RR)$ exists.

\medskip
\textbf{Acknowledgements.} We would like to thank Vito Zenobi, Varghese Mathai, Giuseppe De Nittis, Gianluca Panati, and Domenico Monaco for helpful discussion. The first-named author was supported by Australian Research Council Discovery Project grant FL170100020, under Chief Investigator and Australian Laureate Fellow Mathai Varghese, and the second-named author by Australian Research Council Discovery Early Career Researcher Award grant DE170100149 and Discovery Projects DP200100729. 

\textbf{Data availability statement.}
Data sharing is not applicable to this article as no new data were created or analyzed in this study.


\begin{thebibliography}{9}
\bibliographystyle{plain}
\bibitem{AP}
J. Anderson, W. Paschke.: The rotation algebra. Houston J. Math. {\bf 15}(1) 1--26 (1989)

\bibitem{BGP}
C. B\"ar, N. Ginoux, F. Pf\"aeffle.: Wave Equations on Lorentzian Manifolds and Quantization (ESI Lectures in Mathematics and Physics), EMS Publishing House, 2007.

\bibitem{BCH}
P. Baum, A. Connes, N. Higson.: Classifying space for proper actions and $K$-theory of group $C^*$-algebras. $C^*$-algebras: 1943--1993 (San Antonio, TX, 1993), Contemp. Math. {\bf 167}, Amer. Math. Soc. (1994), 240--291.

\bibitem{Bellissard}
J. Bellissard.: $K$-theory of $C^*$-algebras in solid state physics. Statistical mechanics and field theory: mathematical aspects, Springer, pp.\ 99--156, 1986

\bibitem{Blackadar}
B. Blackadar.: $K$-theory for operator algebras. Math. Sci. Res. Inst. Publ. vol. 5, Cambridge University Press, 1998.

\bibitem{BCR}
C. Bourne, A. Carey, A. Rennie.: The bulk-edge correspondence for the quantum Hall effect in Kasparov theory. Lett. Math. Phys. {\bf 105}(9) 1253--1273 (2015)

\bibitem{Bradlyn}
B. Bradlyn, L. Elcoro, J. Cano, M.G. Vergniory, Z. Wang, C. Felser, M.I. Aroyo, B.A. Bernevig: Topological quantum chemistry. Nature {\bf 547}(7663) 298 (2017)

\bibitem{Brouder}
C. Brouder, G. Panati, M. Calandra, C. Mourougane, N. Marzari.: Exponential Localization of Wannier Functions in Insulators. Phys. Rev. Lett. {\bf 98} 046402 (2007)

\bibitem{BrownOzawa}
N. P. Brown, N. Ozawa.: {$C^*$}-algebras and finite-dimensional approximations. Amer. Math. Soc., Graduate Studies in Mathematics vol. 88, 2008.

\bibitem{BS}
J. Br\"{u}ning, T. Sunada.: On the spectrum of periodic elliptic operators. Nagoya Math. J. {\bf 126} 159--171 (1992)

\bibitem{CGT}
J. Cheeger, M. Gromov, M. Taylor.: Finite propagation speed, kernel estimates for functions of the Laplace operator, and the geometry of complete Riemannian manifolds. J. Diff. Geom. {\bf 17}(1) 15--53 (1982)

\bibitem{DesCloi}
J. des Cloizeaux.: Energy Bands and Projection Operators in a Crystal: Analytic and Asymptotic Properties. Phys. Rev. {\bf 135}(3A) 685--697 (1964)

\bibitem{Cornean}
H.D. Cornean, I. Herbst, G. Nenciu.: On the Construction of Composite Wannier Functions. Ann. Henri Poincar\'{e} {\bf 17} 3361--3398 (2016)

\bibitem{CMM}
H.D. Cornean, D. Monaco, M. Moscolari.: Parseval frames of exponentially localized magnetic Wannier functions. Commun. Math. Phys. {\bf 371} 1179--1230 (2019)

\bibitem{DNPanati}
G. De Nittis, G. Panati.: The topological Bloch--Floquet transform and some applications. In Operator Theory: Advances and Applications, Vol. 224, 67--105, Springer, 2012

\bibitem{GGT}
C. Dru\c{t}u, M. Kapovich.: Geometric Group Theory. Amer. Math. Soc. Colloq. Publ., vol. 63, 2018

\bibitem{Engel}
A. Engel: Rough index theory on spaces of polynomial growth and contractibility. J. Noncommutative. Geom. {\bf 13}(2) 617--666 (2019)

\bibitem{EwertMeyer}
E. Ewert, R. Meyer.: Coarse geometry and topological phases. Commun. Math. Phys. {\bf 366}(3) 1069--1098 (2019)

\bibitem{FM}
D.S. Freed, G. Moore.: Twisted equivariant matter. Ann. Henri Poincar\'{e} {\bf 14}(8) 1927--2023 (2013)

\bibitem{Gomi}
K. Gomi.: A Variant of $K$-theory and Topological T-duality for Real Circle Bundles. Commun. Math. Phys. {\bf 334} 923--975 (2015)

\bibitem{GT1}
K. Gomi, G.C. Thiang.: Crystallographic bulk-edge correspondence: glide reflections and twisted mod 2 indices. Lett. Math. Phys. {\bf 109}(4) 857--904 (2019)

\bibitem{Gromov}
M. Gromov, with an appendix by Jacques Tits.: Groups of polynomial growth and expanding maps. Publ. Math. Inst. Hautes \'{E}tudes Sci. {\bf 53} 53--78 (1981)

\bibitem{Gruber}
M.J. Gruber.: Noncommutative Bloch theory. J. Math. Phys. {\bf 42}(6) 2438--2465 (2001)

\bibitem{Han}
D. Han, K. Kornelson, D. Larson, E. Weber.: Frames for undergraduates, Amer. Math. Soc., Student Math. Library vol. 40, 2007.

\bibitem{HMT}
K.C. Hannabuss, V. Mathai, G.C. Thiang.: T-duality simplifies bulk-boundary correspondence: the parametrised case. Adv. Theor. Math. Phys. {\bf 20}(5) 1193--1226 (2016)

\bibitem{Ji}
R. Ji.: Smooth Dense Subalgebras of Reduced Group $C^*$-Algebras, Schwartz Cohomology of Groups, and Cyclic Cohomology. J. Funct. Anal. {\bf 107}(1) 1--33 (1992)

\bibitem{Jolissaint}
P. Jolissaint.: Rapidly Decreasing Functions in Reduced $C^*$-Algebras of Groups. Trans. Amer. Math. Soc. {\bf 317}(1) 167--196 (1990)

\bibitem{Kellendonk}
J. Kellendonk.: On the $C^*$-Algebraic approach to Topological Phases for Insulators. Ann. Henri Poincar\'{e}, {\bf 18}(7) 2251--2300 (2017)

\bibitem{Kitaev}
A. Kitaev.: Periodic table for topological insulators and superconductors. AIP Conference Proceedings. {\bf 1134}(1) 22--30 (2009)

\bibitem{Kleinert}
H. Kleinert.: Gauge fields in condensed matter, vol II, World Scientific, 1989.

\bibitem{Kohn}
W. Kohn.: Analytic properties of Bloch waves and Wannier functions. Phys. Rev. {\bf 115} 809--821 (1959)

\bibitem{Kubota}
Y. Kubota.: Controlled Topological Phases and Bulk-edge Correspondence. Commun. Math. Phys. {\bf 349}(2) 493--525 (2017)

\bibitem{Kuchment}
P. Kuchment.: Tight frames of exponentially decaying Wannier functions. J. Phys. A: Math. Theor. {\bf 42} 025203 (2009)

\bibitem{Lance}
E.C. Lance.: Hilbert $C^*$-modules: a Toolkit for Operator Algebraists, London Math. Soc. Lect. Notes 210, Cambridge Univ. Press, 1995.

\bibitem{Ludewig-Thiang}
M. Ludewig, G.C. Thiang.: Cobordism invariance of topological edge-following states. arXiv:2001.08339

\bibitem{MM}
V. Mathai, M. Marcolli.: Twisted index theory on good orbifolds, I: Noncommutative Bloch theory. Commun. Contemp. Math. {\bf 1}(04) 553--587 (1999)

\bibitem{Monaco}
D. Monaco, G. Panati, A. Pisante, S. Teufel.: Optimal Decay of Wannier functions in Chern and Quantum Hall Insulators. Commun. Math. Phys. {\bf 359} 61--100 (2018)

\bibitem{Nenciu}
G. Nenciu.: Dynamics of band electrons in electric and magnetic fields: rigorous justification of the effective Hamiltonians. Rev. Mod. Phys. {\bf 63}(1) 91--127 (1991) 

\bibitem{Panati}
G. Panati.: Triviality of Bloch and Bloch--Dirac Bundles. Ann. Henri Poincar\'{e} {\bf 8} 995--1011 (2007)

\bibitem{PSB}
E. Prodan, H. Schulz-Baldes.: Bulk and Boundary Invariants for Complex Topological Insulators. Math. Phys. Studies, Springer, 2016.

\bibitem{Read}
N. Read.: Compactly supported Wannier functions and algebraic $K$-theory. Phys. Rev. B {\bf 95}(11) 115309 (2017)

\bibitem{Roe}
J. Roe.: Comparing analytic assembly maps. Quart. J. Math. {\bf 53} 241--248 (2002)

\bibitem{Rieffel}
M.A. Rieffel.: Dimension and stable rank in the $K$-theory of $C^*$-algebras. Proc. London Math. Soc. {\bf s3-46}(2) 301--333 (1983)

\bibitem{Sudo}
T. Sudo.: Stable rank of $C^*$-algebras of continuous fields. Tokyo J. Math. {\bf 28}(1) 173--188 (2015)

\bibitem{Taylor1}
M. Taylor.: Partial Differential Equations I: Basic Theory, Springer, Applied Mathematical Sciences, 2nd Edition, 2011.

\bibitem{Thiang}
G.C. Thiang.: On the $K$-theoretic classification of topological phases of matter. Ann. Henri Poincar\'{e} {\bf 17}(4) 757--794 (2016)

\bibitem{Valette}
A. Valette.: Introduction to the Baum--Connes conjecture. Lectures Math. ETH Z\"{u}rich, Birkh\"{a}user, 2002.

\bibitem{WO}
N.E. Wegge-Olsen.: $K$-theory and $C^*$-algebras. Oxford Univ. Press, 1993.

\end{thebibliography}
\end{document}